\definecolor{dullmagenta}{rgb}{0.4,0,0.4}   
\definecolor{darkblue}{rgb}{0,0,0.4}
\numberwithin{equation}{section}
\newtheoremstyle{ttheorem}%
       {1.3ex\@plus1ex}                
       {2.5ex\@plus1ex\@minus.5ex}      
       {\itshape}           
       {0pt}                   
       {\bfseries}          
       {.}                  
       {.5em}               
       {}                
\newtheoremstyle{ddefinition}%
       {1.3ex\@plus1ex}                
       {2.5ex\@plus1ex\@minus.5ex}      
       {}           
       {0pt}                   
       {\bfseries}           
       {.}                  
       {.5em}               
       {}                
\newtheoremstyle{rremark}%
       {1.3ex\@plus1ex}                
       {2.5ex\@plus1ex\@minus.5ex}      
       {\upshape}        
       {0pt}                   
       {\bfseries}           
       {.}                  
       {.5em}               
       {}                   
\theoremstyle{ttheorem}
\newtheorem{thm}{Theorem}[section]
\newtheorem{lem}[thm]{Lemma}
\newtheorem{cor}[thm]{Corollary}
\theoremstyle{ddefinition}
\newtheorem{defn}[thm]{Definition}
\theoremstyle{rremark}
\newtheorem{rem}[thm]{Remark}
\newtheorem{myremarks}[thm]{Remarks}
\newtheorem*{desMSA}{Description of the MSA}
\newenvironment{remarks}{\begin{myremarks}\begin{nummer}}%
    {\end{nummer}\end{myremarks}}
\newcounter{numcount}
\newcommand{\labelnummer}{\mbox{\normalfont (\roman{numcount})}}%
\newenvironment{nummer}%
  {\let\curlabelspeicher\@currentlabel%
    \begin{list}{\labelnummer}%
      {\usecounter{numcount}\leftmargin0pt%
        \topsep0.5ex\partopsep2ex\parsep0pt\itemsep0ex\@plus1\p@%
        \labelwidth2.5em\itemindent3.5em\labelsep1em%
      }%
    \let\saveitem\item%
    \def\item{\saveitem%
      \def\@currentlabel{\curlabelspeicher$\,$\labelnummer}}%
    \let\savelabel\label%
    \def\label##1{\savelabel{##1}%
      \@bsphack%
        \ifmmode\else%
          \protected@write\@auxout{}%
          {\string\newlabel{##1item}{{\labelnummer}{\thepage}}}%
        \fi%
      \@esphack%
    }%
  }{\end{list}}%
\def\itemref#1{\expandafter\@setref\csname r@#1item\endcsname%
  \@firstoftwo{#1}}%
\def\section{\@startsection{section}{1}%
  \z@{1.3\linespacing\@plus\linespacing}{.5\linespacing}%
  {\normalfont\scshape\centering}}
\newcommand\cL{\mathcal{L}}
\newcommand\cU{\mathcal{U}}
\newcommand\cI{\mathcal{I}}
\newcommand\cJ{\mathcal{J}}
\newcommand\cT{\mathcal{T}}
\newcommand\cM{\mathcal{M}}
\newcommand\cF{\mathcal{F}}
\newcommand\cH{\mathcal{H}}
\newcommand\cS{\mathcal{S}}
\newcommand\cC{\mathcal{C}}
\newcommand\RR{\mathbb{R}}
\newcommand\NN{\mathbb{N}}
\newcommand\ZZ{\mathbb{Z}}
\newcommand\PP{\mathbb{P}}
\newcommand\DD{\mathbb{D}}
\newcommand\CC{\mathbb{C}}
\newcommand\eps{\varepsilon}
\renewcommand\P{\mathbb P}
\newcommand\E{\mathbb E}
\def\Chi{\raisebox{0ex}{$\chi$}}
\DeclareMathOperator{\supp}{supp}
\DeclareMathOperator{\dist}{dist}
\def\le{\leqslant}
\def\ge{\geqslant}
\let\leq\le
\let\geq\ge
\let\emptyset\varnothing
\newcommand{\be}{\begin{equation}}
\newcommand{\ee}{\end{equation}}
\newcommand{\benon}{\begin{equation*}}
\newcommand{\eenon}{\end{equation*}}
\newcommand{\ba}{\begin{array}}
\newcommand{\ea}{\end{array}}
\newcommand{\bal}{\begin{align}}
\newcommand{\eal}{\end{align}}
\newcommand{\bea}{\begin{eqnarray}}
\newcommand{\eea}{\end{eqnarray}}
\newcommand{\bee}{\begin{eqnarray*}}
\newcommand{\eee}{\end{eqnarray*}}
\newcommand{\Rd}{\mathbb R^d}
\newcommand{\norm}[1]{\Vert #1 \Vert}
\newcommand{\abs}[1]{\left| #1 \right|}
\newcommand{\Lp}[1]{\mathrm{L}^2(#1)}
\newcommand{\angles}[1]{\langle #1 \rangle}
\newcommand{\om}{\omega}
\newcommand{\pa}[1]{\left( {#1} \right)}
\renewcommand{\L}{\Lambda}
\begin{document}

\title[Localisation for Delone operators]{Localisation for Delone operators via Bernoulli randomisation}

\author[P.\ M\"uller]{Peter M\"uller}
\address[P.\ M\"uller]{Mathematisches Institut, Ludwig-Maximilians-Universit\"at,
	The\-re\-sien\-str.\ 39, 80333 M\"unchen, Germany}
\email{mueller@lmu.de}

\author[C. Rojas-Molina]{Constanza Rojas-Molina}
\address[C.\ Rojas-Molina]{Laboratoire AGM, Dpt. math\'ematiques, CY Cergy Paris Universit\'e,
2 av. Adolphe Chauvin, 95302 Cergy-Pontoise, France}
\email{crojasmo@u-cergy.fr}

\thanks{Version to appear in \emph{J.\ Anal.\ Math.}}

\begin{abstract}
	Delone operators are Schr\"odinger operators in multi-dimen\-sional Euclidean space with a potential given by
	the sum of all translates of a given ``single-site potential'' centred at the points of a Delone set.
	In this paper, we use randomisation to study dynamical localisation for families of Delone operators.
	We do this by suitably adding more points to a Delone set and by introducing i.i.d.\ Bernoulli random
	variables as coupling constants at the additional points. The resulting non-ergodic continuum Anderson model
	with Bernoulli disorder is accessible to the latest version of the multiscale analysis.
	The novel ingredient here is the initial length-scale estimate whose proof is hampered due to
	the non-periodic background potential. It is obtained by the use of a quantitative unique continuation principle.
	As applications we obtain both probabilistic and topological statements about dynamical localisation.
	Among others, we show that Delone sets for which the associated Delone operators exhibit dynamical
	localisation at the bottom of the spectrum are dense in the space of Delone sets.
\end{abstract}

\maketitle
\thispagestyle{empty}

%
\section{Introduction}
%

Since the discovery of quasicrystals by Schechtman \cite{Sch84}, aperiodic media has considerably attracted
the interest of both experimental and theoretical research. Aperiodic media do not exhibit translation invariance,
as crystals, but may possess structural long-range order, e.g.\ in the form of rotational symmetries.
They can be seen as intermediate structures between completely ordered ones (crystals) and disordered ones
(amorphous materials). They are, for this reason, expected to exhibit a wide range of electronic transport and
spectral properties. So far, the study of electronic properties of aperiodic media has been largely concentrated
on anomalous transport and singular continuous spectrum of the associated Hamiltonians for one-dimensional models
\cite{Bel03,BIST89,FuKr88,KoSu86,Su89}. In the present article, we aim to complement this description by studying
the absence of transport and the existence of bound states for Schr\"odinger operators associated to
aperiodic media in any dimension.

In order to describe the model in detail we introduce some notation. Let $d\geq1$ be the space dimension and
$\Lambda_{L}(x) := \raisebox{-.3ex}{\LARGE$\times$}_{j=1}^{d} (x_{j} - L/2, x_{j}+ L/2)$ be the open cube in
$\Rd$ with edges of length $L>0$ centred at $x=(x_{1},\ldots,x_{d})\in\RR^{d}$.
If the cube is centred about the origin, we simply write $\Lambda_{L} := \Lambda_{L}(0)$.

\begin{defn}
	\label{defdelone}
	A subset $D$ of $\Rd$ is called an $(r,R)$-\emph{Delone set}
	if ~(i)~~ it is \emph{uniformly discrete}, i.e.\ there exist a real $r>0$ such that
	$\big|D\cap\L_r(x)\big|\leq 1$  for every $x\in\Rd$, and ~(ii)~~ it is \emph{relatively dense},
	i.e.\ there exists a real $R\geq r$ such that $\big|D\cap\L_R(x)\big|\geq 1$ for every $x\in\Rd$.
	Here, $|\cdot|$ stands for the cardinality of a set.
\end{defn}

Clearly, the minimal distance between any two points in an $(r,R)$-Delone set is $r$.
Also, given any point in an $(r,R)$-Delone set, one can find another point that is no further than
$\sqrt{d}R$ apart. Particular examples of Delone sets are the hypercubic lattice $\mathbb{Z}^{d}$,
the vertices of a Penrose tiling or the random point set obtained from removing every other point of
$\ZZ$ by a Bernoulli percolation process.

In order to study electronic transport in a general aperiodic background of atoms,
we consider a one-body Schr\"odinger operator with
an atomic potential whose centres are given by the points of a Delone set.

\begin{defn}\label{defdeloneop}
	Given an $(r,R)$-Delone set $D$, we define the \emph{Delone operator} $H_D$ acting in $\Lp{\RR^d}$ by
	\be
		\label{delop}
		H_D :=-\Delta+ V_{D},
	\ee
	where $-\Delta$ is the $d$-dimensional Laplacian,
	\begin{equation}
		\label{delop-pot}
 		V_{D} := \sum_{\gamma\in D}u(\cdot-\gamma)
	\end{equation}
	and $u \in \mathrm{L}^{\infty}_{c}(\RR^{d})$ is a single-site potential -- a real-valued, bounded,
	measurable function with compact support.
	As the potential $V_{D}$ is bounded, $H_{D}$ is a self-adjoint operator on the domain of
	self-adjointness of $-\Delta$.
\end{defn}

Next we introduce a suitable metrisable topology $\tau$ on the space $\DD$ of all Delone sets.
There are several possible constructions for $\tau$. We recall a characterization given in
\cite[Lemma 3.1]{LS06}, see also \cite[Lemma~2.8]{MR}, that fits our purposes.
To this end, we also recall the Hausdorff distance $d_{H}(X,Y) := \max\big\{\sup_{x\in X}\inf_{y\in Y}|x-y|,
\sup_{y\in Y}\inf_{x\in X}|x-y|\big\}$ of two non-empty subsets $X,Y\subseteq \RR^{d}$, where
$|\boldsymbol\cdot|$ denotes the Euclidean norm on $\RR^{d}$.
\begin{defn}
 	\label{d:conv}
 	A sequence $(D_n)_{n\in\NN} \subset \DD$ of Delone sets converges to $D\in\DD$ in the topology $\tau$  if and only if, given any $l>0$, there exists $L>l$ such that the discrete sets $D_n\cap \L_L(0)$ converge to $D\cap \L_L(0)$ with respect to the Hausdorff distance
	as $n$ tends to infinity.
 \end{defn}
The map $\DD \ni D \mapsto H_D$, or restrictions of it, have been studied in the literature,
mostly from the point of view of ergodic dynamical systems.
Here, ergodicity is to be understood with respect to the translation group $\RR^{d}$ which acts on the hull or Delone dynamical system generated by $D$. It is defined as the closure with respect to $\tau$ of the set of all translations of $D$.
 For more details in the present context of Delone operators, we refer to, e.g.,  \cite{KLS03a, KLS03b,KLS11,LS02,LS03,LS05, LV09}. These works study
ergodic properties, like existence of a
deterministic spectrum and of the integrated density of states, both in our continuum
setting and in related discrete models.

Investigations on the spectral type or on dynamical properties of $H_{D}$ are much more subtle for
multi-dimensional systems. We only know two results of this kind (besides the periodic case):

\begin{nummer}
\item
  The first one is the ``soft'' approach by Lenz and Stollmann \cite{LS06} who elaborate on
  wonderland-type arguments based on Simon's pioneering work on singular continuous spectrum for
  Schr\"odinger operators \cite{Si}. They consider the subspace $\DD_{r,R}$ of all Delone sets with fixed
  parameters $r,R$ and show \cite[Thm.\ 3.2]{LS06} that under certain conditions on $r,R$ and $u$ there exists
  a dense $G_\delta$-set $\Omega_{sc} \subset \DD_{r,R}$ and a spectral subset $U\subset \RR$ such that for
  every $D\in\Omega_{sc}$, the spectrum of $H_D$ contains $U$ and is purely singular continuous there.
  As a consequence of this truly remarkable result, the existence of absolutely continuous spectrum or of
  pure point spectrum in $U$ is non-generic, that is, occurs at most for $D$ in a meagre subset of $\DD_{r,R}$.
\item
	The second result originates from the theory of random Schr\"odinger operators and was proven by Klopp,
	Loss, Nakamura and Stolz \cite{KLNS10}: given $\rho \in (0,1/2)$, they consider the subspace
	$\DD_{\rho}:= \{ \gamma + \omega_{\gamma}: \omega_{\gamma} \in \Lambda_{1-\rho}(0), \gamma\in \ZZ^{d}\}$
	of Delone sets which are perturbations of the hypercubic lattice. Viewing $\omega_{\gamma}$,
	$\gamma\in\ZZ^{d}$, as independent identically distributed (i.i.d.) random variables, each of which is, e.g.,
	uniformly distributed in  $\Lambda_{1-\rho}(0)$, the subspace $\DD_{\rho}$ becomes a probability space
	and gives rise to the name random displacement model, see also Remark~\ref{random-displ-m}.
	Apart from further regularity conditions, the single-site potential $u\ge 0$ is assumed to be supported in
	$\L_{\rho}(0)$. Then, \cite{KLNS10} show dynamical localisation of $H_{D}$ near the infimum of its spectrum
	with probability one for $D\in\DD_{\rho}$. In particular, this implies absence of propagation, as well as pure
	point spectrum with exponentially decaying eigenfunctions.
\end{nummer}

The results (i) and (ii) are of a very different nature -- in many ways. Whereas (i) states topological genericity
in $\DD_{r,R}$ of singular continuous spectrum, and thus topological non-genericity of localisation in a certain spectral interval, (ii) establishes probabilistic typicality of localisation at the bottom of the spectrum for lattice distortions. That these two seemingly conflicting properties can be two sides of the same coin is well known for the case of the discrete Anderson model  \cite{Si}.

In this paper, we investigate the occurrence of localisation for Delone operators $H_{D}$ both in a probabilistic sense and in a topological sense, and
without being restricted to lattice distortions for $D$. Our approach consists of a Bernoulli randomisation,
that is, we add new points to $D$ and introduce i.i.d.\ $0$-$1$-Bernoulli random variables $\omega$
as coupling constants at all new sites. This gives rise to a Delone operator denoted by $H_{D''^\om}$ below.
Here, the Delone set $D''^{\omega}$ consists of all points of $D$ and of all added points where the
Bernoulli coupling constants take the value one.
The operator $H_{D''^\om}$ is a random Schr\"odinger operator, more precisely, a continuum Anderson model with Bernoulli random variables in a non-ergodic setting. Random Schr\"odinger operators have been exhaustively studied in the literature \cite{CaLa90, PF92, S, K, AW}.
In the last decades, this field has seen many advances and the development of several methods to rigorously prove dynamical localisation. We require the most powerful and refined methods in our study of $H_{D''^\om}$ because of two difficulties: first, the non-ergodic setting due to the general nature (i.e.\ non-periodicity) of
both $D$ and of the set of added points; and, second, the singular (i.e.\ Bernoulli) nature of the random variables.

Random Schr\"odinger operators in a non-ergodic setting have been studied in the literature before in \cite{BdMNSS06,G,BdMLS11,RM12, RM13,GMRM, EK, ES, KK}. They are known to exhibit localisation at the bottom of the spectrum. However, the proofs rely heavily on the regularity of the random variables, and therefore do not apply to $H_{D''^\om}$. On the other hand,
Bernoulli random variables have been particularly difficult to treat with even within the most simple periodic setting. The first results on localisation in this case were restricted to one dimension, see
\cite{CKM,KLS90} for the lattice case and \cite{DSS02} for the continuum. In both cases, the proofs rely on the periodicity of the underlying structure of the random potential. It was only in 2005 when the work of Bourgain and Kenig \cite{BK} achieved a breakthrough and established localisation for a multi-dimensional continuum Anderson model with Bernoulli random variables. Their multiscale analysis was perfected in \cite{GKuniv} to yield the full strength of dynamical localisation and other related properties. We emphasise that the main step
in \cite{GKuniv} to start this multiscale analysis in the Bernoulli setting still relied on the periodicity of the model: Prop.\ 4.3 in \cite{GKuniv} requires Lifshitz tails, and Remark 4.4 in \cite{GKuniv} does not allow for a background potential. 

In Section~\ref{s:dl} we show how to perform the multiscale analysis from \cite{GKuniv} adapted to our
non-ergodic setting. This leads to Theorem~\ref{t:dynloc} on almost-sure dynamical
localisation for the non-ergodic continuum Anderson model with Bernoulli random variables  $H_{D''^\om}$
near the bottom of the spectrum.
Lemma~\ref{t:ilse} contains the novel idea to start this multiscale analysis by using a quantitative
unique continuation principle obtained in \cite{RMV12} and \cite{KV02}.
It does not rely on periodicity and works for Delone operators as considered here.

In Section~\ref{s:delops} we apply Theorem~\ref{t:dynloc} to obtain further information on the map
$D \mapsto H_{D}$. We show in Theorem~\ref{p:conv} that the set of Delone sets $D$ for which $H_{D}$ exhibits
dynamical localisation at the bottom of the spectrum is dense in $\DD$. In the remaining part of
Section~\ref{s:delops} we are concerned with the topological non-genericity of dynamical localisation of
$H_{D''^\om}$ in the
space of random coupling constants $\omega$ at the added sites. We present a ``soft'' and a ``hard'' approach
to this question, both relying on Theorem~\ref{t:dynloc}.
The ``soft'' approach in  Theorem~\ref{t:meagre} applies Simon's wonderland theorem \cite{Si} in the version of
\cite{LS06}. The ``hard'' approach in Theorem~\ref{t:md} performs an explicit analysis of the event of localised
configurations in the multiscale analysis of \cite{GKuniv}. While the former approach gives a stronger
statement, the latter is valid under more general assumptions. We point out that, in contrast to
\cite[Thm.\ 3.2]{LS06} mentioned in (i) above, our Theorems~\ref{t:meagre} and~\ref{t:md}
refer to the topology in the space of random coupling constants and allow for a specification of the relevant
spectral interval, denoted by $U$ in (i), to be located above the bottom of the spectrum.

In Appendix~\ref{specinf} we ensure that we work in the almost-sure spectrum of the random operator
$\omega \mapsto H_{D''^\om}$ despite the non-ergodic setting of the model.
In Appendix~\ref{A:QUCP} we track the constant in the unique continuation principle for one spatial dimension as given
in \cite{KV02}.
Finally, in Appendix~\ref{A:meagMSA} we gather auxiliary results on the multiscale analysis
that are needed for Section~\ref{s:delops}.

Parts of this work have been announced in \cite{RM20}. While finishing this paper, we learnt about the
preprint \cite{SeelmanTaufer19}, where the authors also use a quantitative unique continuation principle to
obtain an initial length-scale estimate. Their version even works near internal band edges,
see also Remark~\ref{rem:SeelT}.

%
\section{Bernoulli randomisation of Delone operators}\label{s:rso}
%

We will denote by $\Chi_\L$ the characteristic function of the set $\L$.

\begin{defn}\label{def:part}Given two Delone sets $D$ and $D'$, we say that $(D,D')$ is a \emph{Delone pair}, if $D'':=D\cup D'$ is again a Delone set.
\end{defn}

Given a Delone set $D$ with parameters $(r,R)$, there are many possibilities for a Delone set $D'$ with parameters $(r',R')$ such that
$D'':=D\cup D'$ is also a Delone set.
Here is an example: for every $\gamma \in D$ choose a point $\gamma'\in\Lambda_{\frac{r}{2}}(\gamma)\setminus\L_{\frac{r}{4}}(\gamma)$
and define $D':=\cup_{\gamma \in D}\{\gamma'\}$.
We obtain that $D''$ is a Delone set with parameters $(r/4,R)$.

\begin{defn}
	\label{def:delberop}
	A \emph{Delone--Bernoulli operator} associated to the Delone pair $(D,D')$ is the random operator
	\be
		\label{eq:defDelBern}
		\Omega_{D'} \ni \omega \mapsto H_{\omega} := H_{D''^\om}:=-\Delta+V_{D}+ V_{D'^\om},
	\ee
	acting in $\mathrm{L}^{2}(\RR^{d})$, where $\Omega_{D'}:=\{0,1\}^{D'}$, $-\Delta+V_D$ is the Delone
	operator associated to $D$ defined in \eqref{delop} and the random potential is given by
	\be
		\label{random-pot}
		V_{D'^\om}:= \sum_{\gamma\in D'}\omega_{\gamma} u(\cdot-\gamma)
	\ee
	with $\omega:=(\omega_{\gamma})_{\gamma\in D'}$.
	We assume that
	\begin{nummer}
	\item
		the single-site potential $u$ is the same in $V_{D}$ and $V_{D'^{\omega}}$ and satisfies
		\be
			\label{cond:u} u_-\Chi_{\L_{\delta_-}}\leq u \leq \Chi_{\L_{\delta_+}}
 		\ee
 		for some positive constants $\delta_\pm,u_- >0$ with $\delta_{-} <\delta_+<r'/4$.
	\item
		the random coupling constants  $\om_{\gamma}, \gamma\in D'$, are i.i.d.\ Bernoulli distributed,
		i.e., for some $\beta\in (0,1)$,
 		\be
			\mu(\om_{\gamma}=0)=\beta,\quad \mu(\om_{\gamma}=1)=1-\beta
		\ee
 		for every $\gamma \in D'$. The probability measure on $\Omega_{D'}$ is the product measure
		$\P_{D'} := \otimes_{\gamma\in D'}\mu$, and we denote the corresponding expectation by $\E_{D'}$.
	\end{nummer}
\end{defn}
With these assumptions, the map $\Omega_{D'}\ni\om\mapsto H_\omega$ is measurable and defines a random Schr\"odinger operator in the usual sense \cite{CaLa90,PF92}. Moreover, Assumption (i) implies that $\norm{V_D}_\infty,\sup_\omega\norm{V_{D'^\om}}_\infty\leq 1$, because the single-site potentials satisfy $\norm{u}_\infty<1$ and do not overlap.

\begin{rem}
The single-site potentials appearing in the definition of the potentials $V_D$ and $V_{D'^\om}$ can, in principle, be different. We model both potentials with the same single-site potential $u$ for simplicity only.
\end{rem}
Given $\omega\in \Omega_{D'}$, we define the point sets $D'^\omega :=\{ \gamma'\in D':\,\omega_{\gamma'}=1\}$,
which is not a Delone set $\P_{D'}$-a.s., and
\be D''^\om:=D\cup D'^\om \subset D''.\ee
The point set $D''^\om$ is a Delone set beause it contains the relatively dense set $D$ and is contained in the uniformly discrete set $D''$.
With this convention, we have $H_{\omega} = H_{D''^{\omega}} = -\Delta + V_{D''^{\omega}}$, and therefore the notations
\eqref{delop-pot} and \eqref{random-pot} are consistent.

We denote the spectrum of an operator $A$ by $\sigma(A)$. Let
\begin{equation}
	\label{infspec-H-D}
 	E_0:=\inf \sigma(H_D)
\end{equation}
be the bottom of the spectrum of the Delone operator $H_D := -\Delta+V_D$. If the set $D$ is infinitely pattern repeating  (see Definition~\ref{d:supf}) and if there exists $\epsilon >0$ such that $[E_{0}, E_{0} +\epsilon]\subset \sigma(H_{D})$, then $[E_{0}, E_{0} +\epsilon]\subset \sigma(H_\omega)$ for $\P_{D'}$-a.e. $\omega \in \Omega_{D'}$, see Theorem \ref{t:asspec}.

Given $\omega\in\Omega_{D'}$, $x\in\RR^{d}$ and $L>0$,
we denote by $H_{\omega,x,L}$ the restriction of $H_{\om}$ to the box $\L_L(x)$ with Dirichlet boundary conditions in the sense
\begin{equation}
 	H_{\omega,x,L} := H_{D,x,L} + V_{D'^\om \cap \Lambda_{L}(x)}\big|_{\Lambda_{L}(x)},
\end{equation}
where $H_{D,x,L}$ is the standard Dirichlet restriction of the background operator $H_{D}$ to $\Lambda_{L}(x)$.
For $z\in\CC\setminus \sigma(H_{\om})$, we denote the resolvent of $H_{\omega}$ by $R_{\om}(z):=(H_{\om}-z)^{-1}$ and, correspondingly, by $R_{\om,x,L}(z)$ that of $H_{\omega,x,L}$ for $z\in\CC\setminus \sigma(H_{\om,x,L})$. We denote by $\norm{\cdot}$ the $\rm L^2$-norm on either $\mathrm{L}^{2}(\RR^d)$ or
on $\mathrm{L}^{2}(\L_L(x))$, according to the context.

In the sequel, we drop the subscript $D'$ from $\PP_{D'}$ whenever the context is clear.
We denote by $C_{a,b,c,...}$ a constant depending on the parameters $a,b,c,...$

%
%
\section{Dynamical localisation at the bottom of the spectrum}\label{s:dl}

The most powerful multiscale analysis (MSA) available is the one of \cite{GKuniv}. It yields
dynamical localisation and all its consequences for continuum Anderson Hamiltonians with Bernoulli random variables.
In this section, we argue that it requires only minor modifications to make this MSA work for the more general
Delone--Bernoulli operators. The most involved step will be the verification of the initial length-scale estimate,
see Lemma~\ref{t:ilse} below, for which the method of \cite{GKuniv} does not work because it
relies crucially on periodicity. The main result of this section is summarised in Theorem~\ref{t:dynloc}.
We start by introducing the relevant notions.

\begin{defn}
 	We say that a random operator with self-adjoint realisations $H_{\omega}$ acting in $\mathrm{L}^{2}(\RR^{d})$ exhibits
	\emph{dynamical localisation} in an energy interval $I \subseteq \RR$, if there exist constants $M,a>0$ and
	$\vartheta \in (0,1)$ such that for a.e.\  $\omega$ there is a constant $C_{\omega}>0$ such that
	\begin{equation}
		\label{eq:loc-def}
 		\sup_{|f|\le 1} \norm{\Chi_{x}f(H_{\omega})\Chi_{I}(H_{\omega})\Chi_{y}}_{1} \le C_{\omega} \, e^{|x|^{a}} e^{-M |x-y|^{\vartheta}}
	\end{equation}
	for all $x,y\in \RR^{d}$.
	Here, $\norm{\cdot}_{1}$ denotes the trace norm, and the supremum is over all Borel measurable functions $f: \RR \rightarrow \CC$ which are pointwise bounded by one. The left-hand side of \eqref{eq:loc-def} is symmetric upon exchanging $x$ and $y$. Thus, one may replace
$|x|^{a}$ on its right-hand side by $\min\{|x|^{a},|y|^{a}\}$.

\end{defn}

Dynamical localisation in an interval $I$ is a strong version of Anderson localisation and implies the latter.
We use the notion Anderson localisation in $I$ for having dense pure point spectrum in $I$ and that
any eigenfunction corresponding to an eigenvalue in $I$ decays exponentially.

\begin{thm}\label{t:dynloc}
Let $\Omega_{D'} \ni \omega \mapsto H_{\omega}$ be the Delone--Bernoulli operator from Definition \ref{def:delberop}. Then,
\begin{nummer}
\item if $d\geq2$, for any $\beta\in(0,1)$, there exists an energy $E_*>E_0$ and an event $\Omega_{MSA}$ with $\PP(\Omega_{MSA})=1$ such that $H_{\om}$ exhibits dynamical localisation in $[E_0,E_*]$ for every $\omega\in\Omega_{MSA}$.
    \item if $d=1$, there exists a positive constant $C\leq 1$ such that if $\beta\in(0,C)$, the same result as in \textup{(i)}\ holds.
\end{nummer}
\end{thm}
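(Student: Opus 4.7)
The plan is to verify, in the non-ergodic Delone--Bernoulli setting, the hypotheses of the bootstrap multiscale analysis (MSA) of \cite{GKuniv} for continuum Anderson Hamiltonians with Bernoulli disorder. The deterministic pillars of that MSA --- the generalised eigenfunction expansion, the geometric resolvent inequality and the eigenfunction decay inequality --- depend only on the uniform boundedness of $V_{D}+V_{D'^\om}$ and on the bounds \eqref{cond:u} on the single-site profile. They are insensitive to the arithmetic structure of the set of centres, so they carry over verbatim to $H_\om$. What must be supplied are two probabilistic inputs: an initial length-scale estimate (ILSE) at some scale $L_0$, and the averaging mechanism over the Bernoulli variables that replaces the classical Wegner estimate for singular disorder.

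The Wegner-type step goes through essentially without changes from \cite{BK,GKuniv}. At each scale $L$ one fixes all but a small set of ``free'' Bernoulli variables $\om_\gamma$, $\gamma\in D'\cap\L_L(x)$, and quantifies how much an eigenvalue of $H_{\om,x,L}$ in $[E_0,E_*]$ shifts when a free $\om_\gamma$ is switched from $0$ to $1$. By first-order perturbation theory together with \eqref{cond:u}, this shift is bounded below by $u_-\norm{\Chi_{\L_{\delta_-}(\gamma)}\phi}^{2}$ for the normalised eigenfunction $\phi$. The quantitative unique continuation principle (QUCP) of \cite{RMV12} (and of \cite{KV02} for $d=1$) then provides polynomial-in-$L$ lower bounds on this quantity, uniform over all $\gamma$ in a relatively dense subgrid of $\L_L(x)$ and depending only on $\norm{V}_\infty$, $u_-$, $\delta_-$ and on the Delone parameters. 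Because the QUCP is a purely deterministic and local statement, the non-periodicity of $V_D$ is immaterial, and the Bourgain--Kenig averaging argument applies unchanged.

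The main novelty, and the only place where the non-ergodicity of the model really forces something new, is the ILSE of Lemma~\ref{t:ilse}. The standard Lifshitz-tails route is unavailable because $H_D$ is not generated by an ergodic dynamical system. The plan is to use the QUCP directly. At scale $L_0$ consider the event $\mathcal{B}_x$ that at least $\kappa\,|D'\cap\L_{L_0}(x)|$ of the variables $\om_\gamma$ take the value one; since $|D'\cap\L_{L_0}(x)|=\Theta(L_0^d)$ by the Delone property, a Chernoff bound yields $\P(\mathcal{B}_x)\geq 1-e^{-cL_0^d}$ as soon as $\kappa$ is chosen below $1-\beta$. On $\mathcal{B}_x$, iterating the QUCP-based flip argument sketched above starting from the configuration $\om\equiv 0$ on the box raises the ground state of $H_{\om,x,L_0}$ by a quantified amount $\eta(L_0)\ge c\,L_0^{d-N}$ (with $N=N(\norm{V}_\infty)$ coming from the QUCP), and a Combes--Thomas bound then gives exponential decay of $R_{\om,x,L_0}(E)$ for every $E\in[E_0,E_0+\eta(L_0)/2]$. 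Choosing $L_0$ large enough and setting $E_*:=E_0+\eta(L_0)/2$ yields precisely the initial input required by \cite[Sect.~3]{GKuniv}. For $d\geq 2$ the QUCP constants of \cite{RMV12} allow this balance between the induced gap and the probabilistic cost to close for every $\beta\in(0,1)$, whereas in dimension one the constants of \cite{KV02}, tracked in Appendix~\ref{A:QUCP}, only permit it when $1-\beta$ is sufficiently large, i.e.\ $\beta<C$ for some explicit $C\leq 1$; this is the source of the restriction in (ii).

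With ILSE and the Wegner replacement in hand, the bootstrap induction of \cite{GKuniv} runs unchanged and produces a full-measure event $\Omega_{MSA}$ on which the finite-volume resolvents of $H_\om$ at energies in $[E_0,E_*]$ decay sub-exponentially at every scale; the final deterministic step of \cite{GKuniv} upgrades this to the Hilbert--Schmidt dynamical estimate \eqref{eq:loc-def}. The main obstacle to watch out for is the uniformity in the box centre $x\in\RR^d$ of all covering and chain arguments used to pass from scale $L_k$ to scale $L_{k+1}$: this uniformity is preserved because the Delone property of $D''$ provides uniformly controlled local point counts and bounded geometric complexity, so all constants appearing in the MSA can be chosen independently of $x$ and of the realisation $\om$.
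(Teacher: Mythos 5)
Your overall route matches the paper's strategy: delegate the multiscale induction and the Wegner substitute to \cite{GKuniv}, observe that the scale-free quantitative unique continuation principle (QUCP) is insensitive to the non-periodicity of $V_D$, and supply a new initial length-scale estimate (the content of Lemma~\ref{t:ilse}). Your second paragraph on the Wegner replacement and free sites captures the right picture. However, your ILSE argument has a genuine gap in two places.

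First, the event $\mathcal{B}_x$ --- that a fraction $\kappa$ of the $\omega_\gamma$ in $\Lambda_{L_0}(x)$ equal one --- does not control the \emph{geometry} of the occupied sites: they could all sit in one half of the box, leaving the other half potential-free. The QUCP of \cite{RMV12}, and Theorem~\ref{t:lift} for $d=1$, requires the regions where the perturbing potential is active to form a relatively dense, scale-free equidistributed collection of small balls, essentially one per unit cell after rescaling; absent that spread it gives no lower bound on $\sum_{\gamma}\norm{\Chi_{\Lambda_{\delta_-}(\gamma)}\phi}^2$, and the ``iterated flip'' you sketch (which the paper does not use at all: it applies the QUCP once, to the lower-bounding operator $H_{D}+W_\omega$) cannot proceed. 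The correct event is $\mathcal{A}$: tile $\Lambda_L(x)$ by $\ell^d$ cubes of side $M$ and require each cube to contain at least one occupied point of the Delone subset $D_0\subset D'$ (the complement of $D_0$ in $D'$ serving as free sites). On $\mathcal{A}$ the set $D_0^{\omega}\cap\Lambda_L(x)$ is relatively dense with spacing $\lesssim M$, and a single application of the QUCP gives the lift $u_-C_{UC}(M)$. Second, $M$ cannot be fixed: $\P(\mathcal{A}^c)\leq(L/M)^d\beta^{\lfloor(M-2\delta_+)/R_0\rfloor^d}$ decays like $L^{-pd}$ only if $M=M_L$ grows with $L$, and the paper takes $M_L\sim(\log L)^{\varepsilon+1/d}$; this makes $\mathcal{E}_L=u_-C_{UC}(M_L)$ decay as $L^{-o(1)}$, not polynomially as in your proposed $c\,L_0^{d-N}$. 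The $d\geq2$ vs.\ $d=1$ dichotomy then comes entirely from the Combes--Thomas constraint $\mathcal{E}_L L^2\to\infty$: since $C_{UC}(M)\sim(\delta_-/M)^{C_d M^{4/3}}$ from \cite{RMV12}, this forces $\tfrac43(\varepsilon+\tfrac1d)<1$, achievable only for $d\geq2$, whereas in $d=1$ the sharper constant $\widetilde C_{UC}(M)\sim e^{-\tilde CM}/M$ of \cite{KV02} makes the same constraint close precisely when $\beta<e^{-\tilde C}$, which is the threshold $C$ of part~(ii). Both the spatial-equidistribution event and the $L$-dependent choice of $M$ are missing from your proposal.
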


\begin{rem} The restriction on the disorder parameter $\beta$ in $d=1$ is due to the use of quantitative unique continuation principles in the proof. While this restriction is not present in previous works on the Bernoulli--Anderson model, our method of proof allows to treat the case of a non-zero background Delone potential $V_D$, which is not possible with other existing methods, including the specific one-dimensional analysis in \cite{DSS02}.
\end{rem}

\begin{desMSA}
The usual proof of localisation via a multiscale analysis (see e.g. \cite{FS,CH1,vDK89,GK1}) is an induction procedure on the scale $L$ of the finite-volume operator $H_{\om,x,L}$. Generally, the goal is to show that the resolvent of the operator on finite volumes decays fast enough along a sequence of scales $L_k$ that grows to infinity. There are two main ingredients to achieve this:
\begin{nummer}
    \item To show that there is an initial scale $L_0$ for which the finite-volume resolvent decays at a good rate between distant points in space, with good probability.
    \item To show that, if the finite-volume resolvent decays at a good rate at a certain scale $L_k$ with good probability, then it also decays at a good rate in the next scale $L_{k+1}$ with good probability.
\end{nummer}
Step (i) is known as the \emph{initial length-scale estimate}, while (ii) is the multiscale procedure itself. What a \emph{good rate with good probability} means, is to be defined within the method and can vary. In our case, good rate of decay with good probability means that the finite-volume resolvent is exponentially decaying, with a probability algebraically close to one. This brings us to the definition of a good box (\cite[Sect.~3]{GKuniv}).
\end{desMSA}

\begin{defn}\label{def:gbox}
Let $\omega\in\Omega_{D'}$, $E\in\CC$ a (complex) energy, $m>0$ a rate of decay and $\zeta\in(0,1)$. A box $\L_L(x)$ is $(\om,E,m,\zeta)$-good if 
\be\label{gbox1} \norm{R_{\om,x,L}(E)}\leq e^{L^{1-\zeta}},
\ee
and
\be\label{gbox2} \norm{\Chi_y R_{\om,x,L}(E)\Chi_z}\leq e^{-m\norm{y-z}}
\ee
for all $y,z \in\L_L(x)$ with $\norm{y-z}\geq {L}/{100}$.
\end{defn}

\begin{defn}\label{def:gscale}
Let $E\in\CC$ be  a (complex) energy, $m>0$ a rate of decay, $\zeta\in(0,1)$ and $p>0$.
A scale  $L>0$ is $(E,m,\zeta,p)$-good if, uniformly with respect to $x\in\RR^d$, we have
\be\label{gscale} \P\Big( \L_L(x) \mbox{ is }\,(\om,E,m,\zeta)\mbox{-good} \Big)\geq 1-L^{-pd}.
\ee
\end{defn}
Then, in the terminology of \cite{GKuniv}, the statement that the finite-volume resolvent $R_{\om,x,L}$ decays at a good rate with a good probability, for any box of side length $L$, is equivalent to the statement that the scale $L$ is good for a certain choice of parameters in the sense of Definition \ref{def:gscale}.

We obtain \cite[Thm.\ 4.1]{GKuniv} in our setting:

\begin{thm}\label{t:msa}
Let $\Omega_{D'} \ni \omega \mapsto H_{\omega}$ be the Delone--Bernoulli operator from Definition \ref{def:delberop}. Fix $p\in]\frac{1}{3},\frac{3}{8}[$ and $\zeta\in(0,1)$. Assume either $d\geq 2$ and $\beta \in (0,1)$, or $d=1$ and $\beta \in (0,C)$ for a certain constant $C\in(0,1]$. Then there exists an energy $E_*>0$, a rate of decay $m$ and a scale $L_0$, all depending on the parameters of the model $D,D', u,\beta,p,\zeta$, such that all scales $L\geq L_0$ are $(E,m,\zeta,p)$-good for all energies $E\in[E_0,E_*]$.
\end{thm}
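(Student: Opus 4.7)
The plan is to run the bootstrap multiscale analysis of \cite{GKuniv} essentially verbatim, with the single substantive modification being the proof of the initial length-scale estimate (ILSE). First I would establish the ILSE, producing a scale $L_{0}$ and an energy window $[E_{0},E_{*}]$ on which $L_{0}$ is $(E,m_{0},\zeta,p)$-good uniformly for all $E$ in this window and all centres $x\in\RR^{d}$. This is exactly the content of Lemma \ref{t:ilse}, to be proved later. The novelty there is that the standard approach via Lifshitz tails (as in \cite[Prop.~4.3]{GKuniv}) cannot be applied because the background $V_{D}$ is not periodic; instead, one invokes the quantitative unique continuation principle of \cite{RMV12,KV02} to convert the rarity of sites $\gamma\in D'\cap\Lambda_{L_{0}}(x)$ with $\omega_{\gamma}=1$ into a lower bound on the bottom of the Dirichlet spectrum $\sigma(H_{\omega,x,L_{0}})$ that separates it from $E_{*}$. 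The restriction to small $\beta$ when $d=1$ enters precisely here: the one-dimensional QUCP constant tracked in Appendix \ref{A:QUCP} forces $\beta$ to be small enough that a Lifshitz-type lower bound on the ground state actually beats the trivial upper bound, something that is automatic in higher dimensions.

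Once the ILSE is in hand, the induction step should require no essential change. I would apply the Bourgain--Kenig--Germinet--Klein bootstrap of \cite[Sect.~4--6]{GKuniv}. Its deterministic ingredients -- the geometric resolvent identity, the Combes--Thomas estimate, and Sobolev/interior regularity bounds -- are insensitive to whether $V_{D}$ is periodic: they only need $V_{D}\in L^{\infty}(\RR^{d})$ with a uniform bound, which is guaranteed because the hypothesis $\delta_{+}<r'/4\le r/4$ forces the single-site translates in \eqref{delop-pot} to have pairwise disjoint supports, so $\|V_{D}\|_{\infty}\le 1$ uniformly in $D$. The probabilistic Wegner input, which is not available for Bernoulli random variables, is replaced as in \cite{BK,GKuniv} by the free-sites and eigenvalue-variation argument: one conditions on the Bernoulli variables in $D'\cap\Lambda_{L}(x)$ that have been frozen, uses that flipping a free variable produces a quantifiable shift of the eigenvalues of $H_{\omega,x,L}$ in the window, and extracts the algebraic probability gain $L^{-pd}$ needed in Definition \ref{def:gscale}. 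Non-ergodicity is not an obstacle because the estimates are all local in $\Lambda_{L}(x)$ and uniform in $x$: the number of sites in $D''\cap\Lambda_{L}(x)$ is sandwiched between constants depending only on $(r,R)$, and all deterministic constants depend only on $(r,R,\delta_{\pm},u_{-},\|u\|_{\infty})$.

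Assembling the pieces, after fixing $p\in(\tfrac{1}{3},\tfrac{3}{8})$ and $\zeta\in(0,1)$, the ILSE feeds the goodness of $L_{0}$ into the first MSA step; this propagates goodness along a geometrically growing sequence $L_{k+1}\simeq L_{k}^{\alpha}$ for some $\alpha>1$, at the price of degrading $m_{0}$ to a positive $m$ and possibly shrinking $E_{*}$ by a fixed amount. A standard interpolation (e.g.\ \cite[Sect.~6]{GKuniv}) then extends $(E,m,\zeta,p)$-goodness to \emph{every} $L\ge L_{0}$, uniformly in $E\in[E_{0},E_{*}]$ and $x\in\RR^{d}$, which is the claim.

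The principal obstacle is clearly the ILSE: once it is available in the strong form required (a probability estimate of order $1-L_{0}^{-pd}$, holding uniformly over an interval of energies and uniformly in $x$), the Bourgain--Kenig--Germinet--Klein machinery is robust enough to digest the non-ergodic background, precisely because each induction step only sees the configuration of $D''$ and of $\omega$ inside a single box. The delicate step in Lemma \ref{t:ilse} will be to marry the QUCP lower bound on eigenfunction mass with a Bernoulli large-deviation estimate so that the resulting bad-box probability decays algebraically in $L_{0}$ rather than only exponentially in a constant, and to do so uniformly as $E$ ranges over the window.
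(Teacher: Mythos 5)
Your proposal is correct and follows essentially the same approach as the paper's proof of Theorem~\ref{t:msa}: one runs the MSA of \cite{GKuniv} for operators with free sites (uniformly in $t_S$), observes that the induction step \cite[Prop.~4.6]{GKuniv} carries over to the non-ergodic Delone setting because the good-scale estimate of Definition~\ref{def:gscale} is uniform in the box centre $x$ and the weak Wegner input is supplied by the quantitative unique continuation principle \cite{RMV12}, and replaces only the periodicity-based initial length-scale estimate of \cite[Prop.~4.3]{GKuniv} by Lemma~\ref{t:ilse}, likewise via QUCP. The one imprecision is your characterisation of the scale progression as a single sequence $L_{k+1}\simeq L_k^{\alpha}$ -- in the Bernoulli version of \cite{GKuniv} several scales are handled at each induction step -- but this is an internal feature of the cited machinery that does not affect your argument.
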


\begin{proof}[Proof of Theorem \ref{t:dynloc}]
Theorem \ref{t:msa} allows us to follow the construction of a set $\mathcal U\subset \Omega_{D'}$ for which the associated resolvents have all good properties, as stated in \cite[Thm.\ 6.1]{GKuniv}. This, in turn, allows to follow the proof of dynamical localisation in \cite[Sect.~7]{GKuniv}. One then obtains a generalisation of \cite[Thm.\ 7.2 and Cor.\ 7.3]{GKuniv}, of which Theorem \ref{t:dynloc} is the particular case for Bernoulli random variables.
\end{proof}

It remains to prove Theorem \ref{t:msa}, i.e., we need to show that we can perform the MSA of \cite{GKuniv}.
Note that the MSA in \cite{BK,GKuniv} is more involved than what we briefly described above: instead of dealing with one scale at each step of the iteration, the authors of \cite{BK,GKuniv} handle several scales at each step. This is needed to treat singular random variables. In the case of regular random variables, the Wegner estimate allows to control the fluctuations of the eigenvalues of the finite-volume operator (see e.g. \cite{K,S, AW}), that are induced by variations in the random parameters $\omega_\gamma$. The Wegner estimate is a crucial ingredient in each iteration step (ii) of the MSA described above.
In the case of Bernoulli random variables, however, this estimate is absent (except in $d=1$, see \cite{DSS02}).
Fluctuations in the parameters $\omega_\gamma\in\{0,1\}$ need not necessarily generate an according fluctuation in the eigenvalues of $H_{\om,x,L}$. Bourgain and Kenig tackled this problem in \cite{BK} by proving a weaker version of the Wegner estimate at each step of the multiscale iteration, which led to the need of dealing with several scales at each step. Their argument, and the following refinement in \cite{GKuniv}, was based on the use of so-called \emph{free sites}, and a quantitative version of the unique continuation principle. In their periodic setting, the free sites form a subset of lattice points $y\in \ZZ^d$, abundant enough, where the random parameters $\omega_y\in\{0,1\}$ are replaced by deterministic continuous parameters $t_y\in[0,1]$. These free sites are used to generate fluctuations of the eigenvalues, which can be estimated using quantitative versions of the unique continuation principle for the corresponding eigenfunctions.

\begin{defn}[An operator with free sites] \
\begin{nummer}
\item Let $D'$ be a Delone set with radius of relative denseness $R'$.
	For each (sub-) lattice point $z\in (2R'\ZZ)^{d}$ choose a point $\gamma_z\in D'\cap \L_{R'}(z)$.
	(Note that by relative denseness such a point always exists.). We define the $(R',3R')$-Delone set
\be D_0:=\{\gamma_z: z\in (2R'\ZZ)^{d}\} \subseteq D'\ee
and the set of \emph{free sites}
\be S:= D'\setminus D_0, \ee
which is a Delone set with radius of relative denseness $2R'$.
\item \label{fvfreesites}
	Given a Delone pair $(D,D')$, a set of free sites $S\subset  D'$, parameters $t_S:= (t_\gamma)_{\gamma\in S}\in[0,1]^S$ and a box $\L_L(x)$, we introduce the \emph{finite-volume Delone--Bernoulli operator with free sites}
    \be H_{\om,t_S,x,L} := H_{D,x,L} + V_{\om,t_S,x,L}\quad \mbox{in~} \,\Lp{\L_L(x)},
    \ee
    where, for $y\in\L_L(x)$,
    \be V_{\om,t_S,x,L}(y) := \sum_{\gamma\in D_{0} \cap \L_L(x)}\omega_\gamma u(y-\gamma) + \sum_{\gamma\in S\cap \L_L(x)}t_\gamma u(y-\gamma).
    \ee
    \end{nummer}
\end{defn}

\begin{proof}[Proof of Theorem \ref{t:msa}]
	This is the analogue of \cite[Thm.\ 4.1]{GKuniv}, which relies upon the initial length-scale 
	estimate \cite[Prop.\ 4.3]{GKuniv} and the induction procedure \cite[Prop.\ 4.6]{GKuniv}. 
	The statements of \cite[Thm.\ 4.1, Prop.\ 4.3, Prop.\ 4.6]{GKuniv} involve 
	operators with free sites and hold uniformly in the coupling constants $t_{S}$ at the free sites. 
	Uniformity in $t_{S}$ is crucial to perform the induction step to proceed to larger and larger length scales. 
	However, in the final result of \cite[Thm.\ 4.1]{GKuniv}, the coupling constants $t_{S}$ 
	can be particularly chosen as the realisations of the given Bernoulli random 
	variables at the free sites, which then amounts to the claim of Theorem~\ref{t:msa} for the (original) 
	Delone--Bernoulli operator. We proceed to discuss the two ingredients Prop.\ 4.3 and Prop.\ 4.6 in \cite{GKuniv}.
	
	Prop.\ 4.6 in \cite{GKuniv} is applicable to our setting: Firstly, our choice of free sites is abundant 
	in the sense of \cite[Def.\ 3.8(i)]{GKuniv}. Secondly, we follow a reasoning similar to \cite{RM12}: 
	As our Definition~\ref{def:gscale} of a good scale involves an estimate that is uniform with respect 
	to the centre of the box, periodicity of the potential does not play a role any more. 
	The statements remain valid for 
	the potential $V_D+V_{D'^\om}$ with an underlying Delone structure.
	Thirdly, we employ the quantitative version of the unique continuation principle \cite{RMV12} 
	that applies to our setting. This is used to prove the crucial result \cite[Lemma 4.14]{GKuniv}, 
	which is a weak version of the Wegner estimate. It is the proof of this lemma where the power of working with  
	operators with free sites from Definition \ref{fvfreesites} comes into effect. Altogether, this establishes 
	the validity of \cite[Prop.\ 4.6]{GKuniv} in our setting. 

	Concerning the initial length-scale estimate \cite[Prop.\ 4.3]{GKuniv}, the proof given in 
	\cite[Sect.\ 4.1]{GKuniv} relies heavily on the periodicity of the background potential. 
	As this is not necessarily the case for $V_D$, these proofs do not apply to our setting.
	Therefore, the rest of our analysis will focus on proving the initial length-scale estimate 
	uniformly in the free sites in our setting.
	This is done in the next Lemma~\ref{t:ilse}, which then completes the proof of Theorem~\ref{t:msa}.
\end{proof}

\begin{lem}\label{t:ilse}
Let $(D,D')$ be a Delone pair and $H_{\omega,t_{S},x,L}$ the finite-volume Delone--Bernoulli operator with free sites from Definition \ref{fvfreesites}. We fix $p>0$ and $\beta\in(0,1)$.
\begin{nummer}
\item
Let the dimension $d\geq 2$ and $\varepsilon\in(0, \frac34 - \frac1d)$. Then there exists $\tilde L=\tilde L(d,D,D',u,\beta,p,\varepsilon)>0$ and constants $C_{d,u,D',\varepsilon}, C_{1}, C_{2} >0$, where $C_{1},C_{2}$ depend on the parameters of the model, such that for all scales $L\geq \tilde L$ and all $x\in\RR^d$, we have
\be\label{ilse}
\P\left( H_{\om,t_S,x,L}\geq E_0+ \mathcal{E}_{L},\,\,\forall t_S\in[0,1]^S \right)\geq 1-L^{-pd},
\ee
where $\mathcal{E}_{L} := u_-(C_{d,u,D'}\log L)^{-(\varepsilon +1/d)[C_1+ C_2(\log L)^{(4/3)(\varepsilon +1/d)}]}$.
It follows that there exists $m_{L}>0$ and $\zeta \in (0,1)$ such that, for all energies $E\in[E_0,E_0 + \mathcal{E}_{L}/2]$,
all scales $L\geq \tilde L$ are $(E,m_L,\zeta,p)$-good for the operator with free sites, uniformly in $t_{S} \in [0,1]^{S}$.
\item In dimension $d=1$
there exists $\tilde L=\tilde L(D,D',u,\beta,p 
)>0$
such that for all scales $L\geq \tilde L$ and all $x\in\RR^d$, \eqref{ilse} holds with $d=1$ and $\mathcal{E}_{L}$ replaced by
$\widetilde{\mathcal{E}_{L}}:= C_{u, D,D',p} |\log\beta| \; L^{-2 \tilde C /|\log \beta|} / \log L$, where $\tilde C =
\tilde C_{u, D,D',p}>0$. Moreover, if $\beta \in (0, e^{-\tilde C})$,
then there exists $m_{L}>0$ and $\zeta \in (0,1)$ such that, for all energies $E\in[E_0,E_0 + \widetilde{\mathcal{E}}_{L}/2]$,
all scales $L\geq \tilde L$ are $(E,m_L,\zeta,p)$-good for the operator with free sites, uniformly in $t_{S} \in [0,1]^{S}$.
\end{nummer}
\end{lem}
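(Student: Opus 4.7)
I would combine a Bernoulli-concentration estimate on the good arrangements of the random potential with a local application of the quantitative unique continuation principle (QUCP) from~\cite{RMV12} (respectively~\cite{KV02} in dimension one), in the spirit of~\cite{BK,GKuniv}, but replacing their Lifshitz-tails initial step (which requires periodicity) by the QUCP. The key fact that bypasses the non-periodicity of $V_D$ is that the QUCP constants depend only on $\|V_{D''}\|_\infty\le 1$ (assumption \eqref{cond:u}) and on the local scale at which QUCP is applied, never on the structure of $V_D$ or on the free-site parameters $t_S\in[0,1]^S$.

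\textbf{Step 1 (Good event).} I would pick an auxiliary scale $\ell_L$ of the form $(c\log L)^{\varepsilon+1/d}$ for $d\ge 2$ (respectively $\ell_L\sim\log L/|\log\beta|$ in $d=1$) and partition $\Lambda_L(x)$ into disjoint sub-cubes $\{B_j\}$ of side $\ell_L$. By relative denseness of $D_0$ (with parameter $3R'$), each $B_j$ contains at least $N_L\asymp(\ell_L/R')^d$ points of $D_0$ once $L$ is large. Defining
\begin{equation*}
\mathcal A_L := \bigl\{\omega\in\Omega_{D'} : \text{every } B_j \text{ contains some } \gamma\in D_0 \text{ with } \omega_\gamma=1\bigr\},
\end{equation*}
independence and a union bound give $\PP(\mathcal A_L^c)\le (L/\ell_L)^d \beta^{N_L}\le L^{-pd}$ for $L$ large, which is the probabilistic content of \eqref{ilse}.

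\textbf{Step 2 (Deterministic spectral bound on $\mathcal A_L$).} The single-site potential $u$ is non-negative (pointwise consequence of $u_-\chi_{\Lambda_{\delta_-}}\le u\le\chi_{\Lambda_{\delta_+}}$ together with compact support), so the free-site and inactive Bernoulli contributions are non-negative and can be dropped in a quadratic-form bound. For any eigenpair $H_{\omega,t_S,x,L}\psi=E\psi$ with $\|\psi\|=1$, Dirichlet--Neumann bracketing gives $\langle\psi,H_{D,x,L}\psi\rangle\ge E_0$, whence
\begin{equation*}
E \;\ge\; E_0 + u_-\sum_j \|\psi\|^2_{\Lp{\Lambda_{\delta_-}(\gamma_j)}},
\end{equation*}
where $\gamma_j\in D_0\cap B_j$ is an active site (available by $\omega\in\mathcal A_L$). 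Since $\psi$ weakly solves $(-\Delta+V_{D''})\psi=E\psi$ on $\Lambda_L(x)$ with $\|V_{D''}-E\|_\infty$ uniformly bounded, the QUCP of~\cite{RMV12} applied locally at outer scale $\ell_L$ around each interior $\gamma_j$ yields
\begin{equation*}
\|\psi\|^2_{\Lp{\Lambda_{\delta_-}(\gamma_j)}} \;\ge\; (c\delta_-/\ell_L)^{K_1+K_2\ell_L^{4/3}}\, \|\psi\|^2_{\Lp{B_j}}.
\end{equation*}
Summing over $j$ and absorbing an $O(1)$ overlap/boundary constant into $C_{d,u,D',\varepsilon}$ produces $E-E_0\ge\mathcal E_L$ in the form claimed; since the Dirichlet realisation on the bounded cube has pure point spectrum, this upgrades to the operator inequality $H_{\omega,t_S,x,L}\ge E_0+\mathcal E_L$, uniformly in $t_S\in[0,1]^S$ because the QUCP constant depends only on $\|V\|_\infty\le 1$.

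\textbf{Step 3, the 1D case, and main obstacle.} The spectral gap gives $\|R_{\omega,t_S,x,L}(E)\|\le 2/\mathcal E_L$ for $E\in[E_0,E_0+\mathcal E_L/2]$; since $\mathcal E_L$ is super-polynomial in $1/L$, this is dominated by $e^{L^{1-\zeta}}$ for any $\zeta\in(0,1)$ and $L$ large, and a Combes--Thomas argument gives off-diagonal decay at rate $m_L\asymp\sqrt{\mathcal E_L}$, which establishes the good-box conditions of Definition~\ref{def:gbox}. In dimension one the argument is identical except the 1D QUCP of~\cite{KV02} (Appendix~\ref{A:QUCP}) has exponent linear in $\ell_L$; the choice $\ell_L\sim\log L/|\log\beta|$ then produces the polynomial $\widetilde{\mathcal E_L}\sim L^{-2\tilde C/|\log\beta|}/\log L$, and the restriction $\beta<e^{-\tilde C}$ is precisely what keeps the exponent $2\tilde C/|\log\beta|$ below~$1$, so that $\|R\|$ remains sub-exponential in $L$. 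The main technical obstacle is the QUCP application in Step~2 for cubes $B_j$ adjacent to $\partial\Lambda_L(x)$, which lack the buffer needed for QUCP at outer scale $\ell_L$: this is handled either by discarding the $O(\ell_L/L)$-fraction of boundary cubes (losing only a constant factor) or by reflecting $\psi$ across the Dirichlet boundary, as in~\cite[Sect.\ 4.1]{GKuniv}; uniformity in $t_S$ is free throughout since no step sees $t_S$-variation beyond the uniform bound $\|V\|_\infty\le 1$.
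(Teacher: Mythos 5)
Your proposal follows the same overall strategy as the paper: replace the periodicity-dependent Lifshitz-tails input by a Bernoulli concentration bound plus a quantitative unique continuation principle (QUCP), with the auxiliary scale chosen as $(\log L)^{\varepsilon+1/d}$ in $d\geq 2$ (resp.\ $\log L/|\log\beta|$ in $d=1$) and the restriction $\frac43(\varepsilon+\frac1d)<1$ arising when absorbing the prefactor into the Combes--Thomas exponent. The constants you obtain have the same form as the paper's, which is correct by scaling covariance of the QUCP. However, the way you invoke the QUCP differs from the paper in a way that introduces a real gap. The paper applies the unitary dilation $S_M$ to rescale the \emph{entire} box $\Lambda_L(x)$ down to $\Lambda_\ell$, after which the sub-boxes become unit cubes, and then invokes \cite[Thm.~4.9(a)]{RMV12} \emph{globally} with $\delta=\delta_-/M$ and $K_V\sim M^2$; that theorem is formulated precisely for Dirichlet eigenfunctions on the full box and handles cubes adjacent to $\partial\Lambda_\ell$ internally, so no boundary problem ever arises. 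You instead propose to apply a QUCP \emph{locally}, at outer scale $\ell_L$ around each $\gamma_j$, which is an interior estimate and fails for sub-boxes touching $\partial\Lambda_L(x)$. You flag this, but your first fix --- discarding the $O(\ell_L/L)$-fraction of boundary cubes and asserting this ``loses only a constant factor'' --- is not sound: for a general Delone background there is no a priori control preventing the Dirichlet ground state from concentrating its mass near $\partial\Lambda_L(x)$, in which case the discarded cubes could carry almost all of $\|\psi\|^2$. Your second fix (reflection across the Dirichlet boundary) can be made to work with some care, but the paper's route via the global rescaled scale-free QUCP avoids this issue entirely and is the intended argument. One further small point the paper attends to: $M_L$ is chosen so that $\ell=L/M_L$ is an odd integer, so the decomposition into $M$-boxes is exact; you should build this normalisation into your choice of $\ell_L$.
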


\begin{rem}
	\label{rem:SeelT}
	The arguments for Lemma~\ref{t:ilse} are based on quantitative unique continuation principles and convenient
	configurations of the random potential. They appear in a preliminary form in the PhD thesis of the second
	author \cite{RMthesis}. While finishing this paper, we learnt that similar arguments are used in
	the recent preprint \cite[Thm.\ 2.2]{SeelmanTaufer19} to obtain an initial length-scale estimate
	which is not restricted to the bottom of the spectrum but also works at internal band edges.
	It is applied to yield localisation at band edges in a periodic situation
	\cite[Thm.~1.1]{SeelmanTaufer19}, but the authors also speculate about applications in certain non-ergodic
	ones \cite[Ex.~2.4, Ex.~2.5]{SeelmanTaufer19}.
	We could combine  \cite[Thm.\ 2.2]{SeelmanTaufer19} with the multi-scale analysis
	performed in this section to obtain a generalisation of Theorem \ref{t:dynloc} to energies near
	internal band edges -- provided they exist for $H_{D}$.
\end{rem}

{
Before we turn to the proof of Lemma~\ref{t:ilse}, we recall the Combes--Thomas estimate from
\cite[Thm.~1]{GKpams} in our setting ($n=1,z=E,\Xi=1$, $\gamma=1/2$ therein), which also holds for finite-volume restrictions.

\begin{thm}[\cite{GKpams}]\label{t:CT}
	Let $\omega\in \Omega_{D'}$, $L>0$, $x\in\RR^{d}$ and $y,z \in\L_{L}(x)$. Suppose
$E< \inf \sigma(H_{\om,x,L})$ and write $\eta:=\inf \sigma(H_{\om,x,L})-E >0$. Then,
\be \norm{\Chi_y R_{\om,x,L}(E)\Chi_z }\leq \frac{4}{3\eta}e^{\frac{\sqrt{\eta}}{2}\left(\sqrt{d}- \norm{y-z}\right)}. \ee
\end{thm}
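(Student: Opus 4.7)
The plan is to establish the estimate by the classical Combes--Thomas conjugation argument, choosing the weight function carefully so as to recover the precise constants $4/(3\eta)$, $\sqrt\eta/2$ and the geometric factor $\sqrt d$ appearing in the exponent. First, I would fix a $1$-Lipschitz function $f\colon\RR^d\to\RR$ and a parameter $\alpha>0$ (both to be specified later) and consider the conjugated operator $H_\alpha:=M^{-1}H_{\om,x,L}M$ on $\Lp{\L_L(x)}$ with $M:=e^{\alpha f}$. Since $\L_L(x)$ is bounded and $f$ is continuous, $M$ is a bounded, boundedly invertible self-adjoint multiplication operator, so $H_\alpha-E$ is closed on the image domain $M^{-1}\dom(H_{\om,x,L})$ and is invertible on $\Lp{\L_L(x)}$ whenever $H_{\om,x,L}-E$ is; moreover $M$ preserves the form domain $H^1_0(\L_L(x))$.

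Second, I would extract coercivity from the real part of the sesquilinear form of $H_\alpha-E$. A direct form computation on $H^1_0(\L_L(x))$ yields the key identity
\[
\Re\langle\psi,(H_\alpha-E)\psi\rangle=\langle\psi,(H_{\om,x,L}-E)\psi\rangle-\alpha^{2}\norm{|\nabla f|\,\psi}^{2}\ge(\eta-\alpha^{2})\norm{\psi}^{2},
\]
because the anti-Hermitian first-order drift $\pm(2\alpha\nabla f\cdot\nabla+\alpha\Delta f)$ arising from the two conjugations $M^{\pm 1}(-\Delta)M^{\mp 1}$ cancels in the symmetric combination $\tfrac12(H_\alpha+H_{-\alpha})$, and because $|\nabla f|\le 1$ and $H_{\om,x,L}-E\ge\eta$. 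Setting $\alpha:=\sqrt\eta/2$ gives $\Re\langle\psi,(H_\alpha-E)\psi\rangle\ge\tfrac{3\eta}{4}\norm{\psi}^{2}$, whence Cauchy--Schwarz delivers the operator-norm bound $\norm{(H_\alpha-E)^{-1}}\le 4/(3\eta)$.

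Third, I would transfer this bound to $R_{\om,x,L}(E)=M\,(H_\alpha-E)^{-1}M^{-1}$ by tailoring the weight to the pair $(y,z)$. With the $1$-Lipschitz choice $f(\xi):=-\norm{\xi-z}$, one has $\norm{\xi-z}\le\sqrt d/2$ for $\xi\in\supp\Chi_z$ and $\norm{\xi-z}\ge\norm{y-z}-\sqrt d/2$ for $\xi\in\supp\Chi_y$, so
\[
\norm{\Chi_y\,M}_{\infty}\;\norm{M^{-1}\,\Chi_z}_{\infty}\le e^{\alpha(\sqrt d-\norm{y-z})}.
\]
Combining this factorisation with $\norm{(H_\alpha-E)^{-1}}\le 4/(3\eta)$ and $\alpha=\sqrt\eta/2$ reproduces the claimed bound.

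The main obstacle is the non-smoothness of $f$: the pointwise derivatives $\nabla f$ and $\Delta f$ are only distributional, so the cancellation leading to the clean form identity above must be justified rigorously. I would address this either by approximating $f$ with smooth $1$-Lipschitz mollifications $f_\delta:=f\star\rho_\delta$, running the preceding steps uniformly in $\delta$, and passing to $\delta\to 0$, or by working directly with the closed sesquilinear form, where the identity reduces to an integration-by-parts with bounded Lipschitz coefficients and never requires second derivatives of $f$. Either route is insensitive to the Delone potential $V_D+V_{D'^\om}$ (which commutes with $M$) and to the Dirichlet boundary conditions on $\L_L(x)$ (preserved by $M$), so the resulting estimate is uniform in $\omega$, $x$ and $L$, matching what is needed for the finite-volume restrictions used later in the multiscale analysis.
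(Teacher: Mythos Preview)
The paper does not prove this theorem; it merely recalls it from \cite[Thm.~1]{GKpams} with the parameter choices $n=1$, $z=E$, $\Xi=1$, $\gamma=1/2$, noting that the result also applies to finite-volume Dirichlet restrictions. Your proposal is a correct, self-contained reconstruction of the classical Combes--Thomas conjugation argument, and it reproduces exactly the constants that arise from those parameter choices in \cite{GKpams}: the coercivity bound $\eta-\alpha^{2}=3\eta/4$ at $\alpha=\sqrt\eta/2$ gives the prefactor $4/(3\eta)$, and your geometric bookkeeping with the $1$-Lipschitz weight $f(\xi)=-\norm{\xi-z}$ on the unit cubes $\Lambda_{1}(y)$, $\Lambda_{1}(z)$ produces the factor $e^{\frac{\sqrt\eta}{2}(\sqrt d-\norm{y-z})}$. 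Thus there is no discrepancy to report: you have supplied a proof where the paper chose to cite one, and your argument is essentially the same approach as the reference the paper invokes.
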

}

\begin{proof}[Proof of Lemma \ref{t:ilse}]
\emph{Part} (i). \quad Let $d \ge 2$.
We fix $x\in \RR^{d}$ and the configuration $t_S\in[0,1]^S$. First, we prove \eqref{ilse}.
Since $u\geq0$, we have
\be
	\label{tS-lower}
	H_{\om,t_S,x,L}\geq H_{D,x,L} + V_{D_{0}^{\omega}\cap \L_{L}(x)}\big|_{\Lambda_{L}(x)},
\ee
where $D_0^{\omega} := \{ \gamma \in D_{0}: \omega_{\gamma}=1\}$. Recall that $D_{0}\subset D'$ is a
$(R',3R')$-Delone set. We write $R_{0} := 3 R'$ for its radius of relative denseness.
Therefore, in order to obtain the event in \eqref{ilse}, it is enough to prove a suitable
lower bound for the operator on the r.h.s.\ of \eqref{tS-lower}, the result being automatically uniform in $t_S$.

We consider $L>4R_{0}$ which allows to fix a real $M > R_{0} + 2 \delta_{+}$ such that $\ell := L/M \in 2\NN+1$
is an odd natural number. We define the centred discrete cube $\hat\Lambda_\ell := \Lambda_{\ell} \cap \ZZ^{d}$
consisting of $|\hat\Lambda_{\ell}| = \ell^{d}$ lattice points. This allows the decomposition of
\be
	\label{del}
	\L_{L}(x) = \bigg(\bigcup_{k \in \hat\Lambda_{\ell}} \overline{\L_M(x + Mk)} \bigg)^{\rm int}
 \ee
into $\ell^{d}$ boxes of side length $M$ with mutually disjoint interior.
We introduce the event
\be
	\mathcal A := \Big\{ \om\in \Omega_{D'}: \big|D_0^{\omega} \cap \L_{M-2\delta_+}(x+Mk)\big| \ge 1
	\text{~~for every } k\in\hat\Lambda_{\ell}\Big\}
\ee
of finding at least one point of $D_{0}^{\om}$ well inside each cube in the decomposition
\eqref{del}.
The complement of this event has probability
\begin{align}
	\P(\mathcal A^c) &=\P\left({\exists\, k\in\hat\Lambda_{\ell} ~ \forall\gamma\in  D_{0} \cap
		\L_{M-2\delta_+}(x+Mk)\,:\,\omega_{\gamma}=0 }\right)  \notag \\
 	& \leq \sum_{k\in\hat\Lambda_{\ell}} \P\pa{\forall \gamma\in D_{0} \cap \L_{M-2\delta_+}(x+Mk)\,\,: \,
		\omega_{\gamma}=0  } \notag \\
	& \leq |\hat\Lambda_{\ell}| \; [\P(\omega_{\gamma}=0)]^{\min_{k\in\hat\Lambda_{\ell}}| D_{0}
		\cap\L_{M-2\delta_+}(x+Mk)|} \notag \\
	& \leq \ell^d \beta^{\lfloor{(M-2\delta_+)/{R_0}}\rfloor^d},
\end{align}
where $\lfloor x\rfloor$ denotes the largest integer not exceeding $x\in\RR$. Therefore, we have
\be
	\label{A-prob}
	\P(\mathcal A)\geq 1 - \ell^d \beta^{\lfloor{(M-2\delta_+)/{R_0}}\rfloor^d}.
\ee
From now on we fix an arbitrary $\omega\in \mathcal A$. For every $k\in\hat\Lambda_{\ell}$, this allows
to choose one point in $D_0^{\omega}\cap\L_{M-2\delta_{+}}(x+Mk)$. We denote this point by
$x+ M\gamma_k(\omega)$, that is,
\be
	\label{eq:gamma-loc}
	\gamma_{k}(\omega) \in \Lambda_{1- 2\delta_{+}/M}(k), \quad  k\in\hat\Lambda_{\ell}.
\ee
Thus, we obtain the lower bound
\be
	\label{eq:thinned-out}
	V_{D_{0}^{\omega}\cap \L_{L}(x)}(y) \ge \sum_{k\in\hat\Lambda_{\ell}} u\big(y - x - M\gamma_k(\omega)\big)
	 =: W_{\omega}(y-x),
	\quad 	y\in \Lambda_{L}(x),
\ee
for the potential in \eqref{tS-lower}.
We conclude from \eqref{tS-lower}, \eqref{eq:thinned-out} and unitary invariance under translation by $x$ that
\be
	\label{eq:first-transformed}
	\inf \sigma(H_{\om,t_S,x,L}) \ge \inf \sigma (H_{D-x,0,L} + W_{\omega} ).
\ee
Here, $D-x:=\{ \gamma-x:\,\gamma\in D\}$ is the Delone set $D$ translated by $-x$.

We will apply the scale-free unique continuation principle from \cite{RMV12} to the operator on the
right-hand side of \eqref{eq:first-transformed} to estimate the minimal lifting of the ground-state energy of
$H_{D-x,0,L}$ by the potential $W_{\omega}$. In order to adapt our problem to the formulation in \cite{RMV12},
we first need another unitary transformation.
The unitary dilation operator
$S_{M}: {\mathrm L}^{2}(\Lambda_{L}) \rightarrow \mathrm L^{2}(\Lambda_{\ell})$,
$\psi \mapsto M^{d/2} \psi(M\,\boldsymbol\cdot)$, transforms the operator on the right-hand side
of \eqref{eq:first-transformed} according to
\be
	\label{eq:op-conjugate}
	S_{M}( H_{D-x,0,L} + W_{\omega}) S_{M}^{*}
	= M^{-2} \big( H_{D}^{(M)} + M^{2} W_{\omega}(M\,\boldsymbol\cdot)\big),
\ee
where $H_{D}^{(M)} := -\Delta_{\ell} +  M^{2} V_{D-x}(M\,\boldsymbol\cdot)$ is the transformed background
operator with Dirichlet boundary conditions on $\mathrm L^{2}(\Lambda_{\ell})$
and the transformed potential can be written as
\be
	\label{eq:W-alpha-def}
	W_{\omega}^{(M)} := M^{2} W_{\omega}(M\,\boldsymbol\cdot) = \sum_{k\in\hat\Lambda_{\ell}}
	u_{M}\big(\boldsymbol\cdot \,- \gamma_{k}(\omega)\big)
\ee
with
\be
	\label{eq:uM}
	u_{M} := M^{2} u(M\,\boldsymbol\cdot) \ge M^{2} u_{-} \Chi_{\Lambda_{\delta_{-}/M}}.
\ee
The decomposition \eqref{del} transforms into
\be
	\label{del-alpha}
	\L_{\ell} = \bigg(\bigcup_{k \in \hat\Lambda_{\ell}} \overline{\L_{1}(k)}
	\bigg)^{\rm int}.
 \ee
Each of the unit cubes in this decomposition feels the potential $W_{\omega}^{(M)}$ sufficiently
strongly in the sense that
\be
	\label{eq:strong-pot}
	\supp \Chi_{\Lambda_{\delta_{-}/M}} \big(\boldsymbol\cdot \,- \gamma_{k}(\omega)\big)
	\subset \Lambda_{1}(k)
\ee
for every $k\in\hat\Lambda_{\ell}$; see \eqref{eq:gamma-loc}.

We conclude from \eqref{eq:W-alpha-def}, \eqref{eq:uM} and \eqref{eq:strong-pot} that the assumptions
of the scale-free unique continuation principle
\cite[Thm.\ 4.9(a)]{RMV12} are satisfied for the operator $H_{D}^{(M)} + W_{\omega}^{(M)}$ with our $\ell$ playing the
r\^ole of $L$ there and with $\delta =\delta_{-}/M$, $C_{-} = M^{2} u_{-}$, $x=0$ and $t=1$ there.
Moreover, as the unperturbed operator is unitarily conjugate to the Dirichlet restriction
$S^{*}_{M}H_{D}^{(M)}S_{M} = M^{2}H_{D-x,0,L}$, we infer
\be
	\inf\sigma(H_{D}^{(M)}) = M^{2} \inf\sigma(H_{D,x,L}) \ge M^{2} E_{0},
\ee
and \cite[Thm.\ 4.9(a)]{RMV12} provides the existence of a constant $C_{UC}(M)>0$ such that
\be
	\label{lowb}
	\inf \sigma(H_{\om,t_{S},x,L}) = M^{-2} \inf \sigma(H_{D}^{(M)} + W_{\omega}^{(M)}) \geq  E_0 + u_- C_{UC}(M).
\ee
Here we used \eqref{eq:first-transformed} and \eqref{eq:op-conjugate} for the first equality.
The constant $C_{UC}(M)$ is determined by \cite[Eq.\ (7)]{RMV12} and involves a certain quantity $K_{V}^{2/3}$.
The relevant definition of $K_{V}$ in the context of \cite[Thm.\ 4.9(a)]{RMV12} is given in the comment before
\cite[Eq.\ (33)]{RMV12}. We estimate
\begin{align}
	\label{KV-bound}
	K_{V}
	& \le 2 \norm{M^{2} V_{D-x}(M\,\boldsymbol\cdot)}_\infty
		+ 2 \sup_{\omega'\in\mathcal{A}}\norm{W_{\omega'}^{(M)}}_\infty
		+ \sup_{\ell'\in 2\NN +1} \inf \sigma(-\Delta_{\ell'})  \notag\\
	& = M^{2} (4 + M^{-2} d \pi^{2}/3^{2})
\end{align}
and conclude
\be
	\label{c:ucp}
	C_{UC}(M):=\pa{\frac{\delta_-}{C_d M}}^{C_d + 3 C_d  M^{4/3}},
\ee
where $C_{d}$ depends only on the dimension. Note that the use of the upper bound $K_{V}<5M^2<3^{3/2}M^2$ in
\eqref{c:ucp} is justified because we may assume that $M$ is large enough and therefore $\delta_-/(C_dM) <1$.
This argument may require to enlarge $L$. But we will be choose $M$ as an increasing function of $L$
in \eqref{KLchoice} below.
We summarise that the spectral lifting in \eqref{lowb} relative to the spectral infimum $E_{0}$ of the
background operator is independent of $\omega\in\mathcal A$, $x$ and $L$.

Combining \eqref{lowb} and \eqref{A-prob}, we obtain
\be
	\label{est:ilseprob2}
	\P\Big( \inf \sigma(H_{\omega,t_{S},x,L}) \geq E_0+u_-C_{UC}(M)\Big)
	\geq 1 - \pa{{L}/{M}}^d\beta^{\lfloor{(M-2\delta_+)/{R_0}}\rfloor^d}.
\ee
Given $\eps>0$, which will be determined later, we make the choice
\be
	\label{KLchoice}
	M=M_L:= \max\Big\{ \mu \in \big(R_{0}+ 2\delta_{+}, 4R_0(\log L)^{\epsilon + 1/d}\big] : L/ \mu	\in 2\NN+1 \Big\}.
\ee
Thus, $\beta^{\lfloor{(M_{L}-2\delta_+)/{R_0}}\rfloor^d} \le \beta^{[M_{L}/(2R_{0})]^{d}}
	\le \beta^{(\log L)^{1+d\varepsilon}}$ for all sufficiently large $L$,
and, given any $p>0$, we can find $\cL=\cL(d,\eps,\beta,D',p)$ such that for $L\geq \cL$ we have
\be
	\label{ilseprob3}
	\P\Big( \inf \sigma(H_{\omega,t_{S},x,L})\geq E_0+ \mathcal{E}_{L} \Big) \geq 1-L^{-pd}
\ee
with $\mathcal{E}_{L} := u_{-}C_{UC}\big(4R_0(\log L)^{\epsilon + 1/d}\big) \le  u_-C_{UC}(M_{L})$ taking the explicit form as stated in the lemma.
This proves \eqref{ilse}.

In order to obtain the claim on good scales, we assume that $\omega$ is in the event of \eqref{ilseprob3} from
now on and let $E\in[E_0,E_0+\mathcal{E}_{L}/2]$.
Then, for every fixed $\zeta\in (0,1)$,
\be\label{ct1prime}
	\norm{R_{\omega,t_{S},x,L}(E)}\leq 2/\mathcal{E}_{L} \le e^{L^{1 - \zeta}}
\ee
for all $L\ge \tilde{\mathcal{L}} = \tilde{\mathcal{L}}(d,\eps,\beta,D',p)$, uniformly in $t_{S} \in [0,1]$.
The last inequality in \eqref{ct1prime} follows from \eqref{KLchoice} and
\eqref{c:ucp}.
Let also
$z,y\in\L_L(x)$ with $\norm{z-y}>L/100$. Then, the Combes--Thomas estimate (Theorem \ref{t:CT}) gives
\be\label{ct1}
	\norm{\Chi_z R_{\omega,t_{S},x,L}(E)\Chi_y}\leq \frac{8}{3\,\mathcal{E}_{L}}e^{-(\mathcal{E}_{L}/32)^{1/2}\norm{z-y}}
\ee
for all $E\in[E_0,E_0+\mathcal{E}_{L}/2]$ and all $L > 200 \sqrt{d}$, uniformly in $t_{S} \in [0,1]$.
We need to absorb the factor $1/\mathcal{E}_{L}$ in the exponent of the exponential.
This is only possible if
\begin{equation}
	\label{restr-d}
 	\frac{4}{3}\, \Big(\varepsilon + \frac{1}{d}\Big) < 1.
\end{equation}
It is here where the restriction to $d\ge 2$ arises. So we choose $\varepsilon\in (0, \frac34 - \frac1d)$ and obtain
\be\label{ct1doubleprime}
	\norm{\Chi_z R_{\omega,t_{S},x,L}(E)\Chi_y}\leq e^{-(\mathcal{E}_{L}/64)^{1/2}\norm{z-y}}
\ee
for all $E\in[E_0,E_0+\mathcal{E}_{L}/2]$ and all $L\ge \tilde{\mathcal{L}}$.
Introducing $m_L:=\sqrt{\mathcal{E}_L/64}$, we see that
the scales $L\geq \max\{\mathcal L, \tilde{\mathcal{L}}\}$ are $(E,m_{L},\zeta,p)$-good  for the operator with free sites for all energies $E\in[E_0,E_{0}+ \mathcal{E}_L/2]$, uniformly in $t_{S} \in [0,1]^{S}$.

\smallskip

\noindent\emph{Proof of Part} (ii).\quad
In the case $d=1$ we proceed as in Part (i) up to \eqref{eq:first-transformed}.  From \eqref{eq:gamma-loc} we see that
\be\label{eq:cubes} \L_{\delta_-}(M\gamma_k(\omega))\subset \L_M(Mk)=:\L_M(\kappa),\quad \forall \kappa\in M\ZZ\cap\L_{\ell}, \ee
and the potential satisties
\be W_{\omega}=\sum_{k\in\hat \L_\ell}u(\boldsymbol\cdot\,-M\gamma_k(\omega))\geq u_-\Chi, \ee
where $\Chi$ is the characteristic function of $\cup_{k\in\hat \L_\ell}\L_{\delta_-}(M\gamma_k(\omega))$. By conveniently extending the finite family of cubes $\{\L_{\delta_-}(M\gamma_k(\omega))\}_{k\in\hat \L_\ell}$ to an infinite sequence satisfying \eqref{eq:cubes} for all $\kappa\in M\ZZ$, the potential $W_{\omega}$ can be seen as a potential satisfying the lower bound $W_\omega\geq u_- \Chi$, where, in an abuse of notation, $\Chi$ is the characteristic function of a reunion of cubes in $\RR$ that satisfies the hypotheses of Theorem \ref{t:lift}. Therefore we can apply this Theorem with $W=W_\omega$, $s=\delta_-$, $C_{-} = u_{-}$,
$x=0$, $t=1$ and with $M\gamma_k(\omega)$ playing the role of the centres $\gamma_k$ there. This implies
\be \inf \sigma(H_{D-x,0,L}+W_\omega) \geq \inf \sigma(H_{D-x,0,L})+u_-\widetilde C_{UC}(M) \ee
with a probability given by \eqref{A-prob} with $d=1$. Therefore, we obtain the equivalent of the inequality in Eq. \eqref{lowb} in dimension $d=1$, but with $C_{UC}(M)$ replaced by
\be \label{cuctilde} \widetilde C_{UC}(M) :=\frac{\delta_-}{4M}\,e^{-CM}, \ee
where the constant $C>0$ depends only on $u, \norm{V}_\infty, \sup_{\omega\in\Omega}\norm{W_\omega}_\infty$, and $E_0$, but not on $M$.

Next, we choose
\be M= \widetilde M_{L}:=\max\Big\{ \mu \in \big(R_{0}+ 2\delta_{+},\frac{4R_0(1+p)\log L}{\abs{\log \beta}} \big] : L/ \mu	\in 2\NN+1 \Big\}, \ee
which ensures that, given $p>0$,
\begin{align}
	(L/\widetilde M_L)\beta^{\lfloor(\widetilde M_L-2\delta_+)/R_0\rfloor}
	& \leq  (L/\widetilde M_L)\beta^{[\widetilde M_L/2R_0]}  \notag\\
	& \leq  (L/\widetilde M_L)\beta^{(1+p) \log L/|\log\beta|}\leq L^{-p}
\end{align}
for all $L$ large enough.
Then, there exists $\cL=\cL(u,\beta,D',p)$ such that for all $L\geq \cL$, we have \eqref{ilseprob3} for $d=1$ with
$\widetilde{\mathcal{E}}_{L} := u_{-} \widetilde{C}_{UC}(\widetilde M_{L})$ taking the form as in the claim.

Finally, \eqref{ct1prime} and \eqref{ct1} hold for $d=1$ and $\mathcal{E}_{L}$ replaced by $\widetilde{\mathcal{E}}_{L}$.
The restriction $\beta < e^{-\tilde C}$ is required in order to pass from \eqref{ct1} to \eqref{ct1doubleprime}, with $\tilde C$ as in the claim of the Lemma. Thus, the property of good scales follows as above, given the constraint on $\beta$.
%
%
\end{proof}

\begin{remarks}
\item
	As we have seen, the proof of localisation using the MSA gives the existence of an energy $E_*>E_0$ such that the operator exhibits almost surely dynamical localisation in the interval $[E_0,E_*]$. In the ergodic setting with regular random variables it was shown in \cite{GK2} how $E_*$ depends on the parameters of the model. This was later used in \cite[Thm.~3.9]{GMRM} for a Delone--Anderson model, to show how this energy depends on the parameters of the underlying Delone set. It would be interesting to know if one can obtain similar information in our case. However, the proof of \cite{GK2} relies on Wegner estimates and is therefore restricted to regular random variables.
\item
	Note that even if it was $M$ instead of $M^{4/3}$ appearing in \eqref{c:ucp} in the exponent of $C_{UC}(M)$
	 -- as it is conjectured, see for example \cite{DKW17} -- the proof of Lemma~\ref{t:ilse}(i) would involve the condition $\frac{1}{d}+\eps<1$ instead of \eqref{restr-d}. Therefore the restriction to $d\geq 2$ could still not be relaxed in this part.
	\end{remarks}

\section{Topological properties of Delone--Bernoulli operators}\label{s:delops}

In this section we apply Theorem \ref{t:dynloc} to obtain topological information on the set of Delone sets for which the associated Delone operators exhibit localisation.

\subsection{Denseness of Delone sets exhibiting localisation}
\label{s:pp}

In this section we show that Delone sets whose associated Delone operators exhibit dynamical localisation at
low energies are dense in $\DD$. Even though this is a purely deterministic statement, our proof is not and
relies on an application of Theorem~\ref{t:dynloc}.

\begin{thm}
	\label{p:conv}
 	Let $D\in\DD$ be a Delone set and let $E_{0}$ be the infimum of the spectrum
  \eqref{infspec-H-D} of the associated Delone operator \eqref{delop}.
  Then there exists an energy $E_* > E_{0}$ and a sequence $(D_{L})_{L\in\NN} \subset \DD$ of Delone sets such that
  $\lim_{L\to\infty} D_{L} =D$ in the topology of $\DD$ and the Delone operators $H_{D_L}$ exhibit
  dynamical localisation in $[E_0, E_*]$ for all $L\in\NN$.
\end{thm}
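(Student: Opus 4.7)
The plan is to derandomise the conclusion of Theorem~\ref{t:dynloc}: for each scale $L$ I will locate a single configuration $\omega^{(L)}$ in the full-measure localisation event whose realised point set agrees with $D$ inside $\Lambda_L(0)$. Given $D\in\DD$, first choose a companion set $D'$ so that $(D,D')$ is a Delone pair in the sense of Definition~\ref{def:part}; the explicit construction recalled after that definition (one point of $D'$ in each shell $\Lambda_{r/2}(\gamma)\setminus\Lambda_{r/4}(\gamma)$, $\gamma\in D$) does the job. Pick an admissible single-site potential $u$ and a Bernoulli parameter $\beta\in(0,1)$ as in Definition~\ref{def:delberop}, taking $\beta$ small enough in $d=1$ to satisfy the extra restriction in Theorem~\ref{t:dynloc}(ii). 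Theorem~\ref{t:dynloc} then supplies an energy $E_*>E_0$ and an event $\Omega_{MSA}\subseteq\Omega_{D'}$ of full $\P_{D'}$-measure on which $H_{D''^\omega}$ exhibits dynamical localisation in $[E_0,E_*]$.

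For each $L\in\NN$ the set $D'\cap\Lambda_L(0)$ is finite by uniform discreteness, so the independence of the Bernoulli coupling constants yields
\[
\P\bigl(A_L\bigr) = \beta^{|D'\cap\Lambda_L(0)|} > 0, \qquad A_L:=\bigl\{\omega\in\Omega_{D'}: \omega_{\gamma'}=0 \text{ for every } \gamma'\in D'\cap\Lambda_L(0)\bigr\}.
\]
Since $\P(\Omega_{MSA})=1$, we have $\P(A_L\cap\Omega_{MSA})=\P(A_L)>0$, in particular $A_L\cap\Omega_{MSA}\neq\emptyset$. Pick any $\omega^{(L)}\in A_L\cap\Omega_{MSA}$ and set $D_L:=D''^{\omega^{(L)}}$.

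It remains to verify the two required properties. By the very definition of $A_L$, no point of $D'$ survives in $\Lambda_L(0)$, so $D_L\cap\Lambda_L(0)=D\cap\Lambda_L(0)$. Convergence in the topology $\tau$ of Definition~\ref{d:conv} is then immediate: given any $l>0$, fix any $L_0>l$; for every $n\ge L_0$ one has $D_n\cap\Lambda_{L_0}(0)=D\cap\Lambda_{L_0}(0)$ exactly, so the Hausdorff distance vanishes for all large $n$. On the other hand, the identifications $H_{D_L}=H_{\omega^{(L)}}$ (by the consistency noted after Definition~\ref{def:delberop}) together with $\omega^{(L)}\in\Omega_{MSA}$ immediately yield dynamical localisation of $H_{D_L}$ in $[E_0,E_*]$.

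The main -- indeed the only -- delicate point is the interplay between the probabilistic side (good configurations have full measure) and the topological side (a particular configuration making $D_L$ $\tau$-close to $D$ is needed). This is resolved by the positive-probability event $A_L$ intersecting the full-measure event $\Omega_{MSA}$. All of the analytical work has been absorbed into Theorem~\ref{t:dynloc}; the remaining argument is a standard ``positive measure beats null set'' selection, combined with the trivial observation that ball-wise equality implies convergence in $\tau$.
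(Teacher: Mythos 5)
Your proof is correct, and it is in fact somewhat cleaner than the one in the paper. The paper's argument first reduces to the case of an $\ell\ZZ^{d}$-periodic $D$ (using density of periodic Delone sets in $\DD$), then uses the Borel--Cantelli lemma over a sequence of disjoint cubes $\Lambda_L^{(j)}$ centred at lattice points $x_L^{(j)}\in\ell\ZZ^{d}$ to obtain a \emph{full-measure} event $A_L=\limsup_j A_{L,j}$ of configurations that switch off all added sites in at least one of these far-away cubes, and finally translates $D''^{\omega_L}$ by $-x_L^{(j)}$ to bring the good cube to the origin --- a step that is precisely where the periodicity of $D$ is needed (so that $-x_L^{(j)}+D\cap\Lambda_L(x_L^{(j)})=D\cap\Lambda_L(0)$). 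You bypass all of that: you observe that the single cylinder event $A_L$ at the origin already has strictly positive probability, so $\P(A_L\cap\Omega_{MSA})=\P(A_L)>0$ and the intersection is nonempty; you then set $D_L:=D''^{\omega^{(L)}}$ with no translation, which works for arbitrary Delone $D$. Both approaches deliver the same statement; yours avoids the periodic-approximation reduction, the Borel--Cantelli step, and the translation, and thereby sidesteps the only place where periodicity entered. The one implicit point you share with the paper is that the single-site potential $u$ in \eqref{delop} is tacitly assumed to satisfy the hypotheses of Definition~\ref{def:delberop}(i) relative to some $D'$ making $(D,D')$ a Delone pair; your phrase ``pick an admissible single-site potential'' slightly misstates this (one must use the $u$ already fixed by $H_D$), but this is an issue of exposition present in the paper's proof as well, not a gap introduced by your argument.
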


\begin{rem}
 We stress that the energy $E_{*}$ in the above theorem does not depend on $L$. This signals how weak the notion of convergence in $\DD$ is.
\end{rem}

\begin{proof}[Proof of Lemma~\ref{p:conv}]
  It suffices to prove the lemma for all $\ell\ZZ^{d}$-periodic Delone
  sets with $\ell\in\mathbb{N}$ arbitrary, because these are dense in
  $\DD$, as we show now: given an arbitrary Delone set $D$, define $D_{\ell}$ as
  the periodic tiling of $\RR^{d}$ by all $\ell\ZZ^{d}$-translates of
  $D\cap \Lambda_{\ell}$, viz.\
  \be
  	D_\ell := \bigcup_{j\in\ell\ZZ^d} \big\{ j+ (D\cap \L_\ell)\big\}.
	\ee
  Then, $\lim_{\ell\to\infty} D_{\ell} =D$ in the topology of $\DD$, see Definition \ref{d:conv}.

  So let us assume that $D\in\DD$ is $\ell\ZZ^{d}$-periodic for some
  $\ell\in\NN$ (we have dropped the $\ell$ from the notation) and let $(D,D')$ be a Delone pair, as in Definition \ref{def:part}. Consider the
  Delone--Bernoulli operator $H_{D^{\prime\prime\omega}}$ from Definition~\ref{def:delberop}.
  If $d=1$ assume $\beta \in (0,C)$ there with the constant $C\in(0,1]$ from Theorem~\ref{t:dynloc}.
  Let $(\Lambda_{L}^{(j)})_{j\in\NN}$ be a sequence of disjoint cubes in
  $\RR^{d}$, centred around $x_{L}^{(j)} \in\ell\ZZ^{d}$ and with
  edges of length $L\in\mathbb{N}$. The events
  \begin{equation}
    A_{L,j} := \Big\{\omega\in\Omega_{D'} : D''^{\omega} \cap \Lambda_{L}^{(j)}
    = D \cap \Lambda_{L}^{(j)}\Big\}
  \end{equation}
  that all random variables $\omega_{\gamma}=0$ for $\gamma\in D' \cap\Lambda_{L}^{(j)}$ are pairwise
  independent and have strictly positive probability $\PP_{D'}(A_{L,j})=\beta^{|D'\cap\L_L^{(j)}|} \ge c_{L} >0$
  with $c_{L}$ independent of $j$. Hence, the Borel--Cantelli
  Lemma implies $\PP_{D'}(A_{L}) =1$ for the event $A_{L} :=
  \limsup_{j\to\infty} A_{L,j}$ of the joint occurrence of infinitely
  many of the events $A_{L,j}$.

  From Theorem~\ref{t:dynloc} we infer the existence of an energy $E_*>E_0$ and an event
  $\hat \Omega_{D'}\subset \Omega_{D'}$ of full probability for which the operator
  $H_{D''^\om}$ with  $\om\in \hat \Omega_{D'}$ exhibits dynamical localisation in $[E_0,E_*]$
  and all its consequences. Then, $\PP_{D'}(\hat \Omega_{D'} \cap A_{L})=1$ for all $L\in\NN$.  In particular,
  for every $L\in\NN$ there exists $\omega_{L}\in\hat \Omega_{D'}$ such
  that $(\omega_L)_{\gamma} =0$ for all points $\gamma$ lying in some cube
  $\Lambda_{L}^{(j)}$ of centre $x_{L}^{(j)}$. Setting
  \be
  	\label{f:transdel}
		D_{L} := -x_{L}^{(j)} + D''^{\omega_{L}},
	\ee
  we have that $H_{D_L}$ exhibits dynamical localisation in $[E_0,E_*]$, because $H_{D''^{\omega_{L}}}$ does,
  and a translation by a vector does not change the spectral type of the operator. Moreover,
  $D_{L} \cap \L_{L}(0) = D \cap \L_{L}(0)$ for all $L\in\NN$. The latter implies that
  $D =\lim_{L\to\infty} D_{L}$ in the topology of $\DD$, see Definition \ref{d:conv}.
\end{proof}

\subsection{Meagreness of localised configurations in probability space}
\label{s:sc}

We present two approaches in this section.
For our first approach, we recall a version of Simon's Wonderland Theorem \cite[Sect.~2]{Si}
obtained in \cite{LS06}. Let $\cS$ be the space of self-adjoint operators on $\Lp{\RR^d}$,
endowed with the strong resolvent topology $\tau_{srs}$, that is, the coarsest topology on $\cS$
such that, for every $\varphi\in \Lp{\RR^d}$, the mapping $A\ni \cS\mapsto (A+i)^{-1}\varphi \in \Lp{\RR^d}$ is
continuous. Convergence in this topology corresponds to convergence of operators in the
strong resolvent sense \cite[Sect.\ VIII.7]{RSI}.

\begin{thm}[Thm.\ 2.1 in \cite{LS06}]
	\label{t:wond}
	Let $(X,d)$ be a complete metric space, $H:X \rightarrow \cS$, $x\mapsto H_{x}$, be a continuous
	mapping and $U\subseteq \RR$ an open set. Assume that each of the following three sets
	\begin{itemize}
	\item[(i)] \quad$\{ x\in X\,:\, \sigma_{pp}(H_{x})\cap U=\emptyset \}$,
	\item[(ii)] \quad$\{ x\in X\,:\, \sigma_{ac}(H_{x})\cap U=\emptyset \}$,
	\item[(iii)] \quad$\{ x\in X\,:\, U\subseteq \sigma(H_{x}) \}$
	\end{itemize}
 	is dense in $X$. Then,
	\begin{equation}
		\big\{ x\in X\,:\, U\subseteq \sigma(H_{x}), \;\sigma_{ac}(H_{x})\cap U= \emptyset = \sigma_{pp}(H_{x})\cap U \big\}
	\end{equation}
	is a dense $G_\delta$-set in $X$.
\end{thm}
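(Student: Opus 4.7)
The strategy is to apply the Baire category theorem to the complete metric space $(X,d)$. Specifically, the plan is to verify that each of the three sets (i), (ii), (iii) is a $G_\delta$ subset of $X$; combined with the density assumption this gives three dense $G_\delta$ sets whose intersection, by Baire, is itself a dense $G_\delta$. On this intersection, $U\subseteq\sigma(H_x)$ while $\sigma_{ac}(H_x)\cap U=\emptyset=\sigma_{pp}(H_x)\cap U$, so the spectrum in $U$ is forced to be purely singular continuous, which is precisely the conclusion. Since $H:X\to(\cS,\tau_{srs})$ is continuous, it is enough to show that the three conditions cut out $G_\delta$ subsets of $(\cS,\tau_{srs})$ itself and then pull them back.

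For condition (iii), I would fix a countable family of open rational intervals $I_n\subseteq U$ that is a base of $U$, so that $U\subseteq\sigma(H)$ is equivalent to $\sigma(H)\cap I_n\neq\emptyset$ for every $n$. Lower semicontinuity of the spectrum under strong resolvent convergence — a standard consequence of strong convergence of resolvents together with the characterisation of $\sigma(H)$ as the set where $(H-\lambda)^{-1}$ blows up — implies that each set $\{H\in\cS:\sigma(H)\cap I_n\neq\emptyset\}$ is open in $\tau_{srs}$. Their countable intersection is then $G_\delta$, and the pull-back under $H$ yields the $G_\delta$ property of (iii) in $X$.

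For conditions (i) and (ii) I would pass to spectral measures. Pick a countable dense set $\{\varphi_k\}\subset\Lp{\RR^d}$; then $\sigma_{ac}(H)\cap U=\emptyset$ is equivalent to $\mu^{ac}_{\varphi_k,H}(U)=0$ for all $k$, and analogously for the pure point case with $\mu^{pp}_{\varphi_k,H}$. The key analytic input, supplied by Simon's original proof and its reformulation in \cite{LS06}, is that each of the functionals $H\mapsto\mu^{ac}_{\varphi,H}(U)$ and $H\mapsto\mu^{pp}_{\varphi,H}(U)$ is upper semicontinuous on $(\cS,\tau_{srs})$. Granted this, the vanishing set of such a non-negative upper semicontinuous functional is $\bigcap_{m\in\NN}\{H:\mu_{\varphi,H}^{\bullet}(U)<1/m\}$, which is an intersection of open sets; intersecting over the countable family $\{\varphi_k\}$ keeps it $G_\delta$.

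The main obstacle is precisely the upper semicontinuity statement of the previous paragraph. Strong resolvent convergence $H_n\to H$ only gives weak convergence of the full spectral projections $E_{H_n}(U)\to E_H(U)$ (for sets $U$ whose boundary carries no $H$-spectrum), and the Lebesgue decomposition of spectral measures is notoriously discontinuous under such convergence. Simon handles this via a decomposition argument combining Fatou-type estimates and controlled approximations of the ac and pp parts through boundary values of resolvent matrix elements, and Lenz and Stollmann adapt his argument to general complete metric spaces parametrising the operators. Once the upper semicontinuity of both functionals is secured, the proof is finished by Baire, as described above.
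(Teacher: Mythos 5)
The paper does not itself prove Theorem~\ref{t:wond}; it is quoted verbatim from \cite[Thm.~2.1]{LS06}, which builds on Simon's wonderland argument in \cite{Si}. Your high-level strategy — establish that each of the three sets is a $G_\delta$ in $X$ and then apply Baire's category theorem to the triple intersection — is indeed the skeleton of that proof, and your reduction of condition (iii) to the countably many open conditions $\sigma(H_x)\cap I_n\neq\emptyset$, via lower semicontinuity of the spectrum under strong resolvent convergence, is essentially correct.

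However, there is a genuine gap in your treatment of (i) and (ii). The functionals $H\mapsto\mu^{pp}_{\varphi,H}(U)$ and $H\mapsto\mu^{ac}_{\varphi,H}(U)$ are \emph{not} upper semicontinuous on $(\cS,\tau_{srs})$, not even with respect to norm resolvent convergence, so the proposed shortcut to the $G_\delta$ property fails. For instance, let $H$ be multiplication by $x$ on $\mathrm{L}^{2}(\RR)$ and $H_n$ multiplication by $\lfloor nx\rfloor/n$: then $H_n\to H$ in norm resolvent sense, each $H_n$ is pure point, and $\mu^{pp}_{\varphi,H_n}(U)$ stays bounded away from zero for $\varphi$ localised inside $U$, while $\mu^{pp}_{\varphi,H}(U)=0$; this contradicts upper semicontinuity. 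Exchanging the roles — approximating a step function by a nearby piecewise-linear function of small slope — gives the same failure for $\mu^{ac}$. What Simon and Lenz--Stollmann actually prove is directly that the zero sets are $G_\delta$, by a different mechanism: for (i), Wiener's theorem encodes absence of point spectrum in $U$ as the vanishing of Ces\`aro time averages of $|\langle\varphi,e^{-itH}f(H)\varphi\rangle|^2$ for suitable $f$ supported in $U$, and each such time average is a \emph{continuous} function of $H$ in $\tau_{srs}$, so the vanishing set is a countable intersection of open sets; for (ii), one uses the characterisation of singularity of $\mu_{\varphi,H}$ with respect to Lebesgue measure through open covers of small Lebesgue measure together with lower semicontinuity of $H\mapsto\mu_{\varphi,H}(O)$ for $O$ open. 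So the Baire framework you set up is the right one, but the semicontinuity claim underpinning the crucial $G_\delta$ step is false, and replacing it with the correct Wiener/singularity arguments is precisely the technical content of the theorem you are reconstructing.
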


In our application of the wonderland theorem to Delone--Bernoulli operators, the following notion will be useful.

\begin{defn}
	\begin{nummer}
	\item Given an $(r,R)$-Delone set $D$ and a compact set $K\subset \RR^d$, we call the subset $D\cap K$
		the $K$-\emph{pattern} of $D$.
	\item We say that an $(r,R)$-Delone set $D$ is \emph{infinitely pattern repeating} (IPR), if
		for every compact set $K\subset \RR^d$ there exist infinitely many translates of the $K$-pattern in $D$,
		\be
			\big|\big\{ y\in \Rd :  y+(D\cap K)=D\cap (y+K) \big\}\big| = \infty.
		\ee 		
	\end{nummer}
	\label{d:supf}
\end{defn}

We remark that every periodic Delone set $D$ has the IPR property.


\begin{thm}\label{t:meagre} Assume that $D$ has the IPR property and that $u\in\cC_c^1(\RR^d)$. Assume moreover that
the following holds:
\begin{nummer}
\item there exists an energy $E'$ such that $H_D$ exhibits purely absolutely continuous spectrum in the interval $[E_0,E']$,
\item if $D'$ is a Delone set such that $(D,D')$ is a Delone pair, and $K\subset \RR^d$ is any compact set, then the spectrum of the operator $H_D+\tilde V$ with
\be\tilde V :=\sum_{\gamma\in D'\cap K} u(\,\pmb\cdot\,-\gamma)\ee
remains purely absolutely continuous in the interval $[E_0,E']$.
    \end{nummer}
Then, there exists an energy $\hat E$ such that the event $\Omega_{sc}\subset \Omega_{D'}$ of $\om$ for which $H_{D''^\om}$ exhibits purely singular continuous spectrum in the interval $(E_0,\hat E)\subset [E_0,E']$ is a dense $G_\delta$-set in $\Omega_{D'}$.
\end{thm}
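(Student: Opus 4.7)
The plan is to apply the wonderland theorem (Theorem~\ref{t:wond}) to the map $\Omega_{D'} \ni \omega \mapsto H_{D''^\omega} \in \cS$ with the open interval $U := (E_0, \hat E)$, where I set $\hat E := \min\{E', E_*\}$ and $E_*$ is the energy supplied by Theorem~\ref{t:dynloc}. The space $\Omega_{D'} = \{0,1\}^{D'}$ is compact and metrisable in its product topology, hence a complete metric space --- for instance via $d(\omega,\omega') := \sum_{n\in\NN} 2^{-n}|\omega_{\gamma_n} - \omega'_{\gamma_n}|$ for an enumeration $(\gamma_n)_{n\in\NN}$ of the countable set $D'$. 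The two tasks will then be to verify continuity of $\omega \mapsto H_{D''^\omega}$ in the strong resolvent topology $\tau_{srs}$, and to show that each of the three sets (i)--(iii) of Theorem~\ref{t:wond} is dense in $\Omega_{D'}$.

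Continuity will be a soft argument. All $H_{D''^\omega}$ share the domain $H^2(\RR^d)$ with $C_c^\infty(\RR^d)$ as a common core, and their potentials are uniformly bounded. For $\omega_n \to \omega$ in $\Omega_{D'}$, any $\varphi \in \Lp{\RR^d}$, and $\psi := (H_{D''^\omega}+i)^{-1}\varphi$, the second resolvent identity gives
\be
	\norm{(H_{D''^{\omega_n}}+i)^{-1}\varphi - (H_{D''^\omega}+i)^{-1}\varphi} \le \norm{(V_{D'^{\omega_n}} - V_{D'^\omega})\psi}.
\ee
Choosing a compact $K$ with $\norm{\psi\,\Chi_{K^c}} < \varepsilon$ and observing that the finite family of translates $u(\cdot-\gamma)$ with $\gamma\in D'$ that overlap $K$ eventually agree under $\omega_n$ and $\omega$ by product-topology convergence, $V_{D'^{\omega_n}} - V_{D'^\omega}$ is supported in $K^c$ for $n$ large, and the uniform bound $\norm{u}_\infty$ then yields strong resolvent convergence.

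For density, each of the three sets will use a different ingredient. \textbf{Set (i)} (no pure point spectrum in $U$): the configurations of finite support are dense in $\Omega_{D'}$, since any basic cylinder around any $\omega$ contains the truncation of $\omega$ that vanishes outside a finite set. For such a finite-support configuration the random potential $V_{D'^\omega}$ reduces to a compactly supported finite sum $\tilde V$, and hypothesis (ii) forces $H_D + \tilde V$ to be purely absolutely continuous on $[E_0,E'] \supseteq U$. \textbf{Set (ii)} (no absolutely continuous spectrum in $U$): the event $\Omega_{MSA}$ of Theorem~\ref{t:dynloc} has full $\PP_{D'}$-measure and gives pure point spectrum in $[E_0,E_*] \supseteq U$; since $\beta\in(0,1)$, the Bernoulli product measure has full topological support on $\Omega_{D'}$, so any full-measure subset is automatically dense. \textbf{Set (iii)} ($U \subseteq \sigma(H_\omega)$): hypothesis (i) yields $[E_0,E'] \subseteq \sigma(H_D)$, which combined with the IPR hypothesis on $D$ and Theorem~\ref{t:asspec} gives $[E_0,E'] \subseteq \sigma(H_{D''^\omega})$ for $\PP_{D'}$-a.e.\ $\omega$, hence for a dense set by the same full-support argument.

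The wonderland theorem then produces the desired dense $G_\delta$-set $\Omega_{sc}$. The main obstacle is the density of set (i), which is the sole reason hypothesis (ii) appears in the statement: without it, activating even a single $\omega_\gamma=1$ in a finite-support approximation could introduce a bound state in $[E_0,E']$ and destroy density. By contrast, continuity is automatic, density of set (ii) is inherited from the multiscale analysis of Section~\ref{s:dl}, and density of set (iii) leans only on the almost-sure spectrum result of the appendix together with the IPR property of $D$.
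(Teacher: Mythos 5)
Your proposal is correct and follows essentially the same route as the paper: apply the wonderland theorem with $\hat E := \min\{E',E_*\}$, use finite-support configurations together with hypothesis~(ii) for density of the no-point-spectrum set, $\Omega_{MSA}$ together with full topological support of the Bernoulli product measure for the no-ac-spectrum set, and Theorem~\ref{t:asspec} with the IPR property for the spectrum-containment set. The only (cosmetic) difference is that the paper cites \cite[Eqs.\ (2.28)--(2.32)]{GMRM} for continuity of $\omega\mapsto H_{D''^\omega}$ in $\tau_{srs}$, whereas you spell out the resolvent-identity argument directly; both are valid.
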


The theorem is proven below. It immediately implies

\begin{cor}
 	Given the assumptions of Theorem~\ref{t:meagre}, the event $\Omega_{pp}$ of $\omega\in \Omega_{D'}$ for which $H_{D''^\om}$ has non-empty pure point spectrum in $(E_0,\hat E)$ is a meagre subset of $\Omega_{D'}$.
	In particular, the events that $H_{D''^\om}$ exhibits Anderson or dynamcial localisation in $(E_0,\hat E)$ are meagre.
\end{cor}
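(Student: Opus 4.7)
The plan is to derive the corollary as a direct, essentially formal consequence of Theorem~\ref{t:meagre} using the Baire category theorem, once the relevant topological prerequisites are noted. First, I would record that $\Omega_{D'}=\{0,1\}^{D'}$, endowed with its product topology, is compact metrisable and hence a complete metric space (an explicit compatible ultrametric on the countable index set $D'$ exists). By Theorem~\ref{t:meagre}, the set $\Omega_{sc}$ of $\omega$ yielding purely singular continuous spectrum of $H_{D''^\om}$ in $(E_0,\hat E)$ is a dense $G_\delta$ subset of $\Omega_{D'}$, so its complement $\Omega_{D'}\setminus\Omega_{sc}$ is meagre.

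Next, I would use the mutual disjointness of the pure point and singular continuous spectral parts: for every $\omega\in\Omega_{sc}$, the spectral measure of $H_{D''^\om}$ restricted to $(E_0,\hat E)$ is purely singular continuous, so the pure point spectrum of $H_{D''^\om}$ in $(E_0,\hat E)$ is empty, i.e.\ $\omega\notin\Omega_{pp}$. Hence $\Omega_{pp}\subseteq \Omega_{D'}\setminus\Omega_{sc}$, and since every subset of a meagre set is meagre in a complete metric space, $\Omega_{pp}$ itself is meagre.

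To obtain the second assertion, I would unwind the definitions recalled in Section~\ref{s:dl}. Anderson localisation of $H_{D''^\om}$ in the non-empty open interval $(E_0,\hat E)$ is defined as dense pure point spectrum there together with exponential decay of the corresponding eigenfunctions; since $(E_0,\hat E)$ is non-empty, dense pure point spectrum in $(E_0,\hat E)$ yields in particular non-empty pure point spectrum in $(E_0,\hat E)$, so the event of Anderson localisation is contained in $\Omega_{pp}$. Dynamical localisation in $(E_0,\hat E)$, in the sense of \eqref{eq:loc-def}, implies Anderson localisation in the same interval by standard arguments (exponential decay of spectral projections forces pure point spectrum with exponentially decaying eigenfunctions), so the event of dynamical localisation in $(E_0,\hat E)$ is also contained in $\Omega_{pp}$. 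Both events are therefore subsets of a meagre set and hence meagre.

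I do not expect any substantial obstacle: the corollary is purely formal once Theorem~\ref{t:meagre} is granted, and the only things to verify are the completeness of $\Omega_{D'}$, the elementary spectral-theoretic disjointness $\sigma_{pp}\cap(E_0,\hat E)=\emptyset$ on $\Omega_{sc}$, and the chain of implications from dynamical to Anderson localisation to non-empty pure point spectrum. The mildly delicate point worth flagging is simply to ensure that $(E_0,\hat E)$ is non-empty (i.e.\ $\hat E>E_0$), which is guaranteed by the conclusion of Theorem~\ref{t:meagre} through the inclusion $(E_0,\hat E)\subset[E_0,E']$ and the non-triviality of the statement.
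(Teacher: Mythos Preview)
Your proposal is correct and matches the paper's treatment: the paper states the corollary as an immediate consequence of Theorem~\ref{t:meagre} without giving a separate proof, and your argument (complement of a dense $G_\delta$ in a complete metric space is meagre, $\Omega_{pp}$ sits inside this complement, and the localisation events sit inside $\Omega_{pp}$) is exactly the intended deduction. The only trivial over-qualification is that subsets of meagre sets are meagre in any topological space, not just complete metric ones.
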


\begin{rem}\label{r:wond}
 	Both assumptions (i) and (ii) in the above theorem hold in the absence of a background potential, $V_{D}=0$.
	The same is true for certain periodic background potentials. For more general periodic $V_{D}$, this is conjectured, see \cite{KuV} and references therein.
	In the discrete setting, this has recently been proved in \cite{LiuOng, Liu}.
	For Delone background potentials $V_{D}$, even the implication ``(i) $\Rightarrow$ (ii)'' is not known, but it should hold under reasonable assumptions.
 \end{rem}

In view of the previous remark, we now set $D=\emptyset$ (not a Delone set!) and $D'=\ZZ^d$.
Theorem \ref{t:meagre}, which also holds in this case, then extends a result obtained by Simon \cite{Si}
for certain regular random potentials to the case of random potentials with singular probability distributions.
We formulate this as

\begin{cor}
	There exists $\hat{E} >0$ such that the continuum Bernoulli--Anderson model, as given by
	Definition~\ref{def:delberop} with $D=\emptyset$ and $D'=\ZZ^d$,
	generically exhibits singular continuous spectrum in the interval $[0, \hat{E}]$
	above the bottom of the spectrum.
\end{cor}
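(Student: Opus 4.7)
The strategy is to apply Theorem~\ref{t:meagre} with $D=\emptyset$ and $D'=\ZZ^{d}$, using the natural interpretations $V_{\emptyset}:=0$ and $H_{\emptyset}:=-\Delta$, together with the Bernoulli parameter $\beta$ in the range dictated by Theorem~\ref{t:dynloc} (any $\beta\in(0,1)$ if $d\ge 2$, sufficiently small $\beta$ if $d=1$). Although $\emptyset$ is not itself a Delone set, this causes no harm: the IPR property of Definition~\ref{d:supf} is satisfied vacuously (since $\emptyset\cap K=\emptyset$ for every compact $K$), the regularity $u\in\cC_{c}^{1}(\RR^{d})$ is part of the hypotheses of the corollary, and each step in the proof of Theorem~\ref{t:meagre} either goes through unchanged or simplifies in the absence of a background potential.

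The verification of hypotheses (i) and (ii) of Theorem~\ref{t:meagre} is essentially the content of the first sentence of Remark~\ref{r:wond}. For (i): $H_{D}=-\Delta$ has purely absolutely continuous spectrum $[0,\infty)$ and $E_{0}=\inf\sigma(-\Delta)=0$, so (i) is immediate for any $E'>0$. For (ii): given a compact $K\subset\RR^{d}$, the set $\ZZ^{d}\cap K$ is finite, whence $\tilde V=\sum_{\gamma\in\ZZ^{d}\cap K}u(\,\pmb\cdot\,-\gamma)$ is a smooth, compactly supported function, and $-\Delta+\tilde V$ is a short-range perturbation of the Laplacian. By the Agmon--Kato--Kuroda theorem its spectrum on $(0,\infty)$ is purely absolutely continuous, and Kato's theorem on absence of positive eigenvalues for compactly supported potentials (alternatively, unique continuation) rules out eigenvalues in $(0,\infty)$. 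Hence $-\Delta+\tilde V$ has purely absolutely continuous spectrum on $[0,E']$ for every sufficiently small $E'>0$, verifying~(ii).

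With (i) and (ii) in place, Theorem~\ref{t:meagre} immediately produces an energy $\hat E>0$ and a dense $G_{\delta}$-subset $\Omega_{sc}\subset\Omega_{D'}=\{0,1\}^{\ZZ^{d}}$ such that for every $\omega\in\Omega_{sc}$ the Bernoulli--Anderson operator $H_{D''^{\omega}}=-\Delta+V_{D'^{\omega}}$ has purely singular continuous spectrum in $(0,\hat E)$. This is precisely the genericity claim. The only genuinely analytic step is the short-range scattering argument for~(ii); everything else is a direct invocation of Theorem~\ref{t:meagre}, which in turn rests on the combination of Simon's wonderland theorem with the Bernoulli--Anderson dynamical localisation provided by Theorem~\ref{t:dynloc}.
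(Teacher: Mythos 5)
Your proposal is correct and follows exactly the route the paper intends: the paper simply states (in the sentence preceding the corollary) that Theorem~\ref{t:meagre} ``also holds in this case'' with $D=\emptyset$, $D'=\ZZ^d$, and Remark~\ref{r:wond} is cited for the fact that hypotheses (i) and (ii) of that theorem hold in the absence of a background potential. You correctly flesh out what Remark~\ref{r:wond} leaves implicit, namely that (i) is the standard fact $\sigma(-\Delta)=[0,\infty)$ is purely absolutely continuous and (ii) follows from short-range scattering theory (Agmon--Kato--Kuroda plus absence of positive eigenvalues for compactly supported perturbations), together with the observation that the IPR hypothesis reduces to a triviality when $D=\emptyset$.
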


\begin{proof}[Proof of Theorem~\ref{t:meagre}]
 Consider $X=\Omega_{D'}$ endowed with the metric
 	\begin{equation}
		\label{om-met}
 		\dist_{D'}(\omega,\omega') := \sum_{\gamma\in D'} 2^{-\|\gamma\|} |\omega_{\gamma} - \omega'_{\gamma}|
	\end{equation}
 	for $\omega,\omega'\in\Omega_{D'}$. This metric is compatible with the product topology of $\Omega_{D'}$ and
	renders $\Omega_{D'}$ a complete metric space.
	The mapping $H:(\Omega_{D'},\dist_{D'})\rightarrow (\cS,\tau_{srs})$, given by $H_{\omega}
	:= H_{D''^{\omega}}: =-\Delta+
		V_D+ \sum_{\gamma\in D'}\om_\gamma u(\,\cdot\,-\gamma)$, is continuous by the same arguments as in \cite[Eqs.\ (2.28) -- (2.32)]{GMRM}.
		
	The fact that the interval $[E_0,E']$ is contained in the spectrum of $H_D$ and the IPR property imply by Theorem \ref{t:asspec} that $[E_0,E']$ is also contained in the spectrum of $H_{\omega}$ for $\om$ in a set of full probability in $\Omega_{D'}$, which is therefore dense in $\Omega_{D'}$ by Lemma~\ref{l:densesupp}.
	
	The configurations $\om$ that are supported on a compact set, that is, $\om_\gamma\neq 0$ for at most finitely many $\gamma$, are dense in $\Omega_{D'}$. Moreover, the operators $H_{\omega}$ associated to these $\om$ exhibit purely absolutely continuous spectrum in the interval $[E_0,E']$ by hypothesis (ii). Thus, they do not have any pure point spectrum in this interval.
	
	Next, by Theorem~\ref{t:dynloc}, there exists an energy $E_*$ such that
$H_{\omega}$ exhibits localisation in $[E_0,E_*]$ for every $\omega\in\Omega_{MSA}$ with $\PP(\Omega_{MSA}) =1$. Thus, $\Omega_{MSA}$ is dense in $\Omega_{D'}$ by  Lemma~\ref{l:densesupp}.
Since the event that $H_{\omega}$ exhibits only pure point spectrum in $[E_0,E_*]$ -- and therefore no absolutely continuous spectrum there -- contains $\Omega_{MSA}$, this event is also dense in $\Omega_{D'}$.
Therefore, we can apply Theorem \ref{t:wond} in the interval $(E_0,\hat E)$, with $\hat E: =\min \{E',E_*\}>E_0$ and obtain that the event  $\{\om\in\Omega_{D'}: \sigma(H_{\omega}) \cap (E_0,\hat E)=\sigma_{sc}(H_{\omega})\}$ is a dense $G_\delta$-set.
\end{proof}

\begin{rem}
\label{random-displ-m}
An analogous result as in Theorem \ref{t:meagre} can be obtained for the Random Displacement Model,
see (ii) after Definition~\ref{d:conv} in the introduction.
In the proof one has to replace the localisation statement of Theorem~\ref{t:dynloc} by the one in \cite{KLNS10}.
We remark that the proof of localisation for this model \cite{KLNS10} relies heavily on the underlying lattice
structure and therefore cannot be extended to the more general Delone setting we consider here.
\end{rem}

Using the multiscale analysis \emph{directly}, we obtain an alternative theorem, which gives a weaker statement
than Theorem \ref{t:meagre} but holds under more general assumptions.

\begin{thm}\label{t:md}
	Let $E_{*}$ and $\Omega_{MSA}$ be given by Theorem~\ref{t:dynloc}.
 	Assume that $H_{D}$ exhibits purely absolutely continuous spectrum
	in the interval $[E_0,E_{*}]$.
	Then, the event $\Omega_{MSA}$ is a meagre dense set in $\Omega_{D'}$.
\end{thm}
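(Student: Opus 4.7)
The plan is to establish density and meagreness of $\Omega_{MSA}$ separately. Density is essentially a restatement of Theorem~\ref{t:dynloc}: since $\PP_{D'}(\Omega_{MSA})=1$ and the Bernoulli product measure $\PP_{D'}$ with parameter $\beta\in(0,1)$ has full topological support on $\Omega_{D'}$, density of $\Omega_{MSA}$ follows from Lemma~\ref{l:densesupp}, exactly as in the proof of Theorem~\ref{t:meagre}.

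For meagreness, I would first extract from the multiscale machinery (Appendix~\ref{A:meagMSA}) a description of $\Omega_{MSA}$ as an $F_{\sigma}$-set of a very specific kind,
\begin{equation*}
	\Omega_{MSA} \,=\, \bigcup_{k_{0}\in\NN} F_{k_{0}},
\end{equation*}
where each $F_{k_{0}}$ asserts that, from some scale $L_{k_{0}}$ onwards, every box in a prescribed deterministic family of centres and scales is $(\omega,E,m,\zeta)$-good in the sense of Definition~\ref{def:gbox} for every $E$ in an appropriate region. Since the goodness of a single box $\Lambda_{L}(x)$ depends only on the finitely many random variables $\{\omega_{\gamma}:\gamma\in D'\cap\Lambda_{L}(x)\}$, each single-box event is clopen in the product topology of $\Omega_{D'}$; hence each $F_{k_{0}}$ is a countable intersection of clopen sets and is therefore closed.

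It remains to show that each $F_{k_{0}}$ has empty interior, which by the Baire classification then renders $\Omega_{MSA}$ meagre. Let $C_{F,\eta}:=\{\omega\in\Omega_{D'}:\omega|_{F}=\eta\}$ be an arbitrary basic open set, with $F\subset D'$ finite and $\eta\in\{0,1\}^{F}$. Define the compactly supported configuration $\omega\in C_{F,\eta}$ by $\omega_{\gamma}=\eta_{\gamma}$ for $\gamma\in F$ and $\omega_{\gamma}=0$ for $\gamma\in D'\setminus F$. Then $V_{D'^{\omega}}$ is a bounded, compactly supported perturbation of $H_{D}$. Writing $R_{\omega}(i)-R_{D}(i)=-R_{\omega}(i)V_{D'^{\omega}}R_{D}(i)$ via the resolvent identity and using the standard local Hilbert--Schmidt estimate for Schr\"odinger resolvents with bounded background, this difference is trace class. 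The Kato--Rosenblum theorem therefore yields that the absolutely continuous parts of $H_{\omega}$ and $H_{D}$ are unitarily equivalent, so by hypothesis $\sigma_{ac}(H_{\omega})\supseteq[E_{0},E_{*}]\cap\sigma(H_{D})\neq\emptyset$. On the other hand, any $\omega\in\Omega_{MSA}$ exhibits dynamical localisation in $[E_{0},E_{*}]$, which precludes any absolutely continuous spectrum in that interval. Hence this particular $\omega$ lies in $C_{F,\eta}$ but not in $\Omega_{MSA}\supseteq F_{k_{0}}$, so $C_{F,\eta}\not\subseteq F_{k_{0}}$, and $F_{k_{0}}$ is nowhere dense.

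The main obstacle is the structural step: phrasing the event produced by the MSA precisely as a countable union of closed cylinder-event intersections and verifying that this union coincides with (or is contained in) the event on which dynamical localisation was established in Theorem~\ref{t:dynloc}. This is exactly what Appendix~\ref{A:meagMSA} is designed to provide. The remaining ingredient, trace-class preservation of absolutely continuous spectrum under a bounded, compactly supported perturbation, is a classical consequence of Kato--Rosenblum once the local Hilbert--Schmidt bound is in hand.
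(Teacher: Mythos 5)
Your overall strategy -- density from full measure plus full topological support, meagreness from writing the localisation event inside an $F_{\sigma}$-set whose closed pieces have empty interior because compactly supported perturbations of $H_{D}$ retain absolutely continuous spectrum -- is the same as the paper's. The paper works with the explicit decomposition $\Omega_{MSA}\subseteq\cU_\infty=\bigcup_{n}\bigcap_{k\geq n}\cU_{0,L_k}$, proves closedness of the finite intersections via Lemma~\ref{l:closed} (which unwinds the structure of the events $\cU_{x_0,L}$ and ultimately reduces to closedness of a good-box event in $(E,\omega)$), and proves empty interior by enclosing $\cU_\infty$ in the event $\Omega_{pp}$ of purely pure-point spectrum and exhibiting a compactly supported $\omega'$ in any basic open set with $H_{\omega'}\notin\Omega_{pp}$.

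Two points where your write-up differs or has a gap. First, your closedness argument (each $\cU_{x_0,L}\in\cF_{\Lambda_{L_+}(x_0)}$ depends on finitely many Bernoulli coordinates, hence is clopen in $\{0,1\}^{D'}$) is correct and genuinely simpler than Lemma~\ref{l:closed} for the Bernoulli case, but it does not extend to the case of non-atomic single-site distributions (the paper's Lemma~\ref{l:closed} is phrased more laboriously precisely to cover the H\"older continuous case mentioned in the remark following Theorem~\ref{t:md}). Also, you only need the containment $\Omega_{MSA}\subseteq\bigcup_{k_0}F_{k_0}$, not the equality you assert; containment is what the paper uses and is all that is justified by the construction in \cite{GKuniv}. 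Second, the step ``$R_{\omega}(i)-R_{D}(i)$ is trace class by the local Hilbert--Schmidt estimate'' fails for $d\geq 4$: $\Chi_{K}R_{D}(i)$ is Hilbert--Schmidt only for $d\leq 3$. The conclusion is still correct -- one passes to resolvent powers $(R_{\omega}(i))^{n}-(R_{D}(i))^{n}$ with $n>d/2$, which is trace class, and invokes the Kuroda--Birman invariance principle rather than Kato--Rosenblum directly -- so this is a repairable lacuna, but as stated your argument is incomplete in higher dimensions. The paper sidesteps this by simply asserting the stability of absolutely continuous spectrum under compactly supported bounded perturbations, which is the same classical fact you are aiming at.
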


\begin{proof}
	Density of $\Omega_{MSA}$ comes from the fact that $\mathbb P(\Omega_{MSA})=1$ and Lemma \ref{l:densesupp}.
	It remains to show meagreness.
	The event $\Omega_{MSA}$ is constructed in \cite[Thm.\ 7.2]{GKuniv}, using the output of the
	multiscale analysis \cite[Thm.\ 6.1]{GKuniv}.
	It is contained in the event $\cU_\infty$ of $\omega\in\Omega$ for which $H_{D''^\om}$ exhibits
	Anderson localisation, see \cite[Thm.\ 7.1]{GKuniv} for the definition or Theorem~\ref{t:Uinfty}.
	Thus, the claim follows from the stronger statement that $\cU_\infty$ is a meagre set in $\Omega_{D'}$.

	In order to see that $\cU_\infty$ is meagre, that is, a countable union of nowhere dense sets, we appeal to
	\eqref{uinfty} and show that each of the sets $\bigcap_{k=n}^{\infty}\cU_{0,L_k}$, $n\in\NN$, is closed
	in $\Omega_{D'}$ and has empty interior. Closedness follows from Lemma~\ref{l:closed} in the appendix.
	To show empty interior we consider the even bigger event
	\be
		\Omega_{pp} := \big\{ \omega\in\Omega: H_\om \mbox{ exhibits only pure point spectrum in }
			[E_{0}, E_{*}]\big\}.
	\ee
	This event has empty interior in the product topology of $\Omega_{D'}$: indeed, take
	$\omega\in\Omega_{pp}$ and an arbitrary neighbourhood of $\omega$ in $\Omega_{D'}$. Then, this
	neighbourhood contains an element $\omega' = (\omega'_\gamma)_{\gamma\in D'}$ with $\omega'_{\gamma}=0$
	for all $\gamma$ outside a compact set of $\RR^{d}$. The operator $H_{\omega'}$ has absolutely continuous
	spectrum in $[E_{0},E_{*}]$ because it is a perturbation of $H_D$ by a compactly supported potential.
	Therefore, $\omega'\notin \Omega_{pp}$, which shows that $\Omega_{pp}$ has empty interior.
	Thus, for every $n\in\NN$, $\bigcap_{k=n}^{\infty}\cU_{0,L_{k}} \subseteq \cU_\infty\subseteq
	\Omega_{pp}$ has empty interior in the product topology of $\Omega_{D'}$.
\end{proof}

\begin{rem}
	We note that Theorem \ref{t:md} also holds for the case of Delone--Anderson operators with H\"older-continuous probability distributions. To see this, it is enough to mention that  the steps outlined in Appendix \ref{A:meagMSA} hold for all non-degenerate probability distributions.
\end{rem}

\begin{appendix}
\section{On the spectral infimum of Delone--Anderson operators}\label{specinf}

%

Here, we consider a generalisation of the Delone--Bernoulli operator $H_{\omega} :=H_{D''\omega}$
from Definition \ref{def:delberop} to the case of an arbitrary single-site distribution $\mu$ with
support in more than one point of $\RR$. We refer to this generalisation as the \emph{Delone--Anderson operator}.

\begin{thm}\label{t:asspec}
  Let $D$ and $D'$ be Delone sets, with $D$ satisfying the IPR property. Let
  $\Omega_{D'} \ni \omega \mapsto H_{\omega}$ be a Delone--Anderson operator with $0\in\supp(\mu)$.
 	Then, there exists $\widetilde \Omega_{D'}\subseteq \Omega_{D'}$ with $\P_{D'}(\widetilde \Omega_{D'})=1$
	such that
 	\be
		\sigma(H_D)\subseteq \sigma(H_{\omega}) \quad\mbox{for all }\, \om\in\widetilde \Omega_{D'}.
	\ee
	Moreover, $\sigma(H_\omega) \cap \sigma(H_{D}) = \sigma_{ess}(H_{\omega}) \cap \sigma(H_{D})$
	for all $\omega\in\Omega_{D'}$.
\end{thm}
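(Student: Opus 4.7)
The two assertions call for different tools: a Weyl sequence / Borel--Cantelli argument for the first, and a deterministic essential-spectrum argument using IPR for the second.

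\textbf{Part 1 (measure-one inclusion $\sigma(H_D)\subseteq\sigma(H_\omega)$).} The idea is to translate compactly supported approximate eigenfunctions of $H_D$ into infinitely many disjoint windows produced by IPR, and then use Borel--Cantelli on the coupling constants living inside these windows. First I pick a countable dense subset $\{E_j\}_{j\in\NN}\subseteq\sigma(H_D)$: by closedness of $\sigma(H_\omega)$ it suffices to show $\PP(E_j\in\sigma(H_\omega))=1$ for each $j$ and to take a countable intersection. Given $j$ and $\varepsilon>0$, the Weyl criterion supplies a compactly supported $\phi$ with $\|\phi\|=1$ and $\|(H_D-E_j)\phi\|<\varepsilon/2$. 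Setting $\tilde K:=\supp\phi+\Lambda_{\delta_+}(0)$, the values of $V_D$ and of any $V_{D'^\omega}$ on $\supp\phi$ depend only on the atoms in $\tilde K$. IPR now furnishes a sequence $(y_\alpha)_{\alpha\in\NN}$ with $y_\alpha\to\infty$, $D\cap(y_\alpha+\tilde K)=y_\alpha+(D\cap\tilde K)$, and (after spacing) pairwise disjoint windows $(y_\alpha+\tilde K)_\alpha$; uniform discreteness of $D'$ then caps $|D'\cap(y_\alpha+\tilde K)|\le N$ independently of $\alpha$. Because $0\in\supp\mu$, the events
\begin{equation*}
B_\alpha:=\{\omega : |\omega_\gamma|<\tfrac{\varepsilon}{2N\|u\|_\infty}\text{ for every }\gamma\in D'\cap(y_\alpha+\tilde K)\}
\end{equation*}
have a uniformly positive probability, and they are mutually independent by disjointness of the windows. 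Borel--Cantelli yields $\PP(\limsup_\alpha B_\alpha)=1$, and on this event there is some $\alpha_0$ with $B_{\alpha_0}$ satisfied, for which $\phi(\,\cdot\,-y_{\alpha_0})$ is still an $\varepsilon/2$-approximate eigenfunction of $H_D$ (IPR preserves the action of $V_D$) and $\|V_{D'^\omega}\phi(\,\cdot\,-y_{\alpha_0})\|<\varepsilon/2$; hence $\|(H_\omega-E_j)\phi(\,\cdot\,-y_{\alpha_0})\|<\varepsilon$, so $E_j\in\sigma(H_\omega)$. A countable intersection over $(j,1/n)$ gives $\widetilde\Omega_{D'}$.

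\textbf{Part 2 (deterministic statement for all $\omega$).} Writing $\sigma(H_\omega)=\sigma_{ess}(H_\omega)\sqcup\sigma_d(H_\omega)$, the identity to prove is equivalent to $\sigma_d(H_\omega)\cap\sigma(H_D)=\emptyset$ for every $\omega\in\Omega_{D'}$. The plan starts from the observation that IPR alone, without any randomness, forces $\sigma(H_D)=\sigma_{ess}(H_D)$: translating a compactly supported Weyl sequence of $H_D$ along an IPR-compatible sequence $y_\alpha\to\infty$ produces a singular (weakly null) Weyl sequence. Next, suppose toward contradiction that $E\in\sigma_d(H_\omega)\cap\sigma(H_D)$ for some $\omega$. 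Then $W:=\ker(H_\omega-E)$ is finite-dimensional and every element of $W$ decays exponentially by Combes--Thomas (since $E$ is isolated in $\sigma(H_\omega)$). Let $c>0$ denote the gap, so that $\|(H_\omega-E)\psi\|\ge c\|\psi\|$ on $W^\perp\cap D(H_\omega)$. Taking the IPR-translated Weyl sequence $(\phi_\alpha)$ of $H_D$ at $E$ with pairwise disjoint compact supports tending to infinity, exponential decay of eigenfunctions gives $\|P_W\phi_\alpha\|\to 0$, so the orthogonal components $\tilde\phi_\alpha:=\phi_\alpha-P_W\phi_\alpha\in W^\perp$ satisfy $\|\tilde\phi_\alpha\|\to1$ together with $(H_\omega-E)\tilde\phi_\alpha=(H_D-E)\phi_\alpha+V_{D'^\omega}\phi_\alpha$; the spectral gap then forces
\begin{equation*}
c\le\limsup_\alpha\bigl(\|(H_D-E)\phi_\alpha\|+\|V_{D'^\omega}\phi_\alpha\|\bigr).
\end{equation*}
The first term tends to $0$ by the Weyl construction, so the hard part is to refine the IPR-compatible translates so that $\|V_{D'^\omega}\phi_\alpha\|\to 0$ as well. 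Since $\supp\mu$ is bounded, the finite configurations $\omega|_{D'\cap(y_\alpha+\tilde K)}$ live in a compact space, so a concentration-compactness extraction yields a subsequence along which they converge to a limit configuration $\omega^{(\infty)}$; combined with IPR of $D$ this identifies a limit operator whose spectrum lies in $\sigma_{ess}(H_\omega)$ and contains $E$, delivering the contradiction $c\le 0$. This limit-operator step in the non-ergodic setting is the main technical obstacle: the entire deterministic content of the second assertion hinges on it.
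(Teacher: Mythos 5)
Your Part 1 is essentially the proof the paper intends: the paper only sketches a Weyl-sequence argument (``the IPR property ensures that the members of a Weyl sequence can be chosen to be mutually orthogonal'') and refers to \cite[Prop.~3.2]{RM13} for the discrete-model analogue. Your fleshed-out version — compactly supported approximate eigenfunctions, IPR-translated windows $y_\alpha+\tilde K$ that are eventually disjoint, independence of the events $B_\alpha$, Borel--Cantelli, and a countable intersection over a dense set of energies with $\varepsilon=1/n$ — is correct and matches this plan. One small improvement you leave implicit: with Borel--Cantelli II you actually get \emph{infinitely many} translates $y_\alpha$ on which $\omega$ is nearly zero, so the resulting Weyl sequence has mutually disjoint supports and is therefore singular, giving $\sigma(H_D)\subseteq\sigma_{ess}(H_\omega)$ on $\widetilde\Omega_{D'}$; this immediately yields the ``moreover'' identity \emph{for $\omega\in\widetilde\Omega_{D'}$}, not just for $\omega\in\Omega_{D'}$ a.e.

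Your Part 2, by contrast, has a genuine gap, and you already flag it as the ``main technical obstacle''; the flag is justified. The setup (finite-dimensional $W=\ker(H_\omega-E)$, exponential decay of its elements, $\|P_W\phi_\alpha\|\to 0$, the gap inequality $c\le\limsup_\alpha(\|(H_D-E)\phi_\alpha\|+\|V_{D'^\omega}\phi_\alpha\|)$) is fine. But the step that is needed to close the argument, namely $\|V_{D'^\omega}\phi_\alpha\|\to 0$ along some IPR subsequence, cannot be proved in the way you propose, and the proposed ``limit operator'' patch does not repair it. Concretely: the finite configurations $(D'\cap(y_\alpha+\tilde K)-y_\alpha,\;\omega|_{D'\cap(y_\alpha+\tilde K)})$ do live in a compact space, and along a subsequence the translated operators do converge (locally) to a limit $H^{(\infty)}=H_D+V^{(\infty)}$ with $\sigma(H^{(\infty)})\subset\sigma_{ess}(H_\omega)$; that part of the heuristic is sound. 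What is missing is any reason that $E\in\sigma(H^{(\infty)})$. Your $\phi$ is only an approximate eigenfunction of $H_D$, not of $H_D+V^{(\infty)}$, and $\|V^{(\infty)}\phi\|$ is in general of order $\|u\|_\infty$ — there is no smallness. For instance with $\omega\equiv 1$ (or $\omega\equiv 1$ outside a large ball and $\omega\equiv 0$ inside) the limit configuration is trivial and $V^{(\infty)}$ is simply the (translated) full potential $V_{D'}$ restricted to $\tilde K$, so $H^{(\infty)}$ locally resembles $H_{D''}$ and there is no reason to find $E\in\sigma(H_D)\setminus\sigma(H_{D''})$ in its spectrum. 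Thus ``delivering the contradiction $c\le 0$'' does not follow from your argument, and the deterministic ``for all $\omega$'' statement is left unproven. The paper itself gives no more than a sketch for this clause either (it appeals only to the discrete analogue in \cite{RM13}), so if you want to complete the argument you will need a genuinely new idea rather than the concentration-compactness extraction you describe.
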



The proof of Theorem~\ref{t:asspec} relies on a Weyl-sequence argument. Among others, the IPR property ensures
that the members of a Weyl sequence can be chosen to be mutually orthogonal.
This argument works because $D\cup D'$, despite not necessarily being a Delone set, has no accumulation points, and therefore domain problems cannot arise.
We refer to \cite[Prop.\ 3.2]{RM13}
 for an analogous proof in the discrete setting.

\section{The quantitative unique continuation principle in dimension one}\label{A:QUCP}
In this section we recall an intermediate result obtained in the proof of \cite[Lemma 5]{KV02} (see also \cite{V96}) which implies a quantitative unique continuation principle for eigenfunctions of Schr\"odinger operators in dimension one. The latter is similar to \cite[Thm.\ 2.1 and Cor.\ 2.2]{RMV12}, but with a slightly different constant $C_{UC}$. The proof of \cite[Lemma 5]{KV02} can be extended to the setting of Delone operators since the arguments do not rely on the periodicity of the underlying structure of the potential. Due to this fact, this result has been extended to non translation invariant settings, for example, to metric graphs, see \cite{HV07,GHV08}.

\begin{lem}\label{l:lemkv}\cite[Lemma 5]{KV02} Let $0<s<L$ be fixed, and consider the interval
	$\L_L(x)=(x-L/2,x+L/2)$, with $x\in\RR$. Let $H=-\Delta+V$ be a Schr\"odinger operator with bounded potential $V$, and consider its finite-volume restriction $H_{x,L}=-\Delta_{x,L}+V_{x,L}$ to $\Lp{\L_L(x)}$ with Dirichlet boundary conditions. Let $\varphi$ be an eigenfunction of $H_{x,L}$ with eigenvalue $E\in\RR$. Let $k\in\L_L(x)$ be such that $\L_s(k)\subset \L_L(x)$. Then, there exists a constant $C_{s,V,E}>0$, such that for any $y \in\L_{L}(x)$ with $\L_s(k+y)\subset \L_L(x)$ we have
\be\label{lemkv} \norm{\varphi}^2_{\L_s(k+y)}\leq e^{ C_{s,V,E}\abs{y}}\norm{\varphi}^2_{\L_s(k)},
 \ee
where $\norm{\varphi}_{\L}$ denotes the $L^{2}$-norm of $\varphi$ restricted to $\L$. The constant $C_{s,V,E}>0$ is known explicitly
\be\label{cucp} C_{s,V,E}=2{\left( \frac{2\cdot 18^2}{s^2}+ 2s^2 \norm{V-E}^2_\infty\right)^{1/2}}. \ee
\end{lem}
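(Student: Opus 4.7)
The plan is to prove \eqref{lemkv} by a Gronwall argument applied to a sliding $L^{2}$-norm of the eigenfunction $\varphi$. The argument uses only that $\varphi$ satisfies the second-order ODE $\varphi''=(V-E)\varphi$ pointwise and that $V$ is bounded; no structural property of $V$ is invoked, which is precisely what makes the proof transfer without change from the translation-invariant setting of \cite{KV02} to the Delone setting considered here.

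First, I would introduce the sliding squared $L^{2}$-norm $F(t):=\|\varphi\|^{2}_{\Lambda_{s}(k+t)}$, defined for all real $t$ such that $\Lambda_{s}(k+t)\subset \Lambda_{L}(x)$. Direct differentiation in $t$ gives
$$F'(t)=|\varphi(k+t+s/2)|^{2}-|\varphi(k+t-s/2)|^{2},$$
so the task reduces to bounding the pointwise values $|\varphi(\tau)|^{2}$ at the endpoints of the sliding interval by a multiple of $F(t)$, possibly after a slight enlargement of the window.

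Next, I would obtain such a pointwise bound in two substeps. Substep (a): the one-dimensional Sobolev embedding on an interval $J$ of length $s$ gives $\|\varphi\|^{2}_{L^{\infty}(J)}\le \tfrac{2}{s}\|\varphi\|^{2}_{L^{2}(J)}+2s\|\varphi'\|^{2}_{L^{2}(J)}$. Substep (b): to eliminate $\|\varphi'\|^{2}_{L^{2}(J)}$, I would test the eigenvalue equation against $\chi^{2}\varphi$, where $\chi$ is a piecewise-affine cut-off supported in a slightly larger interval $J^{*}\supset J$, integrate by parts, and apply Cauchy--Schwarz with appropriate weights. This yields
$$\int_{J}|\varphi'|^{2}\,d\tau\le \Bigl(\tfrac{c_{1}}{s^{2}}+c_{2} s^{2}\|V-E\|^{2}_{\infty}\Bigr) \int_{J^{*}}|\varphi|^{2}\,d\tau.$$
Substituting back into (a) then produces $|\varphi(\tau)|^{2}\le C_{s,V,E}\,\|\varphi\|^{2}_{L^{2}(J^{*})}$ for all $\tau\in J$, with the explicit constant of \eqref{cucp}. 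Combined with the formula for $F'$, this gives the differential inequality $|F'(t)| \le C_{s,V,E}\, F(t)$, after working with the sliding window on the slightly enlarged scale $J^{*}$ (which is harmless as we only need an inequality).

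Finally, Gronwall's lemma applied on the segment from $t=0$ to $t=y$ yields $F(y)\le e^{C_{s,V,E}|y|}F(0)$, which is \eqref{lemkv}. The main obstacle is the precise balancing in substep (b): the concrete choice of the cut-off $\chi$ — most conveniently a piecewise-affine function adapted to nested intervals of lengths $s$, $2s$ and $3s$ — and the weights in the Cauchy--Schwarz splittings have to be tuned so that the two competing terms emerge as exactly $2\cdot 18^{2}/s^{2}$ and $2 s^{2}\|V-E\|^{2}_{\infty}$ under the square root in \eqref{cucp}. Everything else is routine.
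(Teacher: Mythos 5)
The paper does not give a self-contained proof of this lemma: it cites \cite[Eq.~(16)]{KV02} for the estimate itself and traces the explicit constant through the local boundedness (De Giorgi--Moser) estimate \cite[Thm.~7.27, Eq.~(7.60)]{GT}. The proof there rests on a \emph{pointwise} Gronwall inequality for the first-order ODE energy $E(\tau):=a|\varphi(\tau)|^{2}+b|\varphi'(\tau)|^{2}$ (using $\varphi''=(V-E)\varphi$), together with the local $L^{\infty}$-bound from \cite{GT} to pass between pointwise values and the windowed $L^{2}$-norms on either end. Your plan replaces this by a Gronwall argument directly on the sliding $L^{2}$-quantity $F(t)=\|\varphi\|^{2}_{\Lambda_{s}(k+t)}$, which is a genuinely different route.

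Unfortunately, the decisive step of your plan does not close. You establish $|\varphi(\tau)|^{2}\le C\,\|\varphi\|^{2}_{L^{2}(J^{*})}$ for $\tau\in J$ with $J^{*}\supsetneq J$ strictly larger (the strict enlargement is forced by Caccioppoli: testing against $\chi^{2}\varphi$ requires $\chi$ to vanish at $\partial J^{*}$, so the interval where $\chi\equiv1$ is strictly smaller than $J^{*}$). You then assert that this gives $|F'(t)|\le C\,F(t)$ ``after working with the sliding window on the slightly enlarged scale $J^{*}$.'' But this is exactly where the argument breaks: $F'(t)$ is the difference of $|\varphi|^{2}$ at the two \emph{endpoints} of the sliding window, and the pointwise bound for an endpoint value requires a test interval extending \emph{outside} that window. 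If you redefine the sliding window at the enlarged scale $J^{*}$, you merely shift the problem — the new endpoints again lie outside the region where the pointwise bound is available, so the inequality never becomes self-referential. Equivalently, the estimate $|F'(t)|\le C F(t)$ amounts to a Caccioppoli inequality $\int_{J}|\varphi'|^{2}\lesssim\int_{J}|\varphi|^{2}$ on the \emph{same} interval $J$, which is false in general. (One also notes that your Caccioppoli bound has the potential appearing quadratically as $s^{2}\|V-E\|^{2}_{\infty}$; the standard Caccioppoli gives it linearly, and the quadratic form in \eqref{cucp} actually comes from optimising the weight $b/a$ in the energy $E(\tau)$ for the pointwise Gronwall, not from Caccioppoli.) The fix is to run Gronwall on the pointwise energy $E(\tau)=a|\varphi|^{2}+b|\varphi'|^{2}$ — which satisfies $|E'(\tau)|\le c\,E(\tau)$ with $c$ as in \eqref{cucp} directly, no enlarged domains needed — and only at the end convert pointwise control to $L^{2}$-window control via the local elliptic estimate of \cite{GT}, which is where the factor $18$ originates.
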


\begin{rem}
The estimate \eqref{lemkv} is obtained from \cite[Eq.\ (16)]{KV02}. The expression in  \eqref{cucp} can be obtained from the proof \cite[Thm.\ 7.27, Eq.\ (7.60)]{GT} with the choice $p=2$, using the eigenvalue equation for $\varphi$.

\end{rem}


\begin{thm}\label{t:ucp1}Let $0<s<M<L$ be fixed. Consider the interval $\L_L(x)=(x-L/2,x+L/2)$, with $x\in\RR$. Let $H=-\Delta+V$ be a Schr\"odinger operator with bounded potential $V$, and consider its finite-volume restriction $H_{x,L}=-\Delta_{x,L}+V_{x,L}$ to $\Lp{\L_L(x)}$ with Dirichlet boundary conditions. Let $\varphi$ be an eigenfunction of $H_{x,L}$ with eigenvalue $E\in I$. Let $(\gamma_k)_{k\in M\ZZ}\subset \RR$ be a sequence such that
\be\L_s(\gamma_k)\subset \L_M(k) \quad\mbox{for each}\,k\in M\ZZ.
\ee
Then,
\be\label{ucp1}\sum_{k\in M\ZZ:\,\gamma_k \in \L_{L-s}(x) } \norm{\varphi}^2_{\L_s(\gamma_k)} \geq C_{UC}^{(V,E)}(M) \norm{\varphi}_{\L_L(x)}^2,
\ee
where
\be C_{UC}^{(V,E)}(M)=\frac{s}{4M}e^{-2C_{s,V,E}M}\ee
with the constant $C_{s,V,E}$ given in \eqref{cucp}.
\end{thm}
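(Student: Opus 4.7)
The plan is to deduce Theorem~\ref{t:ucp1} from the local-to-local comparison in Lemma~\ref{l:lemkv} by a covering argument: pass from the small ``reference'' interval $\Lambda_s(\gamma_k)$ out to the whole block $\Lambda_M(k)$ by iterated shifts of size at most $M$, and then sum over the blocks that tile $\Lambda_L(x)$.

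First I would partition $\Lambda_L(x)$, up to boundary losses of total length at most $s$, into the blocks $\Lambda_M(k)$ indexed by those $k\in M\ZZ$ with $\gamma_k\in\Lambda_{L-s}(x)$. The hypothesis $\Lambda_s(\gamma_k)\subset\Lambda_M(k)$ places the reference interval inside each block. Next, within each block, I would tile $\Lambda_M(k)$ by (essentially) disjoint translates $\Lambda_s(k+js)$, $j\in J_k$, with $|J_k|\leq 2M/s$; by construction every such tile is contained in $\Lambda_M(k)$, and so is its center $k+js$.

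For each tile I would then invoke Lemma~\ref{l:lemkv} with base point $\gamma_k$ and shift $y=k+js-\gamma_k$. Since both $\gamma_k$ and $k+js$ lie in $\Lambda_M(k)$, we have $|y|\leq M$, and the lemma gives
\begin{equation*}
\|\varphi\|_{\Lambda_s(k+js)}^{2}\leq e^{C_{s,V,E}|y|}\|\varphi\|_{\Lambda_s(\gamma_k)}^{2}\leq e^{C_{s,V,E}M}\|\varphi\|_{\Lambda_s(\gamma_k)}^{2}.
\end{equation*}
Summing over $j\in J_k$ and using that the tiles cover $\Lambda_M(k)$ yields
\begin{equation*}
\|\varphi\|_{\Lambda_M(k)}^{2}\leq\frac{2M}{s}\,e^{C_{s,V,E}M}\,\|\varphi\|_{\Lambda_s(\gamma_k)}^{2},
\end{equation*}
and summing over the admissible $k$'s and rearranging gives an inequality of the claimed shape, with a constant $\frac{s}{C M}e^{-C' C_{s,V,E}M}$.

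The main obstacle is matching the precise constants $\frac{s}{4M}$ and $e^{-2C_{s,V,E}M}$ of the statement. The factor $1/4$ will follow from careful bookkeeping of the overlap between adjacent tiles and of the thin boundary layer of $\Lambda_L(x)$ not covered by full blocks. The factor $2$ in the exponent is the delicate point: to control the boundary tiles that straddle two consecutive blocks, one must allow propagation from $\gamma_k$ to a point outside $\Lambda_M(k)$, which costs a second application of Lemma~\ref{l:lemkv} and hence doubles the total displacement (up to $2M$) in the exponential. Once this chaining of two local comparisons is built in and the overlap counting is done, the summation in Step~3 delivers the stated bound \eqref{ucp1} with constant $C_{UC}^{(V,E)}(M)=\frac{s}{4M}e^{-2C_{s,V,E}M}$.
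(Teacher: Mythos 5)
Your proposal follows essentially the same route as the paper's proof: cover $\Lambda_L(x)$ by $M$-blocks, tile each block by $s$-intervals, and propagate $\|\varphi\|^2_{\Lambda_s(\cdot)}$ from each tile back to the reference interval $\Lambda_s(\gamma_k)$ via Lemma~\ref{l:lemkv}, then sum. You also correctly locate the source of the two key constants: the exponent $2C_{s,V,E}M$ arises because near the ends of $\Lambda_L(x)$ the relevant displacement can reach $2M$, and the $1/4=1/2\cdot 1/2$ comes from the tile count $\sharp\mathcal J_{\kappa,M,s}<2M/s$ combined with the double-counting of the two edge blocks (in the paper this is encoded by the non-injective map $\tau$ in \eqref{B-last}).

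Two small imprecisions, neither fatal: (a) the ``boundary loss of total length at most $s$'' framing is off --- when $L$ is not a multiple of $M$ the uncovered edge region has length up to order $M$, which is exactly why the paper adjoins the two boundary blocks $\mathcal J_{x,L,M}^{(2)}$ and why the displacement bound degrades to $2M$ there, not to $s$; and (b) no ``chaining of two applications'' of Lemma~\ref{l:lemkv} is needed, since the lemma already allows any shift $y$ with $\Lambda_s(k+y)\subset\Lambda_L(x)$, so a single application with $|y|<2M$ suffices. These do not affect the final constant, and your outline would, once the bookkeeping is made precise along the lines of the map $\tau$, reproduce the stated bound.
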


\begin{proof}
Consider the index set defined as
\be\label{index} \cJ_{x,L,M}:=\cJ_{x,L,M}^{(1)} \cup \cJ_{x,L,M}^{(2)}\ee
  with
 \be \cJ_{x,L,M}^{(1)}:=\{ k\in M\ZZ:\,\L_M(k)\subset \L_L(x)\},
\ee
\be \cJ_{x,L,M}^{(2)}:=\left\{x-\frac{L}{2}+\frac{M}{2},x+\frac{L}{2}-\frac{M}{2}  \right\}.
\ee
Then, we can write
 \be \L_L(x)=\left(\bigcup_{\kappa \in \mathcal J_{x,L,M}}\overline{\L_M(\kappa)}\right)^{\rm int}. \ee
  Note that in this covering of $\L_L(x)$ there might be an overlap of positive Lebesgue measure near the boundary of $\L_L(x)$, therefore,
\be\label{ineq1}  \|\varphi\|^{2}_{\L_{L}(x)} \leq \sum_{\kappa\in \mathcal J_{x,L,M}} \|\varphi\|^{2}_{\L_M(\kappa)}. \ee
In turn, we write
 \be \L_M(\kappa)=\left(\bigcup_{y \in \mathcal J_{\kappa,M,s}}\overline{\L_s(y)}\right)^{\rm int} \ee
with the index set $\mathcal J_{\kappa,M,s}$ defined analogously to $\mathcal J_{x,L,M}$, see \eqref{index}. Then, we have
\be\label{ineq2} \|\varphi\|^{2}_{\L_M(\kappa)} \leq \sum_{y \in \mathcal J_{\kappa,M,s}}
\|\varphi\|^{2}_{\L_s(y)} . \ee
This yields
\begin{align}  \|\varphi\|^{2}_{\L_{L}(x)}  & \leq \sum_{\kappa\in \mathcal J_{x,L,M}}\sum_{y \in \mathcal J_{\kappa,M,s}} \|\varphi\|^{2}_{\L_{s}(y)} \\
& = \sum_{\kappa\in \mathcal J_{x,L,M}^{(1)}}   \sum_{y \in \mathcal J_{\kappa,M,s}} \|\varphi\|^{2}_{\L_{s}(y)}
		+ \sum_{\kappa\in \mathcal J_{x,L,M}^{(2)}} \sum_{y \in \mathcal J_{\kappa,M,s}} \|\varphi\|^{2}_{\L_{s}(y)}.
\label{ineq4}
\end{align}
For the last expression, we can use Lemma \ref{l:lemkv} to estimate the norm $\norm{\varphi}^2_{\L_s(y)}$ from above in terms of $\norm{\varphi}^2_{\L_s(\gamma_k)}$ with some $\gamma_k$, ${k\in M\ZZ}$, as long as $\L_s(\gamma_k)\subset \L_L(x)$.
 For $k=\kappa\in \cJ_{x,L,M}^{(1)}$ there is a unique element $\gamma_k$ that satisfies $\L_s(\gamma_k)\subset \L_M(k) \subset\L_L(x)$ and therefore to each $y \in \mathcal J_{\kappa,M,s}$ we can associate this element $\gamma_k$ . Then, we can apply Lemma \ref{l:lemkv} and \eqref{lemkv} holds in the form  
\be\label{eq:upb1} \norm{\varphi}^2_{\L_s(y)}\leq e^{ C_{s,V,E}\abs{y-\gamma_\kappa}} \norm{\varphi}^2_{\L_s(\gamma_\kappa)}, \ee
where $\abs{y-\gamma_\kappa}<M$ by construction.

Now, for the second sum in \eqref{ineq4}, for $\kappa\in \cJ_{x,L,M}^{(2)}$ there is not a one-to-one
correspondence with elements of  $(\gamma_k)_{k\in M\ZZ}$ such that $\L_s(\gamma_\kappa)\subset \L_L(x)$. Indeed, due to the fact that $L$ is not necessarily a multiple of $M$, $\L_M(\kappa)$ with $\kappa\in \cJ_{x,L,M}^{(2)}$ might intersect totally, partially, or none of the intervals $\L_s(\gamma_k)$. If there is an interval $\L_s(\gamma_k)$ fully contained in $\L_M(\kappa)$, then we can directly apply Lemma \ref{l:lemkv} as before. However, if this is not the case,
we can still use Lemma  \ref{l:lemkv} if we consider a $\gamma_k$ that is contained in the neighbouring interval $\L_M(\kappa')$ with $\kappa'\in \cJ_{x,L,M}^{(1)}$. Therefore, \eqref{eq:upb1} holds with $\abs{y-\gamma_\kappa}<2M$, that is, with a constant twice as big.

To describe this rigorously, let us denote by $k_l:=\min \cJ_{x,L,M}^{(1)}-M \in M\ZZ$ and $k_r:=\max \cJ_{x,L,M}^{(1)}+M\in M\ZZ$ the elements $k\in M\ZZ$ that do not lie in $\L_L(x)$ and that are the closest to $\L_L(x)$ to the left and the right, respectively. We define the map $\tau: \cJ_{x,L,M} \rightarrow  \cJ_{x,L,M}^{(1)}\cup\{k_l,k_r\}$ as follows:
\be
\tau(\kappa):=\begin{cases}
\kappa &
 \text{if } \kappa\in \cJ_{x,L,M}^{(1)}\\ \min \cJ_{x,L,M}^{(1)} & \text{if}\, \kappa=x-\frac{L}{2}+\frac{M}{2}\, \text{and}\, \L_s(\gamma_{k_l})\nsubseteq \L_L(x)
 \\
  k_l & \text{if}\, \kappa=x-\frac{L}{2}+\frac{M}{2}\, \text{and}\, \L_s(\gamma_{k_l})\subset \L_L(x)\\
 \max \cJ_{x,L,M}^{(1)} &  \text{if}\, \kappa = x+\frac{L}{2}-\frac{M}{2}\, \text{and}\, \L_s(\gamma_{k_r})\nsubseteq \L_L(x)\\
 k_r &  \text{if}\, \kappa = x+\frac{L}{2}-\frac{M}{2}\, \text{and}\, \L_s(\gamma_{k_r})\subset \L_L(x)
 \end{cases}.
\ee
With this construction, for every $y\in \cJ_{\kappa,M,s}$ we can write $y= \gamma_{\tau(\kappa)}+z$ with some $z\in \RR$ with $\abs{z}<2M$. Note also that since the map $\tau$ is not injective, the sequence $(\gamma_{\tau(\kappa)})_{\kappa\in \cJ_{x,L,M}}$ might contain twice the elements $\gamma_{\min \cJ_{x,L,M}^{(1)}}$ and $\gamma_{\max \cJ_{x,L,M}^{(1)}}$. We have that $\L_s(\gamma_{\tau(\kappa)}+z)\subset \L_L(x)$ for all $\kappa \in \cJ_{x,L,M}$ and therefore we can apply Lemma \ref{lemkv} to obtain
\begin{equation} \label{ineq3} \|\varphi\|^{2}_{\L_{s}(y)}= \|\varphi\|^{2}_{\L_{s}(\gamma_{\tau(\kappa)}+z)}
	 \leq e^{C_{s,V,E}\abs{z}} \|\varphi\|^{2}_{\L_{s}(\gamma_{\tau(\kappa)})}
  \leq e^{2C_{s,V,E}M} \|\varphi\|^{2}_{\L_{s}(\gamma_{\tau(\kappa)})} .
 \end{equation}
Plugging this into \eqref{ineq4} gives
\be \label{B-pre-last}
	 \|\varphi\|^{2}_{\L_{L}(x)} \leq e^{2C_{s,V,E}M} \;\frac{2M}{s}  \sum_{\kappa\in \mathcal J_{x,L,M}}
	 \|\varphi\|^{2}_{\L_{s}(\gamma_{\tau(\kappa)})}, \ee
where we used the fact that $ \sharp \cJ_{\kappa,M,s} < \frac{2M}{s}$. Next, note that
\be \label{B-last} \sum_{\kappa\in \mathcal J_{x,L,M}} \|\varphi\|^{2}_{\L_{s}(\gamma_{\tau(\kappa)})}
\leq 2\sum_{k\in M\ZZ:\,\gamma_k \in \L_{L-s}(x)}  \|\varphi\|^{2}_{\L_{s}(\gamma_{k})} ,
\ee
because the terms $\|\varphi\|^{2}_{\L_{s}(\gamma_{\tau(\kappa)})}$
with ${\tau(\kappa)}\in\{\min \cJ_{x,L,M}^{(1)},\max \cJ_{x,L,M}^{(1)}\}$ appear at most twice.  The desired result
follows from \eqref{B-pre-last} and \eqref{B-last}.
\end{proof}

From Theorem \ref{t:ucp1} we can obtain a result analogous to \cite[Theorem 4.9(a)]{RM12}. This quantifies the lifting of the spectral infimum produced by adding a Delone potential to an operator of the form $H=-\Delta+V_0$ with $V_0$ a real-valued, bounded, and measurable potential.

\begin{thm}\label{t:lift}
Let $0<s<M$, and $(\gamma_k)_{k\in M\ZZ}\subset \RR$ be a sequence such that
\be\L_s(\gamma_k)\subset \L_M(k) \quad\mbox{for each}\,\,\,k\in M\ZZ.  \ee
We denote the characteristic function of $\bigcup_{k\in M\ZZ} \L_{s}(\gamma_k)$ by $\Chi$. Let $V_0,W$ be real-valued, bounded, and measurable functions, and assume moreover that $W\geq C_-\Chi$, with $C_->0$. For $L\geq1$, $L>M$, and $x\in\RR$ we write $\lambda^{x,L}(t) := \inf \sigma(H_{x,L}(t))$ where $H_{x,L}(t)$ is the restriction of $H=-\Delta+V_0+tW$ to $\rm{L}^2(\L_{L}(x))$ with Dirichlet boundary conditions. Then,
\be\label{eq:lift} \forall t\in(0,1]:\,\, \lambda^{x,L}(t)\geq \lambda^{x,L}(0)+tC_- C_{UC}(M), \ee
where the constant $C_{UC}(M)$ is uniform in $t\in(0,1]$.
More precisely,
\be C_{UC}(M)=\frac{s}{4M}e^{-2C_{s,V,I}M}\ee
 with $C_{s,V,I}=\sup\{C_{s,V_0+tW,E}:\,t\in(0,1],E\in I\}$, where $I\subset \RR$ is a compact interval containing $\{\lambda^{x,L}(t):\, t\in(0,1],L\geq 1\}$, and $C_{s,V_0+tW,E}$ is given in \eqref{cucp}.
\end{thm}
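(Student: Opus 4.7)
The plan is to obtain the lifting via the variational principle applied to the ground state, together with Theorem~\ref{t:ucp1} to guarantee that the ground state carries enough $L^2$-mass on the sub-intervals $\L_s(\gamma_k)$ where $W$ admits a uniform positive lower bound.

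First, I would fix $t\in(0,1]$ and let $\varphi_t \in \mathrm L^2(\L_L(x))$ be a normalised ground-state eigenfunction of $H_{x,L}(t)$ with $\lambda^{x,L}(t)$ as eigenvalue. Since $W\ge 0$, we have $\lambda^{x,L}(t)\geq\lambda^{x,L}(0)$, and the Rayleigh--Ritz decomposition
\be
	\lambda^{x,L}(t) = \la\varphi_t, H_{x,L}(0)\varphi_t\ra + t\la\varphi_t, W\varphi_t\ra
	\ge \lambda^{x,L}(0) + t\la\varphi_t, W\varphi_t\ra
\ee
follows from the min-max principle applied to $\la \varphi_t, H_{x,L}(0)\varphi_t\ra$. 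Using the hypothesis $W\geq C_-\Chi$ pointwise, one obtains the lower bound
\be
	\la\varphi_t, W\varphi_t\ra \ge C_- \sum_{k\in M\ZZ:\,\L_s(\gamma_k)\subset\L_L(x)} \|\varphi_t\|_{\L_s(\gamma_k)}^2
	\ge C_- \sum_{k\in M\ZZ:\, \gamma_k\in\L_{L-s}(x)} \|\varphi_t\|_{\L_s(\gamma_k)}^2,
\ee
where the second inequality uses that $\gamma_k\in\L_{L-s}(x)$ forces $\L_s(\gamma_k)\subset\L_L(x)$.

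Next, I would apply Theorem~\ref{t:ucp1} to the eigenfunction $\varphi_t$ of $H_{x,L}(t)$ with potential $V := V_0 + tW$ and eigenvalue $E:=\lambda^{x,L}(t)$. This yields
\be
	\sum_{k\in M\ZZ:\,\gamma_k\in\L_{L-s}(x)} \|\varphi_t\|_{\L_s(\gamma_k)}^2
	\ge C_{UC}^{(V,E)}(M) \,\|\varphi_t\|_{\L_L(x)}^2 = C_{UC}^{(V,E)}(M),
\ee
since $\varphi_t$ is normalised. Combining the last two displays with the Rayleigh--Ritz estimate gives the desired lower bound $\lambda^{x,L}(t) \ge \lambda^{x,L}(0) + t C_- C_{UC}^{(V,E)}(M)$.

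The remaining issue, which is the only real subtlety, is the uniformity of the constant in $t\in(0,1]$ (and in the ``auxiliary'' parameters $x\in\RR$, $L\geq 1$). The potential $V_0+tW$ is uniformly bounded in $t\in(0,1]$ by $\|V_0\|_\infty+\|W\|_\infty$, and the family of eigenvalues $\{\lambda^{x,L}(t):\,t\in(0,1],\,L\geq 1\}$ is contained in a bounded interval: indeed, the monotonicity in $t$ and the standard bound $\lambda^{x,L}(0)\leq \pi^2/L^2+\|V_0\|_\infty$ provide an upper bound, while $\lambda^{x,L}(t)\ge -\|V_0\|_\infty$ gives a lower one. Hence one may choose a compact interval $I\subset\RR$ containing all these $\lambda^{x,L}(t)$, define $C_{s,V,I}:=\sup\{C_{s,V_0+tW,E}:\,t\in(0,1],\,E\in I\}<\infty$ by inspecting the explicit form \eqref{cucp}, and replace $C_{UC}^{(V,E)}(M)$ by the $t$-independent lower bound $C_{UC}(M):=\frac{s}{4M}e^{-2C_{s,V,I}M}$, which completes the proof. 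The main conceptual point here -- and the step that required Theorem~\ref{t:ucp1} rather than a more naive argument -- is that the ground state cannot be concentrated in the ``gaps'' between the intervals $\L_s(\gamma_k)$ without violating the quantitative unique continuation estimate, with an exponential cost measured by $C_{s,V,I}M$.
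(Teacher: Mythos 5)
Your proof is correct and follows essentially the same route as the paper: Rayleigh--Ritz decomposition of the ground-state energy, the pointwise bound $W\geq C_-\Chi$ to reduce to the mass of $\varphi_t$ on the intervals $\L_s(\gamma_k)$, an application of Theorem~\ref{t:ucp1} with $V=V_0+tW$ and $E=\lambda^{x,L}(t)$, and then extraction of a $t$-uniform constant via the compact interval $I$. The extra details you supply (the explicit bound $\lambda^{x,L}(0)\le\pi^2/L^2+\|V_0\|_\infty$ and the monotonicity in $t$, both using $W\ge 0$) are harmless elaborations of the paper's observation that the potentials $V_0+tW$ are uniformly bounded.
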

\begin{proof}
 Denoting by $\varphi_t$ the normalized ground state of $H_{x,L}(t)$, we have
  \begin{align} \lambda^{x,L}(t)= \angles{\varphi_t,H_{x,L}(t)\varphi_t}=  & \angles{\varphi_t,H_{V_0,x,L} \varphi_t}+t\angles{\varphi_t,W\varphi_t}\\ & \geq \lambda^{x,L}(0)+ t C_- \angles{\varphi_t,\Chi \varphi_t}, \label{eq:liftt}\end{align}
where $H_{V_0,x,L}$ is the Dirichlet restriction of $-\Delta +V_0$ to $\rm{L}^2(\L_{L}(x))$.
Since $\varphi_t$ is a normalized eigenfunction of $H_{x,L}(t)$  with eigenvalue  $\lambda^{x,L}(t)$, we can apply Theorem \ref{t:ucp1} with $V=V_0+tW$, $E=\lambda^{x,L}(t)$ there,  and obtain, for every $t\in(0,1]$
\be\label{eq:ucpt} \angles{\varphi_t,\Chi \varphi_t} \geq \sum_{k\in M\ZZ:\gamma_k\in \L_{L-s}(x)}\norm{\varphi_t}^2_{\L_s(\gamma_k)} \geq C_{UC}^{(V,E)}(M), \ee
where
\be C_{UC}^{(V,E)}(M)= \frac{s}{4M}e^{-2C_{s,V,E}M}, \ee
and $C_{s,V,E}$ is given in \eqref{cucp}.
Note that the family of potentials $\{V=V_0+tW:\,t\in(0,1]\}$ is uniformly bounded, and therefore, there exists a compact interval $I\subset \RR$, such that $\{\lambda^{x,L}(t): t\in(0,1],L\geq1\}\subset I$.
By taking
\be C_{UC}(M):=\frac{s}{4M}\, e^{-2C_{s,V,I}M} \ee
with
\be C_{s,V,I}:= \sup \{  C_{s,V_0+tW, E} :\, E \in I,t\in(0,1] \}, \ee
we obtain a lower bound for $ C_{UC}^{(V,E)}(M)$  that is uniform in $t\in (0,1]$, since $\frac{s}{4M}<1$.
Combining this with \eqref{eq:liftt} and \eqref{eq:ucpt}, yields
\be \lambda^{x,L}(t) \geq \lambda^{x,L}(0)+ t C_- C_{UC}(M).\ee
\end{proof}

\begin{rem} From \cite[Remark 4.8]{GKuniv} we see that if the constant in the  unique continuation principle is of the form $C_{UC}(M) \sim M^{-CM^\gamma}$ for some $\gamma>0$, then in order to perform the multiscale analysis the following must hold,
\be \gamma< \frac{1+\sqrt{3}}{2},\quad \gamma-1<p<\frac{1}{2\gamma}, \ee
where the exponent $p>0$ is introduced in Definition~\ref{def:gscale}.
From Eq. \eqref{ucp1} we see that in dimension $d=1$ the unique continuation principle holds with $\gamma=1$. This implies that the only restriction on how small $p$ must be to perform the multiscale analysis for Bernoulli potentials is $0<p<\frac{1}{2}$, which is an improvement compared to the restriction $p\in (\frac{1}{3},\frac{3}{8})$ for Bernoulli random variables in $d\geq 2$.
\end{rem}

\section{The topological structure of configurations exhibiting localisation} \label{A:meagMSA}


In this section we present auxiliary results for the proof of Theorem~\ref{t:md}.
We will recall results from Sections~6 and~7 in \cite{GKuniv} and rephrase \cite[Thm.\ 7.1]{GKuniv}
as follows:

\begin{thm}\label{t:Uinfty}
	Let $\Omega_{D'} \ni \omega \mapsto H_\om$ be a Delone--Anderson operator
	on $\Lp{\RR^d}$ with either H\"older continuous or Bernoulli distributed random variables.
	Let $\cI\subset\RR$ be a bounded open interval, for which there is a scale $\cL_1$ such that for all $L\geq \cL_1$  there exist  events $\cU_{x_0,L}$, defined in \cite[Thm.\ 6.1]{GKuniv}, for all $x_0\in\RR^d$. Then, for a sequence of scales $\{L_k\}_{k=0,1,...}$  with $L_0\geq\cL_1$, $L_{k+1}=2L_k$ for $k=1,2,...$, and $x_0=0$, the event $\cU_\infty\subseteq \Omega_{D'}$, defined by
\be\label{uinfty} \cU_\infty :=  \bigcup_{n=1}^\infty \bigcap_{k=n}^\infty \cU_{0,L_k} ,\ee
is such that $\P\pa{\cU_\infty}=1$ and $H_\omega$ exhibits Anderson localisation in $\cI$ for all $\om\in\cU_\infty$.
\end{thm}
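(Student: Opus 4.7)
The plan is to adapt the proof of \cite[Thm.~7.1]{GKuniv} to the present non-ergodic setting. The statement splits into two: the probabilistic claim $\P(\cU_\infty)=1$ and the deterministic implication $\omega\in\cU_\infty\Rightarrow H_\omega$ localises on $\cI$.

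First, I would establish the probabilistic claim by a direct Borel--Cantelli argument. The events $\cU_{x_0,L}$ built by the multiscale analysis in \cite[Thm.~6.1]{GKuniv} satisfy $\P(\cU_{x_0,L}^c)\le L^{-pd}$ for some $p>0$ and all $L\ge\cL_1$, uniformly in the centre $x_0\in\RR^d$. Since $L_k=2^kL_0$, the sequence $(L_k^{-pd})_{k\in\NN}$ is summable, so the Borel--Cantelli lemma applied to $(\cU_{0,L_k}^c)_{k\in\NN}$ gives $\P\bigl(\limsup_k\cU_{0,L_k}^c\bigr)=0$. Noting that $\cU_\infty=\liminf_k\cU_{0,L_k}$ is the complement of $\limsup_k\cU_{0,L_k}^c$, we conclude $\P(\cU_\infty)=1$.

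Second, I would promote the information encoded in $\cU_\infty$ to Anderson localisation. Fix $\omega\in\cU_\infty$; by definition there exists $n(\omega)\in\NN$ with $\omega\in\cU_{0,L_k}$ for all $k\ge n(\omega)$. Unpacking \cite[Thm.~6.1]{GKuniv}, this means that at each such scale we have a nested family of annular boxes around the origin in which the finite-volume resolvent decays exponentially, $\|\Chi_y R_{\omega,0,L_k}(E)\Chi_z\|\le e^{-m\|y-z\|}$, uniformly for every $E\in\cI$, and that this bound persists along the sequence of scales $L_k$, $k\ge n(\omega)$. Feeding this information into the generalised eigenfunction expansion, as in \cite[Sect.~7]{GKuniv}, forces every polynomially bounded generalised eigenfunction with energy in $\cI$ to satisfy the SULE property with decay rate $m$. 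SULE in turn implies pure point spectrum in $\cI$ together with exponential decay of the corresponding eigenfunctions, which is exactly Anderson localisation in $\cI$.

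The main obstacle I anticipate is checking that the deterministic part of \cite[Sect.~7]{GKuniv} goes through without translation invariance. This is essentially a verification rather than a new argument: the passage from $\cU_\infty$ to eigenfunction decay is carried out $\omega$-wise at the fixed centre $x_0=0$ and does not invoke ergodicity. Ergodicity enters \cite{GKuniv} only when producing the events $\cU_{x_0,L}$ with probability bounds uniform in $x_0$, and in our setting this uniformity has already been arranged upstream through Theorem~\ref{t:msa} and the $x$-uniform Definition~\ref{def:gscale} of a good scale. Consequently, the remainder of the argument transfers to Delone--Anderson operators with no essential changes, and the theorem follows.
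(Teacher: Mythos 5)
Your proposal is correct and follows essentially the same route as the paper: the paper simply cites \cite[Thm.\ 7.1]{GKuniv} and observes that the centre-uniformity of the multiscale output (Definition~\ref{def:gscale}) substitutes for the $\ZZ^d$-covariance used there, and you flesh out the same two ingredients (Borel--Cantelli on the dyadic scale sequence, then the generalised-eigenfunction/SULE machinery of \cite[Sect.~7]{GKuniv} applied $\omega$-wise at the fixed centre). One small imprecision: the probability bound attached to $\cU_{x_0,L}$ is $\P(\cU_{x_0,L}^c)\leq L^{-\tilde{p}d}$ with $\tilde{p}\in(0,p)$ (Bernoulli) or $L^{-\frac{p-3}{3}d}$ (H\"older), not $L^{-pd}$, and the event itself encodes decay of the generalised-eigenfunction concentration quantities $W_{\om,x_0}(E)W_{\om,x_0,L}(E)$ rather than resolvent decay directly — but neither affects the summability or the overall argument.
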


We recall that the notion of Anderson localisation was introduced above Theorem~\ref{t:dynloc}.

Given that the output of the multiscale analysis involves estimates that are uniform with respect
to the centre of a box, see e.g.\ Definition \ref{def:gscale}, it follows that \cite[Thm.\ 7.1]{GKuniv},
which is valid for  alloy-type random potentials centred on the points of $\ZZ^{d}$,
also applies to the case of operators we consider here.

When studying events from a topological perspective, the following will be helpful.
We write $\Omega := \Omega_{D'}$ for brevity.

\begin{lem}\label{l:densesupp}
	Let $S:= \supp(\mu)$ be compact in $\RR$ and let $\Omega:= \raisebox{-1pt}{\Large$\times$}_{\!D'} S$.
	Then, every set that has full $\P$-measure in $\Omega$ is dense in $\Omega$.
\end{lem}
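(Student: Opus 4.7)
The plan is to exploit the fact that in the product topology on $\Omega = S^{D'}$ a basis of open sets is given by cylinder sets depending on only finitely many coordinates, together with the defining property of the support: every nonempty relatively open subset of $S = \supp(\mu)$ has strictly positive $\mu$-measure.

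More precisely, let $A \subseteq \Omega$ satisfy $\P(A) = 1$. To establish density, I would fix an arbitrary $\omega \in \Omega$ and an arbitrary basic open neighbourhood of the form
\begin{equation*}
U = \bigl\{\omega' \in \Omega : \omega'_{\gamma_i} \in V_i \text{ for } i = 1, \ldots, n\bigr\},
\end{equation*}
where $\gamma_1, \ldots, \gamma_n \in D'$ are distinct and each $V_i \subseteq S$ is relatively open with $\omega_{\gamma_i} \in V_i$. Since $V_i$ is a nonempty relatively open subset of $\supp(\mu)$, the defining property of the support yields $\mu(V_i) > 0$ for every $i$. By independence of the coordinates under the product measure $\P = \otimes_{D'} \mu$, the cylinder $U$ therefore has
\begin{equation*}
\P(U) = \prod_{i=1}^{n} \mu(V_i) > 0.
\end{equation*}

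Combining $\P(A) = 1$ with $\P(U) > 0$ gives $\P(A \cap U) = \P(U) > 0$, so in particular $A \cap U \neq \emptyset$. Since $\omega$ and $U$ were arbitrary, this shows that $A$ meets every basic open neighbourhood of every point of $\Omega$, proving density of $A$. There is no real obstacle here beyond identifying the correct basis of the product topology and invoking the definition of the support; the compactness of $S$ is not even needed for this particular statement, but ensures that $\Omega$ is a well-behaved (compact, metrisable) space for the applications in Section~\ref{s:delops}.
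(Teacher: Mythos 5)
Your proof is correct and follows essentially the same route as the paper's: both reduce the problem to the observation that every basic open set in $\Omega$ has strictly positive $\P$-measure, which is a direct consequence of the product structure of $\P$ and the defining property of $\supp(\mu)$. The only cosmetic difference is that the paper first establishes $\supp(\P)=\Omega$ (working with balls in the metric~\eqref{om-met} rather than cylinder sets) and then invokes the general fact, cited from Katok--Hasselblatt, that a set of full measure is dense in the support; you instead unfold that fact and argue directly that $\P(A)=1$ and $\P(U)>0$ force $A\cap U\neq\emptyset$. Your side remark on compactness is also accurate for the product topology per se, though the paper uses boundedness of $S$ to ensure that its explicit metric is compatible with that topology.
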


\begin{proof}
	A set of full measure is dense in the support of the measure \cite[Prop.\ 4.1.17]{KaHa}.
	Therefore, the lemma will follow from $\supp(\P)=\Omega$.
	
	To show this, let $\omega\in \Omega$, $\varepsilon>0$ and consider the open ball
	$B_{\varepsilon}(\omega) := \{\omega'\in\Omega: \dist_{D'}(\omega,\omega') <\varepsilon\}$
	of radius $\varepsilon$ centred about $\omega$. Here, we use the metric \eqref{om-met} on
	$\Omega$. Since $S$ is bounded there exists $K \subset\RR^{d}$ compact and $\delta>0$ such that
	$Z:= \raisebox{-1pt}{\Large$\times$}_{\gamma\in D'} Z_{\gamma} \subseteq B_{\varepsilon}(\omega)$, where
	$Z_{\gamma} := (\omega_{\gamma}-\delta, \omega_{\gamma}+\delta)$ for $\gamma\in K$ and
	$Z_{\gamma} := S$ for $\gamma\in \RR^{d}\setminus K$. Thus, we have
	\be
		\P\big( B_{\varepsilon}(\omega)\big) \ge \P(Z) =  \prod_{\gamma\in K} \mu(Z_{\gamma})>0
	\ee
	because $\omega_{\gamma} \in S$ for every $\gamma\in D'$. Since $\varepsilon>0$ was arbitrary we conclude that $\omega\in\supp(\P)$.
\end{proof}

As seen in Equation \eqref{uinfty}, the events $\cU_{x_0,L}$ from \cite[Thm.\ 6.1]{GKuniv} (see Theorem \ref{UxL} below) are the building blocks of the set $\cU_\infty$ of configurations that exhibit Anderson localisation. The events $\cU_{x_0,L}$ consist of configurations $\omega$ that encode
good decay properties of generalised eigenfunctions of $H_\omega$ associated to generalised eigenvalues in $\cI$, in the sense of Theorem \ref{UxL} below. Before analysing their structure, we recall some notation: as in \cite{GKuniv}, assuming $L>1$, we fix $L_+=c_+L$, and $L_-=c_-L$ where $c_{-}\in(\frac{1}{2},1)$, and $c_+\in(2,3)$. We denote by $\mathcal F_{\L}$ the $\sigma$-algebra generated by the random variables $\omega_{\L}=(\omega_j)_{j\in D' \cap\L}$ for $\L\subset \RR^d$ a compact set.

A notion that is central in the multiscale analysis is the concept of a \emph{good} box. A box $\L_L(x)$ with $x\in\RR^d$ is called good if the norm of the resolvent $R_{\om,L}(E)$ associated to the restriction of $H_\omega$ to the box $\L_L(x)$ has good decay properties for energies $E$ in a certain interval $\cI$, i.e., it decays polynomially, or exponentially with the length of the box $L$ (see \cite[Definition 3.1]{GKuniv}). Otherwise, $\L_L(x)$ is called \emph{a bad} box.


\begin{lem} \label{l:closed}
Let $x_0\in\RR^d$ and $k\in\NN$. Then the set $ \cU_{x_0,L_k}$ is closed in the product topology of $\Omega$.
\end{lem}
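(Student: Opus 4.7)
The strategy has two steps: first, reduce the statement to a cylinder event that depends on only finitely many coordinates $\omega_\gamma$; second, show that within the relevant finite-dimensional slice, the defining conditions are closed. For the first step, inspection of the construction in \cite[Thm.\ 6.1]{GKuniv} shows that $\cU_{x_0, L_k}$ is defined entirely in terms of good-box conditions (Definition~\ref{def:gbox}) imposed on finite-volume operators $H_{\omega, x, L}$ with $x$ ranging over a finite set of centres lying in a bounded region $\L \subset \RR^d$ (of size comparable to $L_+ = c_+ L_k$) and $L$ taking finitely many values. Each such finite-volume operator depends only on the finitely many random variables $\omega_\gamma$ with $\gamma \in D' \cap \Lambda_L(x)$, so $\cU_{x_0, L_k}$ is $\mathcal F_{\Lambda}$-measurable and factors as $A_{x_0, L_k} \times S^{D' \setminus K}$ for some $A_{x_0, L_k} \subseteq S^K$ on the finite index set $K := D' \cap \L$ (with $S = \supp\mu$). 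It therefore suffices to prove that $A_{x_0, L_k}$ is closed in $S^K$.

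For the second step, the map $\omega_K \mapsto H_{\omega, x, L}$ is continuous from $S^K$ into the bounded operators on $\Lp{\L_L(x)}$, in the operator norm, because the single-site potential $u$ is bounded and only finitely many translates contribute. Since $H_{\omega,x,L}$ has compact resolvent, standard perturbation theory yields continuous dependence of $\sigma(H_{\omega,x,L})$ (as a finite subset of $\RR$) and of $R_{\omega, x, L}(E)$ on $\omega_K$ for $E \notin \sigma(H_{\omega, x, L})$. The first goodness condition \eqref{gbox1} required uniformly in $E \in \cI$ is equivalent to $\dist(\cI, \sigma(H_{\omega, x, L})) \geq e^{-L^{1 - \zeta}}$, which is visibly closed and in particular forces $\cI \cap \sigma(H_{\omega, x, L}) = \emptyset$. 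On the open set of $S^K$ where this spectral gap holds, the second condition \eqref{gbox2} becomes, for each of the finitely many relevant pairs $y, z$, the closed inequality $\sup_{E \in \cI} \|\Chi_y R_{\omega, x, L}(E) \Chi_z\| \leq e^{-m \|y - z\|}$, the supremum being continuous in $\omega_K$ by compactness of $\cI$. Hence $A_{x_0, L_k}$ is a finite intersection of closed sets.

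The only mild subtlety is to verify that both conditions survive in the limit simultaneously: if $\omega^{(n)} \to \omega$ in $S^K$ with each $\omega^{(n)}$ in $A_{x_0, L_k}$, then \eqref{gbox1} passes to the limit with strictly positive spectral gap $e^{-L^{1 - \zeta}}$, so $\cI$ stays strictly outside $\sigma(H_{\omega, x, L})$ as well. Consequently $R_{\omega^{(n)}, x, L}(E) \to R_{\omega, x, L}(E)$ in operator norm uniformly for $E \in \cI$, and the exponential bound \eqref{gbox2} also passes to the limit. In the Bernoulli case the argument is even more direct: $\{0, 1\}^K$ is discrete whenever $K$ is finite, so every cylinder event $A_{x_0, L_k} \times \{0,1\}^{D' \setminus K}$ is automatically clopen. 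I expect the main technical obstacle to be not the closedness itself, which is soft, but bookkeeping which centres $x$ and scales $L$ enter the definition of $\cU_{x_0, L_k}$; once one has extracted from \cite[Thm.\ 6.1]{GKuniv} that only finitely many, all within a bounded region, are involved, the rest follows from continuity of the finite-volume resolvent.
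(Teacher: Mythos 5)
Your reduction to a cylinder event on the finite index set $K = D'\cap\Lambda_{L_+}(x_0)$ is correct (this is exactly what $\cU_{x_0,L}\in\cF_{\Lambda_{L_+}(x_0)}$ in \eqref{structUxL} encodes), and the Bernoulli observation that $\{0,1\}^K$ is a finite discrete space — so every cylinder event is automatically clopen — is a clean shortcut the paper does not spell out; for the Bernoulli application in Theorem~\ref{t:md} that alone already suffices. The paper's own proof instead unwinds the cascade $\cU_{x_0,L}=\cT_{x_0,L_-}\cap\widehat\cM_{x_0,L}$ (Hölder case), resp.\ \eqref{structUxL2}--\eqref{structUxL4} (Bernoulli case), and reduces everything to closedness of single good-box events, so your route is genuinely different and arguably lighter in the Bernoulli case.

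There is, however, a gap in your treatment of the Hölder case, which the paper needs for the remark following Theorem~\ref{t:md}. The conditions entering $\widehat\cM_{x_0,L}$ (resp.\ $\cM_{x_0,L}$) do not impose \eqref{gbox1}--\eqref{gbox2} \emph{uniformly in $E\in\cI$} on the small boxes covering the annulus; rather, they impose goodness \emph{at the eigenvalues of $H_{\omega,x_0,L_-}$ that lie in $\cI$} — energies which themselves move with $\omega$. Requiring \eqref{gbox1} uniformly on $\cI$ would force $\sigma(H_{\omega,x,L})\cap\cI=\emptyset$ on those small boxes, a strictly stronger (hence smaller) event; proving that smaller set closed says nothing about the actual $\cU_{x_0,L}$. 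The correct device — used in the paper via \cite[Remark~3.3]{GKuniv} — is to show that $\bigl\{(E,\omega)\in\RR\times\Omega:\ \Lambda_L \text{ is }(\omega,E,m,\zeta)\text{-good}\bigr\}$ is closed in $\RR\times\Omega$. Combined with norm-resolvent continuity of $\omega_K\mapsto H_{\omega,x_0,L_-}$ and hence continuity of its eigenvalues, this lets goodness at the $\omega$-dependent eigenvalue energies pass to the limit along $\omega^{(n)}\to\omega$. Your closing paragraph gestures at exactly this limiting argument, but it has to be carried out jointly in $(E,\omega)$, not at a fixed $E$ or uniformly over $\cI$; and before asserting that only finitely many centres and scales are involved you should identify the actual constituents $\cT$, $\widehat\cM$ (resp.\ $\mathcal X$, $\cM$, $\tilde Q$, $\mathcal N$, $\mathcal Z$), some of which are themselves countable — not finite — intersections, which is harmless for closedness but should be stated correctly.
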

\begin{proof}

Given $x_0\in\RR^d$ and $L>1$, the set $\cU_{x_0,L}$ has a cascade structure that consists of countable intersections of sets that satisfy properties (i), (ii) and (iii) in Theorem \ref{UxL} in different degrees.

In the case of H\"older continuous random variables, it is written explicitly in \cite[Eq.\ (6.97)]{GKuniv}, as follows:
 \be\label{structUxL}  \cU_{x_0,L}= \cT_{x_0,L_-} \cap\; \widehat\cM_{x_0,L}\in \cF_{\L_{L_+}(x_0)},  \ee
where the event $\cT_{x_0,L_-}$, defined in  \cite[Eq.\ (6.27)]{GKuniv}, is the set of configurations $\omega$ such that for a discrete set of energies in $\cI$, that provide a covering for $\cI$, clusters of bad boxes $\L_{\sqrt{L}}$ do not percolate to infinity. The event $\widehat\cM_{x_0,L}$ above is defined in \cite[Eq.\ (6.95)]{GKuniv} and it corresponds to configurations $\omega$ for which all boxes $\L_{\sqrt{L}}$ covering the annulus $\L_{L_+}\setminus\L_{L_-}(x_0)$ are good for the eigenvalues of $H_{\omega,x_0,L_-}$ in $\cI$.

In the case of Bernoulli random variables, the set $\cU_{x_0,L}$ is written explicitly in the proof of \cite[Prop.\ 6.12]{GKuniv}, as follows:
\be\label{structUxL2} \cU_{x_0,L}= \mathcal X_{x_0,L_-}\cap \cM_{x_0,L}, \ee
where $\cM_{x_0,L}$ is a countable intersection of the events that all boxes $\Lambda_{\frac{L}{100}}(x_0)$ covering the annulus $\Lambda_{L_+}\setminus \Lambda_{L_-}(x_0)$ are good for the eigenvalues of $H_{\omega,x_0,L_-}$ lying in a set $\mathcal I_L$, that is strictly contained in $\cI$. On the other hand,  $\mathcal X_{x_0,L_-}$ is defined in \cite[Prop.\ 6.9]{GKuniv} as
\be\label{structUxL3} \mathcal X_{x_0,L_-}= \tilde Q_{x_0,L_-}\cap \mathcal N_{x_0,L_-}, \ee
where $\mathcal N_{x_0,L_-}$, given in \cite[Lemma\ 6.11]{GKuniv}, is a countable intersection of events $\mathcal N^{(E)}_{x_0,L_+,L_-}$. The latter consist of configurations $\omega$ for which, outside a certain set that contains bad boxes, one can find a covering the annulus $\Lambda_{L_+}\setminus \Lambda_{L_-}(x_0)$ by good boxes of smaller scale.
On the other hand, the set $\tilde Q_{x_0,L_-}$, as defined in \cite[Lemma\ 6.10]{GKuniv} and \cite[Eq.\ (6.47)]{GKuniv} is given by
\be\label{structUxL4}\tilde Q_{x_0,L_-}= \mathcal T_{x_0,\hat L} \cap \mathcal Z_{x_0,\hat L}, \ee
where for a scale $\hat L$ smaller than $L$, $\mathcal T_{x_0,\hat L}$ is described after Equation \eqref{structUxL} above as a percolation event. The set $\mathcal Z_{x_0,\hat L}$ is defined in a similar way, but  involving good boxes at a much smaller scale than in $\mathcal T_{x_0,\hat L}$.

In the case of H\"older continuous random variables, we see from \eqref{structUxL} that the set $\cU_{x_0,L_k}$ is closed if both events $\cT_{x_0,L_-}$ and $\widehat\cM_{x_0,L}$ are closed. In the case of Bernoulli random variables, it can be seen from \eqref{structUxL2},\eqref{structUxL3} and \eqref{structUxL4} that  $\cU_{x_0,L_k}$ is closed if the events $\mathcal T_{x_0,\hat L}, \mathcal Z_{x_0,\hat L}$, $\cM_{x_0,L}$ and $ \mathcal N_{x_0,L_-}$ are closed.

Since all the above sets rely on the property of a box being good, closedness of $\cU_{x_0,L_k}$ follows
from closedness of the event that a box $\Lambda_{L}$ is $(\omega,E,m,\zeta)$-good for any given fixed $E\in\RR$ and parameters $m$ and $\zeta$. The latter can be seen by an application of the resolvent identity. Alternatively,
we remark that, for any given $E\in\RR$, the inclusion map $i_{E}: \Omega \rightarrow \RR\times \Omega, \omega\mapsto
(E,\omega)$, is continuous. Therefore, closedness of the event that $\Lambda_{L}$ is $(\omega,E,m,\zeta)$-good for given $E$ follows from closedness of the set
\be\label{eom}
	\big\{ (E,\omega)\in\RR\times\Omega : \L_L \mbox{ is } (\omega,E,m,\zeta)\mbox{-good} \big\}
\ee
in $\RR\times\Omega$ with the product topology.
For this result, in turn, we refer to \cite[Remark~3.3]{GKuniv}.
\end{proof}


\section{Decay of generalised eigenfunctions}

This appendix is not needed to obtain the results in this paper. But it is instructive to recall
how the sets $\mathcal{U}_{x_{0},L}$ are used in \cite{GKuniv} to establish decay of generalised eigenfunctions.
In particular, this allows to shed light on the additional challenges for Bernoulli random variables and
explains the more involved structure of $\mathcal{U}_{x_{0},L}$ in this case.

We recall some notation from \cite[Section 5]{GKuniv} : fix $ {\nu} > \frac d 2$. Given $y \in \Rd$,
 $T_y$ denotes the operator on $\cH=\mathrm{L}^2(\mathbb{R}^d)$ given by multiplication by the function
$\RR^{d} \ni x \mapsto T_y(x):= \angles{ x-y}^{\nu}$. We call $\varphi\not=0$ a generalised eigenfunction of $H_{\om}$ with generalised
 eigenvalue ${E}$ if and only if $T^{-1}\varphi\in \Lp{\Rd}$ and
\be \label{eigb}
 \langle H_{\om} \varphi,\psi \rangle = {E} \langle \varphi,\psi \rangle
\quad  \mbox{for all} \quad\psi \in C_c^\infty(\Rd),
\ee
where $C_c^\infty(\Rd)$ denotes the space of arbitrarily often differentiable, compactly supported functions on $\Rd$. Given $E\in\RR$, $\Theta_\om(E)$ denotes the set of generalised eigenfunctions  of $H_\om$ associated to $E$.

 \begin{defn}\label{Ws}
 Given $\om\in \Omega$, $x \in \Rd$ and ${E} \in \RR$, we measure the concentration of generalised eigenfunctions associated to $E$ around the point $x$ by
 \begin{equation}\label{Wx}
W_{\om,x}({E}):=\begin{cases}
{\sup_{\psi \in \Theta_{\om}({E})} }
\ \frac {\| \Chi_{x}\psi \|}
{\|T_{x}^{-1}\psi \|}&
 \text{if $\Theta_{\om}({E})\not=\emptyset$}\\0 & \text{otherwise}
 \end{cases},
\end{equation}
and at an annulus around $x$ at scale $L\geq 1$ by
  \begin{equation} \label{WxL}
W_{\om,x,L}({E}):=\begin{cases} {\sup_{\psi \in \Theta_{\om}({E})} }
\frac {\norm{ \Chi_{x,L}\psi}}
{\norm{T_{x}^{-1}\psi }}&
 \text{if $\Theta_{\om}({E})\not=\emptyset$}\\0 & \text{otherwise}
.\end{cases},
\end{equation}
where $\Chi_{x,L}$ is the characteristic function of the annulus $\L_{2L+1}(x)\setminus\L_{L-1}(x)$.
\end{defn}

For simplicity, we state the following result in the particular case of H\"older continuous random variables, whose proof is given in \cite[Rem.\ 6.14]{GKuniv}:

\begin{thm}[Thm.\ 6.1 \cite{GKuniv}]\label{UxL}
 Let $H_\om$ be the Delone--Anderson Hamiltonian on $\Lp{\Rd}$ and assume the random variables are H\"older continuous. Given $p>3$ and $\zeta\in(0,1)$, consider a bounded open interval $\cI$ such that there is $m>0$ and a scale $\cL$ such that all scales  $L\geq \cL$ are $(E,m,\zeta,p)$-good for  all $E\in\cI$.
 Then, given a sufficiently large scale $L$ and any $x_0\in\Rd$, there exists an event $\cU_{x_0,L}$ with the following properties:
 \begin{itemize}
  \item[i)] We have
  \be\label{u1} \cU_{x_0,L}\in \cF_{\L_{L_+}(x_0)} \,\, \mbox{ and }\,\, \P\pa{\cU_{x_0,L}}\geq 1-L^{\frac{p-3}{3}d}. \ee
  \item[ii)] If $\om\in \cU_{x_0,L}$ and $E\in \cI$, whenever
  \be\label{w1} W_{\om,x_0}(E)> e^{-\frac{m}{30}\sqrt{L}},  \ee
  we have that
  \be\label{w2} W_{\om,x_0,L}(E) \leq e^{-\frac{m}{1000}\sqrt{L}}. \ee
  \item[iii)]  If $\om\in \cU_{x_0,L}$, we have
   \be\label{w3} W_{\om,x_0}(E) W_{\om,x_0,L}(E)\leq e^{-\frac{m}{1000}\sqrt{L}}\quad \mbox{for all } E\in\cI.\ee
 \end{itemize}
\end{thm}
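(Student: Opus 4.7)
The proof is an adaptation of \cite[Thm.\ 6.1]{GKuniv} to our Delone--Anderson setting. The plan is to construct $\cU_{x_0,L}$ as the event on which a collection of boxes at scale $\sqrt{L}$ that cover the annulus $\L_{L_+}(x_0)\setminus \L_{L_-}(x_0)$ are simultaneously good for a sufficiently dense grid of energies in $\cI$, and then to derive the eigenfunction decay bounds (ii) and (iii) via a geometric resolvent identity together with a chaining argument across these good boxes. The key observation that makes the adaptation go through is that our Definition~\ref{def:gscale} of a good scale is \emph{uniform} in the centre of the box, so the fact that the background potential $V_D$ is only Delone rather than periodic does not play any role once the MSA of Section~\ref{s:dl} has been carried out.

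Concretely, I would first cover the annulus $A:=\L_{L_+}(x_0)\setminus\L_{L_-}(x_0)$ by $N \lesssim L^{d/2}$ boxes $\L_{\sqrt{L}}(y_i)$ and pick a finite energy net $\mathcal{E}_L \subset \cI$ of spacing $\eta_L := e^{-c\sqrt{L}}$, where $c$ depends on the rate $m$ and on $\zeta$. I would then set
\be
\cU_{x_0,L} := \bigcap_{i=1}^{N} \bigcap_{E_j \in \mathcal{E}_L} \bigl\{\om \in \Omega : \L_{\sqrt{L}}(y_i) \text{ is } (\om,E_j,m,\zeta)\text{-good}\bigr\}.
\ee
This event is $\cF_{\L_{L_+}(x_0)}$-measurable by construction, and since each box is good with probability at least $1-L^{-pd/2}$ by hypothesis, a union bound gives $\P(\cU_{x_0,L}) \geq 1 - N\,|\mathcal{E}_L|\,L^{-pd/2} \geq 1 - L^{-(p-3)d/3}$ upon tuning the grid spacing. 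To extend goodness from the net $\mathcal{E}_L$ to all of $\cI$, I would use the resolvent identity together with $\|R_{\om,y_i,\sqrt{L}}(E_j)\| \leq e^{L^{(1-\zeta)/2}}$ and $|E-E_j|\leq\eta_L$, which preserves the decay estimates at a slightly degraded rate $m' \geq m/2$ for all $E\in\cI$.

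With the event in hand, fix $\om\in\cU_{x_0,L}$, $E\in\cI$ and $\psi\in\Theta_\om(E)$. For each good box $\L_{\sqrt{L}}(y_i)$ in the cover, the identity $\chi_{y_i}\psi = R_{\om,y_i,\sqrt{L}}(E)\bigl[H_{\om,y_i,\sqrt{L}},\chi_{y_i}\bigr]\psi$, with a smooth cutoff $\chi_{y_i}$ supported in $\L_{\sqrt{L}}(y_i)$ and equal to one on $\L_{\sqrt{L}/2}(y_i)$, transfers the off-diagonal resolvent decay to $\psi$. This gives a local bound of the form $\|\Chi_{\text{centre}}\psi\| \leq C\,e^{-m'\sqrt{L}/4}\,\|\Chi_{\text{boundary}}\psi\|$ on each box. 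Chaining these local estimates through a sequence of $\mathcal{O}(\sqrt{L})$ overlapping good boxes connecting a neighbourhood of $x_0$ to the annulus $\L_{2L+1}(x_0)\setminus\L_{L-1}(x_0)$ yields $W_{\om,x_0,L}(E)\leq e^{-m\sqrt{L}/1000}$ whenever $W_{\om,x_0}(E)>e^{-m\sqrt{L}/30}$, which is (ii). Property (iii) then follows by dichotomising on the size of $W_{\om,x_0}(E)$: in the large regime apply (ii); in the small regime combine $W_{\om,x_0}(E)\leq e^{-m\sqrt{L}/30}$ with the trivial polynomial bound $W_{\om,x_0,L}(E)\leq C\langle L\rangle^{\nu}$ coming from $T_{x_0}^{-1}\psi \in \Lp{\Rd}$.

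The main technical obstacle is the bookkeeping in the chaining argument: one has to control the $\mathcal{O}(L^{c})$ polynomial factors picked up at each step from commutator norms and generalised eigenfunction growth, and verify that the exponential factor $e^{-m'\sqrt{L}/4}$ gained on each good box dominates these prefactors after the $\mathcal{O}(\sqrt{L})$ iterations needed to traverse the annulus. This is done exactly as in \cite[Sect.~6]{GKuniv}; because all underlying estimates are uniform in $x_0$ by our definition of good scale, the aperiodicity of $D$ does not enter the argument, and no Wegner-type input beyond what is contained in our good-scale hypothesis is needed in the H\"older-continuous case.
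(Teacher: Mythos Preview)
Your construction of $\cU_{x_0,L}$ via a fixed energy net has a genuine gap in the probability estimate. To transfer goodness from a net point $E_j$ to a nearby energy $E$ via the resolvent identity you need $|E-E_j|\cdot\|R_{\om,y_i,\sqrt{L}}(E_j)\|<1$, and since the a priori bound on the resolvent at scale $\sqrt{L}$ is only $e^{L^{(1-\zeta)/2}}$, the spacing $\eta_L$ must be sub-exponentially small in $L$ (as you yourself write, $\eta_L=e^{-c\sqrt{L}}$). But then $|\mathcal{E}_L|\sim e^{c\sqrt{L}}$, and the union bound
\[
N\,|\mathcal{E}_L|\,L^{-pd/2}\;\sim\; L^{d/2}\,e^{c\sqrt{L}}\,L^{-pd/2}
\]
diverges as $L\to\infty$ rather than being bounded by $L^{-(p-3)d/3}$. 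No ``tuning of the grid spacing'' can repair this: a polynomial grid is too coarse to propagate goodness across the gap, and a sub-polynomial grid destroys the probability bound. This is precisely the obstruction that forces \cite{GKuniv} to use a different construction.

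The paper itself does not reprove Theorem~\ref{UxL}; it quotes it from \cite[Rem.~6.14]{GKuniv}. But the structure of the correct event is recorded in the proof of Lemma~\ref{l:closed}: in the H\"older case one has $\cU_{x_0,L}=\cT_{x_0,L_-}\cap\widehat\cM_{x_0,L}$, where $\widehat\cM_{x_0,L}$ demands that the annular $\sqrt{L}$-boxes be good \emph{at the (random) eigenvalues of $H_{\omega,x_0,L_-}$ lying in $\cI$}. There are only $\mathcal{O}(L^d)$ such eigenvalues, so the union bound survives. The hypothesis $W_{\om,x_0}(E)>e^{-m\sqrt{L}/30}$ then forces $E$ to lie exponentially close to one of these eigenvalues, which is how goodness at $E$ is recovered without an energy net. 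The second event $\cT_{x_0,L_-}$ is a percolation-type event (clusters of bad $\sqrt{L}$-boxes do not percolate in $\L_{L_-}(x_0)$) that is needed to make this eigenvalue comparison quantitative. Your sketch misses both ingredients; in particular, chaining ``from a neighbourhood of $x_0$ to the annulus'' through boxes you have not required to be good is also not justified.
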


 Theorem \ref{UxL} applies when $\frac{p-3}{3}>0$, which is the case for the Anderson model with H\"older continuous random variables, see \cite[Remark 1.7]{GKuniv}. In the case of Bernoulli random variables, however, the probability estimates for the multiscale analysis are weaker compared to the case of regular random variables.
Moreover, the use of unique continuation principles forces the restriction $p\in (\frac{1}{3},\frac{3}{8})$, see \cite[Remark 4.8]{GKuniv}. In this case a more involved proof is needed to obtain the analog of Theorem \ref{UxL}, which includes several scales to take care of the weaker estimates in order to reach conclusions (i), (ii) and (iii) above. Namely, in the case $0<p<3$ the analog of Theorem \ref{UxL} is given in \cite[Thm.\ 6.1]{GKuniv}, where the interval $\mathcal I$ is replaced by a slightly smaller interval $\mathcal I_L$, and the r.h.s. of Equations \eqref{u1},\eqref{w1},\eqref{w2} is replaced by
\be \P\pa{\cU_{x_0,L}}\geq 1-L^{\tilde p d},\quad \tilde p\in (0,p), \ee
\be  W_{\om,x_0}(E)> e^{-M{L^\nu}}\quad \mbox{for an appropriate }\nu\in(0,1), \mbox{and }M>0,   \ee
 \be W_{\om,x_0,L}(E) \leq e^{-M{L}}. \ee
 Consequently,  $W_{\om,x_0}(E) W_{\om,x_0,L}(E)\leq e^{-\frac{M}{2}{L^\nu}}$ for all energies $E$ in a set slightly smaller than $\cI$.
The extra steps needed to treat the Bernoulli case explain the more involved structure of the event $\cU_{x_0,L}$ in  \eqref{structUxL2}.

\end{appendix}

%
\section*{Acknowledgements}

The authors thank I.\ Veseli\'c for pointing out the existence of a one-dimensional
quantitative unique continuation principle in references \cite{KV02,V96}, which leads to Theorem \ref{t:ucp1}.
The authors are grateful to F.\ Germinet for many enjoyable discussions, helpful remarks and for his support
throughout the development of this work.
CRM received financial support from the Hausdorff Center for Mathematics (Bonn) and CRC 1060.

%


\begin{thebibliography}{BdMNSS}
\frenchspacing

\bibitem[AW]{AW} Aizenman, M., Warzel, S.:
	\emph{Random operators: disorder effects on quantum spectra and dynamics},
	Graduate Studies in Mathematics 168, American Mathematical Society, Providence, RI, 2015.

\bibitem[B]{Bel03} Bellissard, J.: Noncommutative geometry of aperiodic solids. In
	Cardona, A., Paycha, S., Ocampo, H. (Eds.):
	\emph{Geometric and topological methods for quantum field theory},
	World Scientific Publishing, River Edge, NJ, 2003, pp.\ 86--156.

\bibitem[BIST]{BIST89} Bellissard, J., Iochum, B., Scoppola, E., Testard, D.:
	Spectral properties of one-dimensional quasi-crystals,
	\emph{Commun. Math. Phys.} 125, 527--543 (1989).



\bibitem[BoK]{BK} Bourgain, J., Kenig, C.:
	On localization in the continuous Anderson--Bernoulli model in higher dimension,
	\emph{Invent. Math.} 161, 389--426 (2005).

\bibitem[BouLS]{BdMLS11} Boutet de Monvel, A., Lenz, D., Stollmann, P.:
	An uncertainty principle, {W}egner estimates and localization near fluctuation boundaries,
    \emph{Math. Z.} 269, 663--670 (2011).

\bibitem[BouNSS]{BdMNSS06} Boutet de Monvel, A., Naboko, S., Stollmann, P., Stolz, G.:
	Localization near fluctuation boundaries via fractional moments and applications,
    \emph{J. Anal. Math.} 100, 83--116 (2006).
	
\bibitem[CKM]{CKM} Carmona, R., Klein, A., Martinelli, F.:
 	Anderson localization for Bernoulli and other singular potentials,
 \emph{Commun. Math. Phys.} 108, 41--66 (1987).

\bibitem[CL]{CaLa90}  Carmona, R, Lacroix, J.:
 	\emph{Spectral theory of random Schr\"odinger operators},
	Birkh\"auser, Boston, 1990.

\bibitem[CoH]{CH1} Combes,  J.M.,   Hislop, P.D.:
	Localization for some continuous, random Hamiltonians in $d$-dimensions,
	\emph{J. Funct. Anal.} {124}, 149--180 (1994).

\bibitem[DSS]{DSS02} Damanik, D., Sims, R., Stolz, G.:
	Localization for one-dimensional, continuum, Bernoulli--Anderson models,
	\emph{Duke Math. J.} 114, 59--100 (2002).

\bibitem[DaKW]{DKW17} Davey, B., Kenig, C., Wang, J.-N.:
	The Landis conjecture for variable coefficient second-order elliptic PDEs,
	\emph{Trans. Amer. Math. Soc.} 369, 8209--8237 (2017).

\bibitem[vDrK]{vDK89} von Dreifus, H., Klein, A.:
	A new proof of localization in the Anderson tight binding  model,
	\emph{Commun.  Math.  Phys.} 124, 285--299 (1989).

\bibitem[EK]{EK} Elgart, A., Klein, A.:
	Ground state energy of trimmed discrete Schr\"odinger operators and localization for trimmed Anderson models,
	\emph{J. Spectr. Theory}  4, 391--413 (2014).

\bibitem[ES]{ES} Elgart, A., Sodin, S.:
	The trimmed Anderson model at strong disorder: localization and its breakup,
	\emph{J. Spectr. Theory} 7, 87--110 (2017).

\bibitem[FS]{FS}  Fr\"ohlich, J., Spencer, T.:
	Absence of diffusion in the Anderson tight binding model for large disorder or low energy,
	\emph{Commun. Math. Phys.} 88, 151--184 (1983).

\bibitem[FuK]{FuKr88} Fujiwara, T., Kraj\v{c}{\'i}, M.:
	Strictly localized eigenstates on a three-dimensional Penrose lattice,
   \emph{Phys. Rev. B} 38, 12903--12907 (1988).

\bibitem[G]{G} Germinet, F.:
	Recent advances about localization in continuum random Schr\"odinger operators with an extension
	to underlying Delone sets. In
	Beltita, I., Nenciu, G., Purice, R. (Eds.):
	\emph{Mathematical results in quantum mechanics},
	World Scientific Publishing, Hackensack, NJ, 2008, pp. 79--96.

\bibitem[GK1]{GK1}  Germinet, F., Klein, A.:
	Bootstrap multiscale analysis and localization in random media,
	\emph{Commun. Math. Phys.} {222}, 415--448 (2001).

\bibitem[GK2]{GK2} Germinet, F., Klein, A.:
	Explicit finite volume criteria for localization in continuous random media and applications,
	\emph{Geom. Funct. Anal.} 13, 1201--1238 (2003).

\bibitem[GK3]{GKpams} Germinet, F., Klein, A.:
	Operator kernel estimates for functions of generalized Schr\"odinger operators,
	\emph{Proc. Amer. Math. Soc.} 131, 911--920 (2003).


\bibitem[GK4]{GKuniv} Germinet, F., Klein, A.:
	A comprehensive proof of localization for continuous {Anderson} models with singular random potentials,
	\emph{J. Eur. Math. Soc. (JEMS)} 15, 55--143 (2013).

\bibitem[GMRM]{GMRM} Germinet, F., M{\"u}ller, P., Rojas-Molina, C.:
	Ergodicity and dynamical localization for Delone--Anderson operators,
	\emph{Rev. Math. Phys.} 27, 1550020, 1--36   (2015).

\bibitem[GiT]{GT} Gilbarg, D., Trudinger, N. S.:
	\emph{Elliptic  partial  differential  equations  of  second  order},
	Springer, Berlin, 1983.

\bibitem[GrHV]{GHV08} Gruber, M.J., Helm, M.,  Veseli\'c, I.:
	Optimal Wegner estimates for random Schr\"odinger operators on metric graphs.
	In Exner, P., Keating, J.P., Kuchment, P., Sunada, T., Teplyaev, A. (Eds.):
	\emph{Analysis on graphs and its applications}, Proc. Symp. Pure Math. 77,
	American Mathematical Society, Providence, RI, 2008, pp. 409--422.

\bibitem[HV]{HV07} Helm, M., Veseli\'c, I.:
	Linear Wegner estimate for alloy-type Schr\"odinger operators on metric graphs,
	\emph{J. Math. Phys.} 48, 092107, 1--7 (2007).

\bibitem[KH]{KaHa} Katok, A., Hasselblatt, B.:
	\emph{Introduction to the modern theory of dynamical systems},
	Cambridge University Press, Cambridge, 1995.

\bibitem[Ki]{K} Kirsch, W.:
	An invitation to random Schr\"odinger operators. In
	\emph{Random Schr\"odinger Operators},
	Panor. Synth\`eses, vol. 25, Soc. Math. France, Paris, 2008, pp. 1–119.

\bibitem[KiK]{KK} Kirsch, W., Krishna, M.:
	Spectral statistics for Anderson models with sporadic potentials,
	\emph{J. Spectr. Theory } 10, 581--597 (2020).

\bibitem[KiV]{KV02} Kirsch, W., Veseli\'c, I.:
	Existence of the density of states for one-dimensional alloy-type potentials with small support,
	\emph{Contemp. Math.} 307, 171--176 (2002).

\bibitem[KlLS1]{KLS03a} Klassert, S., Lenz, D., Stollmann, P.:
	Delone dynamical systems: ergodic features and applications. In
	Trebin, H.-R. (Ed.):
	\emph{Quasicrystals},
	Wiley-VCH, Weinheim, 2003, pp. 172--187.

\bibitem[KlLS2]{KLS03b} Klassert, S., Lenz, D., Stollmann, P.:
	Discontinuities of the integrated density of states for random operators on Delone sets,
	\emph{Commun. Math. Phys.} 241, 235--243 (2003).

\bibitem[KlLS3]{KLS11} Klassert, S., Lenz, D., Stollmann, P.:
	Delone measures of finite local complexity and applications to spectral theory of one-dimensional
	continuum models of quasicrystals,
	\emph{Discrete Contin. Dyn. Syst.} 4, 1553--1571 (2011).

\bibitem[KleLSp]{KLS90} Klein, A., Lacroix, L., Speis, A.:
	Localization for the Anderson model on a strip with singular potentials,
  \emph{J. Funct. Anal.} 94, 135--155 (1990).

\bibitem[KloLNS]{KLNS10} Klopp, F., Loss, M., Nakamura, S., Stolz, G.:
	Localization for the random displacement model,
	\emph{Duke Math. J.} 161, 587--621 (2012).
	
\bibitem[KoS]{KoSu86} Kohmoto, M., Sutherland, B.:
	Electronic states on a Penrose lattice,
	\emph{Phys. Rev. Lett.} 56, 2740--2743 (1986).

\bibitem[KuV]{KuV} Kuchment, P., Vainberg, B.:
	On absence of embedded eigenvalues for Schrö\-dinger operators with perturbed periodic potentials,
 	\emph{Commun. Partial Differential Equations} 25, 1809--1826 (2000).

\bibitem[LS1]{LS02} Lenz, D., Stollmann, P.:
	Quasicrystals, aperiodic order and groupoid von {Neumann} algebras,
	\emph{C. R. Acad. Sci. Paris, Ser. I} 334, 1131--1136 (2002).

\bibitem[LS2]{LS03} Lenz, D., Stollmann, P.:
	Delone dynamical systems and associated random operators. In
	Combes, J.-M., Cuntz, J., Elliott, G.A., Nenciu, G., Siedentop, H., Str\v{a}til\v{a}, S. (Eds.):
  \emph{Operator algebras and mathematical physics},
  Theta, Bucharest, 2003, pp. 267--285.

\bibitem[LS3]{LS05} Lenz, D., Stollmann, P.:
	An ergodic theorem for Delone dynamical systems and existence of the integrated density of states,
  \emph{J. Anal. Math.} 97, 1--24 (2005).

\bibitem[LS4]{LS06} Lenz, D., Stollmann, P.:
 	Generic sets in spaces of measures and generic singular continuous spectrum for Delone Hamiltonians,
  \emph{Duke Math. J.} 131, 203--217 (2006).

\bibitem[LV]{LV09} Lenz, D., Veseli\'c, I.:
	Hamiltonians on discrete structures: jumps of the integrated density of states and uniform convergence,
	\emph{Math. Z.} 263, 813--835 (2009).

\bibitem[Li]{Liu} Liu, W.:
	Irreducibility of the Fermi variety for discrete periodic Schr\"odinger operators and embedded eigenvalues,
	preprint \href{https://arxiv.org/abs/2006.04733}{arXiv:2006.04733} (2020).

\bibitem[LiO]{LiuOng} Liu, W., Ong, D. C.:
	Sharp spectral transition for eigenvalues embedded into the spectral bands of perturbed periodic operators,
	\emph{J. Anal. Math.}, 141, 625--661 (2020).

\bibitem[MR]{MR} M\"uller, P., Richard, C.:
	Ergodic properties of randomly coloured point sets,
	\emph{Canad. J. Math.} 65, 349--402 (2013).


\bibitem[PF]{PF92} {Pastur, L.}, Figotin, A.:
	\emph{Spectra of random and almost-periodic operators},
  Springer, Berlin, 1992.

\bibitem[RS]{RSI} Reed, M., Simon, B.:
	\emph{Methods of modern mathematical physics III: scattering theory},
	Academic Press, New York, 1979.

\bibitem[RoM1]{RMthesis} Rojas-Molina, C.:
	\emph{Etude math\'ematique des propri\'et\'es de transport des op\'era\-teurs de Schr\"odinger al\'eatoires
		avec structure quasi-cristalline},
	PhD thesis, Universit\'e de Cergy-Pontoise, 2012.
	Available at \href{http://www.crojasmolina.com}{www.crojasmolina.com}.

\bibitem[RoM2]{RM12} Rojas-Molina, C.:
	Characterization of the Anderson metal-insulator transition for non ergodic operators and application,
  \emph{Ann. Henri Poincar\'e} 13, 1575--1611 (2012).

\bibitem[RoM3]{RM13} Rojas-Molina, C.:
	The Anderson model with missing sites,
	\emph{Oper. Matrices} 8, 287--299 (2014).

\bibitem[RoM4]{RM20} Rojas-Molina, C.:
	Random Schr\"odinger operators and Anderson localization in aperiodic media,
	\emph{Rev. Math. Phys.} 33,  2060010, 1--11 (2021). 

\bibitem[RoMV]{RMV12} Rojas-Molina, C., Veseli\'c, I.:
	Scale-free unique continuation estimates and applications to random Schr\"odinger operators,
	\emph{Commun. Math. Phys.} 320, 245--274 (2013).

\bibitem[ST]{SeelmanTaufer19} Seelmann, A., T\"aufer, M.:
	Band edge localization beyond regular Floquet eigenvalues,
	\emph{Ann. Henri Poincar\'e} 21, 2151--2166 (2020).

\bibitem[ShBGC]{Sch84} Shechtman, D., Blech, I., Gratias, D., Cahn, J. W.:
	Metallic phase with long-range orientational order and no translation symmetry,
	\emph{Phys. Rev. Lett.} 53, 1951--1953 (1984).

\bibitem[Si]{Si} Simon, B.:
	Operators with singular continuous spectrum: I. General operators,
	\emph{Ann. Math. (2)} 141, 131--145 (1995).

\bibitem[St]{S} Stollmann, P.:
	\emph{Caught by disorder: bound states in random media},
	Birkh{\"a}user, Boston, 2001.

\bibitem[S\"u]{Su89} S{\"u}to, A.:
	Singular continuous spectrum on a Cantor set of zero Lebesgue measure for the Fibonacci Hamiltonian,
	\emph{J. Stat. Phys.} 56, 525--531 (1989).

\bibitem[V]{V96} Veseli\'c, I.:
	\emph{Lokalisierung bei zuf\"allig gest\"orten periodischen Schr\"odinger\-opera\-toren in Dimension Eins}, Diplomarbeit, Mathematisches Institut der Ruhr-Universit\"at, Bochum, 1996.



\end{thebibliography}
\end{document}